\documentclass[sigconf,screen,nonacm]{acmart}


\usepackage{amsmath}
\usepackage{amsfonts}
\usepackage{amsthm}
\usepackage{isomath}
\usepackage{graphicx}
\usepackage{color}
\usepackage{tikz}

\usetikzlibrary{positioning,automata}

\newtheorem{definition}{Definition}[section]
\newtheorem*{definition*}{Definition}
\newtheorem{proposition}{Proposition}[section]
\newtheorem{example}{Example}[section]

\newtheorem{theorem}{Theorem}
\newtheorem{lemma}{Lemma}
\newtheorem*{lemma*}{Lemma}

\newtheorem{corollary}{Corollary}[section]

\newcommand{\bra}[1]{\langle #1 \vert}
\newcommand{\ket}[1]{\vert #1 \rangle}

\newcommand{\HH}{\mathcal{H}}
\newcommand{\DD}{\mathcal{D}}

\newcommand{\EE}{\mathcal{E}}

\newcommand{\sem}[1]{[\![ #1 ]\!]}

\newcommand{\NN}{\mathbb{N}}
\newcommand{\RR}{\mathbb{R}}
\newcommand{\CC}{\mathbb{C}}
\newcommand{\ZZ}{\mathbb{Z}}

\newcommand{\guard}{\Box}

\newcommand{\abs}[1]{| #1 |}
\newcommand{\norm}[1]{\lVert #1 \rVert}

\newcommand{\qprogs}{ \textbf{qProgs} }
\newcommand{\qvar}{\textbf{qVar}}

\DeclareMathOperator{\tr}{tr}

\DeclareMathOperator{\ert}{ert}
\DeclareMathOperator{\ERT}{ERT}
\DeclareMathOperator{\Un}{Un}
\DeclareMathOperator{\In}{In}
\DeclareMathOperator{\Sk}{Sk}
\DeclareMathOperator{\Br}{Br}
\DeclareMathOperator{\Wh}{Wh}

\DeclareMathOperator{\diag}{diag}
\newcommand{\skp}{\mathbf{skip}}
\newcommand{\ifb}{\mathbf{if}}
\newcommand{\ife}{\mathbf{fi}}
\newcommand{\while}{\mathbf{while}}
\newcommand{\wdo}{\mathbf{do}}
\newcommand{\wod}{\mathbf{od}}

\newcommand{\oq}{\overline{q}}
\newcommand{\zeromat}{\mathbf{0}}

\copyrightyear{2022}
\acmYear{2022}
\setcopyright{acmcopyright}\acmConference[LICS '22]{37th Annual ACM/IEEE Symposium on Logic in Computer Science (LICS)}{August 2--5, 2022}{Haifa, Israel}
\acmBooktitle{37th Annual ACM/IEEE Symposium on Logic in Computer Science (LICS) (LICS '22), August 2--5, 2022, Haifa, Israel}
\acmPrice{15.00}
\acmDOI{10.1145/3531130.3533327}
\acmISBN{978-1-4503-9351-5/22/08}

\begin{document}

\title{Quantum Weakest Preconditions for  Reasoning about  Expected Runtimes  of Quantum Programs (Extended Version)}
   
  \author{Junyi Liu}
  \affiliation{
  \institution{Institute of Software, CAS}
  \city{Beijing}
  \country{China}
  }
  \affiliation{
  \institution{University of Chinese Academy of Sciences}
  \city{Beijing}
  \country{China}
  }
  \email{liujy@ios.ac.cn}
  
  \author{Li Zhou}
  \affiliation{
  \institution{Max Planck Institute for Security and Privacy (MPI-SP)}
  \city{Bochum}
  \country{Germany}
  }
  \email{zhou31416@gmail.com}
  
  \author{Gilles Barthe}
  \affiliation{
  \institution{Max Planck Institute for Security and Privacy (MPI-SP)}
  \city{Bochum}
  \country{Germany}
  }
  \affiliation{
  \institution{IMDEA Software Institute}
  \city{Madrid}
  \country{Spain}
  }
  \email{gjbarthe@gmail.com}
  
  \author{Mingsheng Ying}
  \affiliation{
  \institution{Institute of Software, CAS}
  \city{Beijing}
  \country{China}
  }
  \affiliation{
  \institution{Tsinghua University}
  \city{Beijing}
  \country{China}
  }
  \email{yingms@ios.ac.cn}

 \begin{abstract} 
   We study \textit{expected runtimes} for quantum programs. Inspired by recent work on probabilistic programs, we first define \textit{expected runtime} as a generalisation of \textit{quantum weakest precondition}. Then, we show that 
   the expected runtime of a quantum program can be represented as the expectation of an \textit{observable} (in physics). A method for computing the expected runtimes of quantum programs in finite-dimensional state spaces is developed. Several examples are provided as applications of this method, including computing the expected runtime of quantum Bernoulli Factory -- a quantum algorithm for generating random numbers. In particular, using our new method, an open problem of computing the expected runtime of quantum random walks introduced by Ambainis et al. (\textit{STOC} 2001) is solved.
\end{abstract}

\begin{CCSXML}
<ccs2012>
   <concept>
       <concept_id>10003752.10010124.10010138.10010141</concept_id>
       <concept_desc>Theory of computation~Pre- and post-conditions</concept_desc>
       <concept_significance>500</concept_significance>
       </concept>
   <concept>
       <concept_id>10003752.10010124.10010138.10010143</concept_id>
       <concept_desc>Theory of computation~Program analysis</concept_desc>
       <concept_significance>500</concept_significance>
       </concept>
   <concept>
       <concept_id>10003752.10010124.10010131.10010133</concept_id>
       <concept_desc>Theory of computation~Denotational semantics</concept_desc>
       <concept_significance>500</concept_significance>
       </concept>
 </ccs2012>
\end{CCSXML}

\ccsdesc[500]{Theory of computation~Pre- and post-conditions}
\ccsdesc[500]{Theory of computation~Program analysis}
\ccsdesc[500]{Theory of computation~Denotational semantics}

\keywords{Quantum programming, quantum weakest precondition, expected runtime, physical observable, termination, quantum random walk}  

\maketitle

\section{Introduction}

Over the last few years, one has seen exciting progress in quantum computing hardware, such as Google's 53-qubit Sycamore \cite{Google19}, IBM Q \cite{IBM} and USTC's 62-qubit Zu Chongzhi \cite{USTC22}. This progress further motivates one to expect that the Noisy Intermediate-Scale Quantum (NISQ) technology \cite{preskill2018quantum} will find some practical applications in the next 5-10 years.

\begin{sloppypar}
\textbf{Quantum Resource and Runtime Estimation}: Resource and runtime estimation will be particularly important in programming NISQ devices because they are too resource-constrained to support error correction, and running long sequences of operations on them is impractical. On the other hand, resource and runtime estimation may help us understand the separation between the quantum algorithms that can be run on NISQ devices and those that must wait for a larger quantum computer. Indeed, quantum resource and runtime estimation problem has already attracted the attention of quantum computing researchers; for example, 
resource and timing analysis was incorporated into quantum compilation framework ScaffCC \cite{Ali15}; 
an estimation of required qubits and quantum gates for Shor's algorithm to attack ECC (Elliptic Curve Cryptography) was given in \cite{roetteler2017quantum}. 
But current research in this area has been carried out mainly in a manner of case by case. Certainly, a more principled approach to this problem would be desirable.  
\end{sloppypar}

\textbf{Techniques in Probabilistic Programming}: Recently, a series of powerful techniques for resource and runtime estimation of probabilistic programs have been proposed (see for example \cite{celiku2005compositional,mciver2005,chakarov2013,brazdil2015,olmedo2016,chatterjee2017,batz2018}).
In particular, inspired by Nielson's Hoare-like proof system for reasoning about the running times of non-probabilistic programs \cite{nielson1987hoare}, a weakest precondition calculus was developed in \cite{kaminski2016,kaminski2018} for analysing the expected runtimes of randomised algorithms and probabilistic programs. It has been successfully applied to estimate the expected runtimes of several interesting example probabilistic programs, including the coupon collector's problem, one-dimensional random walk and randomised binary search.  
Furthermore, an analysis that can derive symbolic bound on the expected resource consumption of probabilistic programs was presented in \cite{Ngo2018} by effectively combining the weakest precondition reasoning for probabilistic programs 
\cite{vcerny2015segment,kaminski2016,dehesa2017verifying,vasconcelos2015type} with the automatic amortised resource analysis (AARA) for non-probabilistic programs \cite{carbonneaux2017automated,celiku2005compositional,hoffmann2015type,hofmann2015multivariate}. The strength of this approach is that it can be fully automated by reducing the bound analysis to LP (Linear Program) solving, and its effectiveness was demonstrated by automatic analysis of a large number of challenging randomised algorithms and probabilistic programs.

\textbf{From Quantum Weakest Preconditions to Quantum Expected Runtimes}: The well-known statistical nature of quantum systems immediately suggests the possibility of extending the techniques discussed above for solving the corresponding problems in quantum computing. Fortunately, a foundation for this research was already laid in the seminal work \cite{d2006quantum} where the notion of weakest quantum precondition was defined. The aim of this paper is to extend the line of research initiated by \cite{d2006quantum} and to develop a weakest precondition calculus for reasoning about the expected runtimes of quantum programs. Our basic ideas in the extension from quantum weakest precondition to quantum expected runtimes are largely inspired by  
the results achieved in the studies of expected runtimes of probabilistic programs \cite{kaminski2016,kaminski2018}. But several challenges exist in the transition from the probabilistic case to the quantum case:
\begin{itemize}\item \textit{Conceptual challenge}: The expected runtime $\mathit{\ert}(S)$ of a probabilistic program $S$ is defined in \cite{kaminski2016,kaminski2018} as a transformer of runtime functions, which are modelled as mappings from the state space to nonnegative real numbers or $\infty$ that are 
linear on the probabilistic combination of states. 
  Whenever generalised to the quantum case, 
  expected runtime functions are no longer linear on the superposition of pure states due to the Born rule in quantum mechanics. Thus, we need to find 
an appropriate interpretation for them as observables in quantum physics (or mathematically as Hermitian operators) so that they can be effectively manipulated and computed. 
  We resolve this issue using several mathematical techniques developed in the previous work on quantum weakest preconditions \cite{d2006quantum} and quantum Hoare logic \cite{Ying11}.  

\item \textit{Computational challenge}: Although quantum gates are modelled as unitary operators, quantum measurements are used in the guards of conditional statements and loops. Thus, super-operators are inevitably involved in their denotational semantics and computing their weakest preconditions. However, super-operators are (completely positive) mappings from (linear) operators to themselves and are much harder to manipulate than ordinary operators. Fortunately, some matrix representations of super-operators in Hilbert spaces with finite dimensions have been developed in quantum physics. We are able to generalise this technique (as tailored in \cite{Ying13}; also see \cite{Ying16}, Section 5.1.2)
into our weakest precondition calculus for expected runtimes of quantum programs.   
\end{itemize}

\textbf{Contribution of this Paper}: This paper presents a weakest  precondition approach for reasoning about the expected runtimes of  quantum  programs by overcoming the above two challenges. More concretely, we achieve the following contributions: 
\begin{enumerate}
  \item By generalising the notion of quantum weakest precondition \cite{d2006quantum}, we formally define the expected runtimes 
    of quantum \textbf{while}-programs. In particular, a quantum  
    observable representation of them is given (Theorem \ref{main_th1}). The significance of this representation is three-fold. Firstly, it provides a physical interpretation of the notion of expected runtime.
    \item Secondly, as a corollary of Theorem \ref{main_th1}, for quantum programs in finite-dimensional Hilbert spaces, we show that
    almost surely terminating on an input state is equivalent to positive 
    almost-sure termination (i.e. with a finite expected runtime) on the same state (Theorem \ref{main_th2}). 
    This result is less evident than its counterpart for probabilistic programs 
    because the interference of amplitudes in quantum computing makes it hard to 
    track the evolution of states of quantum programs.
  \item Thirdly, an effective method for computing the expected runtimes of quantum programs in finite-dimensional Hilbert spaces can be developed based on Theorem \ref{main_th1}. More explicitly, 
  by combining the observable representation with the matrix representation of super-operators, 
    we develop a method for computing the expected runtimes of quantum programs. This method works both numerically 
    and symbolically, with the numerical computation fully automated.
    It is worth mentioning that our method can deal with the case of infinite 
    execution paths, which was excluded in the method of \cite{Ali15}.
  \item The effectiveness of our method is tested by several case studies, including a key step of quantum Bernoulli factory for random number generation. 
  In particular, we solve an \textit{open problem} of computing the expected runtime of the quantum walk on
    an $n$-circle for any $n$ and an arbitrary initial state. The previously known result about this problem is that the expected runtime is $n$ when starting in a basis state for $n<30$ (see \cite{Ying13} or Section 5.1.3 of \cite{Ying16}). 
     \end{enumerate}
     
\textbf{Related Works}: The research on formal reasoning about expected runtimes of quantum programs was initiated in    \cite{olmedo2019runtime}. The aim of \cite{olmedo2019runtime} is the same as that of this paper, but the technical results of the two papers are very different. After giving a weakest precondition-style definition of expected runtime for quantum programs, \cite{olmedo2019runtime} focused on a case study of analyzing the runtime of (a simplified version of) the BB84 quantum key distribution protocol, and our  contributions 2) - 4) listed above were not considered there. Perhaps, a more essential difference between \cite{olmedo2019runtime} and our work is that     
in \cite{olmedo2019runtime}, quantum programs are treated as a kind of probabilistic programs and their expected runtimes are evaluated by  directly applying the techniques for probabilistic
programs. We argue that such a  generalization does not make use of the unique properties of quantum systems, and thus cannot be used to handle programs that have
quantum-specific behaviours.
To be more specific, let us consider the example quantum walk, where quantum interference happens. Using the definition of \cite{olmedo2019runtime}, a fixed point equation can be derived whose solutions are upper bounds of the expected runtime of the quantum walk, but solving this equation is hard because the quantum walk is treated as a probabilistic program and thus its quantum-specific feature is lost. In contrast, in our approach, a different fixed point equation is derived, and it can be solved by sufficiently exploiting mathematical properties of the involved quantum operations (as super-operators).


\textbf{Organization of the Paper}: 
We start from several working examples in Section \ref{sec_ex_2}, and then the syntax and semantics of quantum programs are reviewed in Section \ref{sec:quantum_while_program}. 
The expected
runtime of quantum programs are defined in Section \ref{sec:definition_of_runtime_obervable}. The observable 
representation is presented in Section \ref{sec-represent}. Based on it, we prove the equivalence of almost surely 
termination and positive almost-sure termination of quantum programs in Section \ref{sec:condition_of_finite_expected_runtime}.
A method for computing the expected runtimes of quantum programs is presented in Section \ref{sec:computing_expected_runtime_observable}. The case studies are given in Section \ref{sec:case_studies}.
For the sake of readability, proofs of our results can be found in the appendices.

\section{Quantum Programs}
\label{sec:quantum_programs}

In this section, we set our stage by reviewing the syntax and semantics of the quantum programs considered in this paper. We assume that the reader is familiar with basic ideas of quantum computing; otherwise, she/he can consult the standard textbook \cite{NC} or the preliminary sections of quantum programming literature, e.g. \cite{SZ00,Selinger04}.

\subsection{Working Examples}
\label{sec_ex_2}

Let us start with several simple examples. They are deliberately chosen as the quantum analogues of some examples considered in \cite{kaminski2016,kaminski2018} so that the reader can observe the similarity and subtle differences between probabilistic and quantum programs.

We first consider the process that keeps tossing a fair coin until the first head occurs.
  This process can be described as a probabilistic program:
  \begin{equation}
   P_\mathit{geo}\equiv \while\ (c = 1) \{c := 0 \oplus_{\frac{1}{2}} c := 1\}
  \end{equation}
  where $0, 1$ are used to indicate the heads and tails, respectively, and $P_1 \oplus_a P_2$ stands for a probabilistic choice that chooses to
  execute $P_1$ with probability $a$ and to execute $P_2$ with probability
  $1 - a$. A quantum analogue of this program is given as the following:
\begin{example}[Quantum Geometric Distribution]
  \label{ex-coin} A quantum coin can be modelled as a quantum bit (qubit for short) variable $q$. Using the Dirac notation, we write $|1\rangle, |0\rangle$ for the head and tail, respectively. Quantum coin $q$ can be in states $|1\rangle, |0\rangle$ as a classical coin, but it can also be in a superposition  $a|0\rangle+b|1\rangle$, where $a,b$ are complex numbers satisfying the normalisation condition $|a|^2+|b|^2=1$.  
  Thus, its state space is the $2$-dimensional Hilbert space
  $\HH_q=\mathit{span}\{|0\rangle, |1\rangle\}$ with head $|1\rangle$ and tail $|0\rangle$ as its basis states. To detect the state of the quantum coin, we need to perform quantum measurement on it. Here, we use the measurement $M$ in the computational basis $\{|0\rangle, |1\rangle\}$, which is mathematically modelled as a pair of operators  $M=\{M_0,M_1\}$ 
  with $M_0=|0\rangle\langle 0|$ and $M_1=|1\rangle\langle 1|$.
  For example, when performing $M$ on a quantum coin in superposition  $a|0\rangle+b|1\rangle$,  we will see the tail $|0\rangle$ with probability $|a|^2$ and the head $|1\rangle$ with probability $|b|^2$. 
  
  A quantum program that behaves in a way similar to $P_\mathit{geo}$ is
  defined as: 
\begin{equation}
  \label{q-coin}Q_\mathit{geo}\equiv\mathbf{while}\ M[q]=1\ \mathbf{do}\ q:=H[q]\ \mathbf{od}
\end{equation}
  where $H$ is the Hadamard gate represented by the matrix 
  $$H=\frac{1}{\sqrt{2}}\left(\begin{array}{cc}1 & 1\\ 1 & -1\end{array}\right)$$ 
  which transforms the tail $|0\rangle$ and head $|1\rangle$ into their equal superpositions $|+\rangle= \frac{1}{\sqrt{2}}(|0\rangle+ |1\rangle)$ and $|-\rangle= \frac{1}{\sqrt{2}}(|0\rangle- |1\rangle).$\end{example}
  
It is interesting to compare programs $P_\mathit{geo}$ and $Q_\mathit{geo}$ carefully. First, coin tossing is treated in $P_\mathit{geo}$ as an abstract program construct $\oplus_{\frac{1}{2}}$ without specifying how to implement it. However, in $Q_\mathit{geo}$ we have to explicitly describe it as a physical process: it is realised by first applying the Hadamard gate $H$ on quantum coin $q$ and then measuring $q$ in the computational basis. Second, the coin $c$ in $P_\mathit{geo}$ is always in either state $0$ (tail) or $1$ (head) although we cannot predict its next state with certainty before tossing it. But if the quantum coin $q$ is initialised in tail $|0\rangle$ (respectively, head $|1\rangle$), then the Hadamard gate will transform it into a superposition $\ket{+}$ (or $\ket{-}$) of the head and tail. Third, checking the loop guard in $P_\mathit{geo}$ does not change the state of coin $c$. However, in $Q_\mathit{geo}$ we need to perform measurement $M$ on quantum coin $q$ when checking the loop guard in order to acquire information about $q$ and the measurement will change the state of $q$; for example, if $q$ is in state $|+\rangle$ before the measurement, then after the measurement $q$ will be in state $|0\rangle$ with probability $\frac{1}{2}$ and in $|1\rangle$ with probability $\frac{1}{2}$, but no longer in $|+ \rangle$.
It is even more interesting to note that there are (uncountably) infinitely many operators and measurements rather than $H$ and $M$ in $Q_\mathit{geo}$ suited to implement the \textit{fair} coin tossing.

\begin{sloppypar}
Another example considered in \cite{kaminski2016} is the following one-dimensional random walk with an absorbing boundary:
\begin{equation}\label{c-rw}C_\mathit{rw}\equiv\ x:=10;\ \while\ (x>0)\{x:=x-1\oplus_{\frac{1}{2}} x:=x+1\}
\end{equation} It starts from position $10$ and shifts on a line to the left or right with equal probability in each step until reaching the boundary at position $0$.
A quantum counterpart of this random walk, namely the semi-infinite
Hadamard walk, was defined in \cite{Ambainis2001}.
\end{sloppypar}

\begin{example}[Hadamard Walk] \label{ex-rw} Let $q$ be a quantum
  coin with a $2$-dimensional state space $\HH_q$ spanned by orthogonal basis
  states $\ket{L}$ and $\ket{R}$, indicating directions Left and Right,
  respectively. We introduce a quantum variable $p$ with state space $\HH_\infty$
  spanned by orthogonal basis $\{ \ket{n}: n \in \ZZ \}$, where $\ZZ$ is the set of integers, and $|n\rangle$ is used to denote position $n$ on a line. Then the Hadamard walk is considered as a composite system of $q$ and $p$ and thus its state space is the tensor product  $\HH\stackrel{\triangle}{=}\HH_q \otimes \HH_\infty$. A measurement that determines
  whether the system is at position $0$ is described as $N = \{ N_0, N_1\}$
  where $N_0 = \ket{0}_p\bra{0}$ and $N_1 = I_p - N_0$, and $I_p$ is the identity operators on $\HH_\infty$. To describe one step of the walk, we define a shift operator $S$  by
  \begin{equation*}
    S\ket{L, n} = \ket{L, n-1}, \quad S\ket{R, n} = \ket{R, n+1},
  \end{equation*}
  (together with linearity), meaning that the position shifts one position to the left or to the right  according to the state $|L\rangle$ or $|R\rangle$ of coin $q$.
  Intuitively, the quantum walk repeatedly behaves as follows:
  \begin{enumerate}
    \item \label{qw_loop}Perform measurement $N$ to see whether the system is at position $0$.
    \item If it is at $0$, then terminate; otherwise, apply Hadamard gate $H$ to quantum coin $q$ in order to generate an equal superposition of the directions Left and Right, and then move the position according to shift
      operation $S$ and goto step \ref{qw_loop}.
  \end{enumerate}
Formally, it can be written as a quantum program:
\begin{equation}
  \label{q-rw}
Q_\mathit{rw}\equiv \mathbf{while}\ N[p]=1\ \mathbf{do}\ q:=H[q];\ q,p:=S[q,p]\ \mathbf{od}.
\end{equation}
\end{example}

An apparent difference between $C_\mathit{rw}$ and $Q_\mathit{rw}$ is that the latter can move to the left and right simultaneously; for example, if currently the coin is in state $|R\rangle$ and the position is $n$, then applying Hadamard gate $H$ yields a superposition $\frac{1}{\sqrt{2}}(|L\rangle-|R\rangle)$ of directions $L$ and $R$, and shift operator $S$ transforms to system to state $\frac{1}{\sqrt{2}}(|L,n-1\rangle-|R,n+1\rangle).$ A more essential difference between $C_\mathit{rw}$ and $Q_\mathit{rw}$ is the so-called \textit{quantum interference}: the coefficients (called probability amplitudes) in a quantum state can be negative (and imaginary) numbers (see the above example states), and thus two paths of the walk $Q_\mathit{rw}$ with positive and negative amplitudes respectively can cancel one another. This feature has been extensively exploited to design quantum walks-based algorithms faster than the corresponding classical algorithms (see for example, \cite{Dorit2001,Kempe2003,Andris2003,Andris2008}).

The following probabilistic program was used in \cite{kaminski2016} to show a fundamental difference
between the runtime of non-probabilistic programs and that of probabilistic programs:
 \begin{equation}\begin{split}
C\equiv  C_1: &\ x := 1;\ c := 1;\\
&\while\ (c = 1) \{ c := 0 \oplus_{\frac{1}{2}} c := 1; \  x := 2x \} \\
  C_2: &\ \while\ (x > 0) \{ x := x - 1 \}
\end{split}\end{equation}
Obviously, both of subprograms $C_1$ and $C_2$ have a finite expected runtime on all
inputs. However, it is easy to see that $C_1;C_2$ has an
infinite expected runtime. A quantum variant of this program is given as the following:
\begin{example}
  \label{ex_2_3}
  Let quantum variables $q$, $p$, Hilbert space $\HH_q$, $\HH_\infty$ and measurement $N$ be the same as in Example \ref{ex-rw}. We introduce the following two unitary operators on $\HH_\infty$ as quantum mimics of assignments $x:=x-1$ and $x:=2x$, respectively, in the above program $C$:\begin{itemize}\item The left-shift operator $T_L$ is defined by $T_L|n\rangle=|n-1\rangle$ for every $n\in\ZZ$.\item The duplication operator $D$ is defined as follows: $D|n\rangle=|2n\rangle$ for $n\geq 0$ and $D|n\rangle=|K(n)\rangle$ for $n<0$. To make $D$ be a unitary operator on $\HH_\infty$, $K$ must be chosen as a one-onto-one mapping: $$\{-n:n\geq 1\}\rightarrow\{-n:n\geq 1\}\cup\{2n-1:n\geq 1\},$$ e.g. $K(1-2n)=2n-1$ and $K(-2n)=-n$ for $n\geq 1$.\end{itemize} Then we can define quantum program:
\begin{equation}\label{q-two}\begin{split}Q\equiv\ &Q_1: \mathbf{while}\ M[q]=1\ \mathbf{do}\ q:=H[q];\ p:=D[p]\ \mathbf{od};\\
&Q_2: \mathbf{while}\ N[p]=1\ \mathbf{do}\ p:=T_L[p]\ \mathbf{od}
\end{split}
\end{equation}
\end{example}

The above quantum program $Q$ behaves similarly to probabilistic program $C$ when the input state is $\ket{R, 1}$.
It is worth noting an interesting difference between the left-shift operator $T_L$ in $Q$ and the shift operator $S$ in Example \ref{ex-rw}. Since $T_L$ always moves in the same left direction, it can be defined simply on $\mathcal{H}_\infty$. In contrast, $S$ moves in the direction determined by the quantum coin $q$, so it has to be defined on $\mathcal{H}_d\otimes\mathcal{H}_\infty.$  

The three examples presented above may give the reader the impression that quantum programs are very similar to probabilistic programs. But we must point out that quantum programs are notoriously harder to analyze than probabilistic ones, as we will see from our example of the quantum walk in Section  \ref{sub:quantum_random_walk}.


\subsection{Syntax}\label{sec:quantum_while_program}

The examples presented in the above subsection should give the reader intuition about basic quantum program constructs. In this subsection, we formally define the syntax of quantum programs studied in this paper.

We choose to use the quantum \textbf{while}-language defined in \cite{Ying11,Ying16} for a consistency with \cite{kaminski2016,kaminski2018}, where probabilistic \textbf{while}-language was employed. We assume a countably infinite set $\qvar$ of quantum variables and use $q, q_0, q_1, q_2, \dots$ to denote them.
The state Hilbert space of a quantum variable $q$ is denoted
$\HH_q$. A quantum register is a finite sequence of distinct quantum variables.
The state space of a quantum register $\oq = q_0 \dots q_n$ is then the tensor product $ \HH_{\oq} = \bigotimes_{i = 0}^{n} \HH_{q_i}.$

\begin{definition}[Syntax \cite{Ying11}] The set $\qprogs$ of quantum \textbf{while}-programs is defined by the following syntax:
\begin{align}
  S :: = \ & \skp \mid  S_1; S_2 \mid q := \ket{0} \mid \oq := U [\oq]  \label{progc2}\\
              &\mid \ifb\ (\guard m\cdot M[\oq] = m \to S_m)\ \ife \label{progc3}\\
              &\mid \while\ M[\oq]=1\ \wdo\ S\ \wod \label{progc4}
\end{align}\end{definition}

A brief explanation of the above program constructs is given as follows.
The constructs $\skp$ and sequential composition $S_1; S_2$ are similar to their counterparts in the classical or probabilistic \textbf{while}-programs. 
The initialisation $q := \ket{0}$ sets the quantum register $q$ to the basis state $\ket{0}$. 
The statement $\oq := U [\oq]$ means that unitary transformation $U$ is performed on the quantum register $\oq$. 
The construct in (\ref{progc3}) is a quantum generalisation of the classical case 
statement. In the execution, measurement $M = \{ M_m \}$ is performed on $\oq$, 
and then a subprogram $S_m$ will be selected according to the measurement outcome. 
The statement in (\ref{progc4}) is a quantum generalisation of \textbf{while}-loop, where 
the measurement $M$ has only two possible outcomes: if the outcome is
$0$, the program terminates, and if the outcome $1$ occurs, the program
executes the loop body $S$ and then continues the loop. Most of these constructs were already used in our working Examples \ref{ex-coin}, \ref{ex-rw} and \ref{ex_2_3}.  

\subsection{Semantics}
For each quantum program $S$, we write $\mathit{var}(S)$ for the set of all variables $q\in\qvar$ appearing in $S$.
The Hilbert space of program $S$ is the tensor product $\HH_S = \bigotimes_{q \in \mathit{var}(S)} \HH_{q}.$
Let $\DD(\HH_S)$ be the set of all partial density operators (i.e. positive operators with the trace $\leq 1$) on $\HH_S$. 
A state of  program $S$ is then represented by a partial density operator $\rho \in \DD(\HH_S)$. Furthermore, a configuration is defined as a pair $(S,\rho)$ of a program $S$ and a state $\rho$. The operational semantics of quantum programs can be defined as a transition relation between configurations. 
Based on it,     
 the following denotational semantics can be  derived, and will be extensively used in this paper:
\begin{lemma}[Structural Representation of Denotational Semantics  \cite{Ying11}]\label{lem-structural} For any input state $\rho \in \HH_S$, we have: 
\begin{enumerate}
  \item \label{dsem_skp} $\sem{\skp}(\rho) = \rho$;
  \item \label{dsem_init} $\sem{q := \ket{0}}(\rho) = \sum_n|0\rangle_q\langle n|\rho|n\rangle_q\langle 0|$;
  \item \label{dsem_uni} $\sem{\oq := U\oq}(\rho) = U \rho U^\dagger$;
  \item \label{dsem_comp} $\sem{S_1; S_2}(\rho) = \sem{S_2}(\sem{S_1}(\rho))$; 
  \item \label{dsem_if} $\sem{\ifb(\guard m\cdot M[\oq] = m \to S_m)\ife}(\rho) = \sum_m \sem{S_m}(M_m \rho M_m^\dagger)$; 
    \item \label{dsem_while} for loop $\while[M, S]\equiv \while\ M[\oq]=1\ \wdo\ S\ \wod$: 
    $$\sem{\while[M, S]}(\rho) = \bigsqcup_{k = 0}^{\infty} \sem{\while^{(k)}[M,S]}(\rho),$$ 
where $\while^{(k)}[M,S]$ is the $k$-fold iteration of the loop $\while$: 
\begin{equation*}\label{iteration}\begin{cases}
  &\while^{(0)}[M,S]  \equiv \textbf{abort}, \\
  &\while^{(k+1)}[M,S] \\
  &\begin{aligned} 
  \equiv\ &\ifb\ M[\oq]  =\  0 \to \skp\\  
  &\guard\qquad \qquad 1\to S; \while^{(k)}[M,S]\ \ife
\end{aligned}\end{cases}\end{equation*}
for $k\geq 0$, $\bigsqcup$ stands for the least upper bound in the CPO of partial density operators with the L\"owner order $\sqsubseteq$ (see \cite{Ying16}, Lemma 3.3.2), 
and $\textbf{abort}$ is a program that never terminates so that $\sem{\textbf{abort}}(\rho) = \bf{0}$ for all $\rho$.
\end{enumerate}\end{lemma}

We remark that a partial density operator $\rho \in \HH_S$ can be thought of as a representation of a sub-distribution of pure states in $\HH_S$. Thus,  the sum of partial density operators in the above lemma is analogous to the sum of sub-distributions of states in probabilistic programming.

It immediately follows from the above lemma that the denotational semantics $\sem{S}$ of a quantum program is a (completely positive) super-operator, which is the mathematical formalism of quantum operations or the (discrete-time) dynamics of open quantum systems. 

\section{Expected Runtime as a Real-Valued Function}
\label{sec:definition_of_runtime_obervable}

In this section, we formally define the expected runtime function of a quantum program $S$ as a real-valued function that maps an initial state to the expected runtime on that state. This notion will be further illustrated  using our working examples given in Section \ref{sec_ex_2}. 

The definition of expected runtime given in the present section is intuitive because it is obtained by generalising the ideas of \cite{kaminski2016,kaminski2018} from the probabilistic case to the quantum case. In the next section, an elegant representation of the expected runtime function will be developed in terms of observables, and thus a link to quantum weakest precondition \cite{d2006quantum} is established. 


\subsection{Definition}

We only consider the case where the state Hilbert spaces of all quantum variables in a  quantum program $S$ are finite-dimensional, and thus $\HH_S$ is finite-dimensional too. 

To simplify the presentation, let us first introduce several notations. For the measurement $M=\{M_m\}$ in \textbf{if} statement (\ref{progc3}), and for each possible measurement outcome $m$, we define a super-operator $\EE_{M_m}$ by $\EE_{M_m}(\rho)=M_m\rho M_m^\dagger$ for all density operators $\rho$. Similarly, for the measurement $M=\{M_0,M_1\}$ in \textbf{while}-loop (\ref{progc4}), we define super-operators $\EE_0$, $\EE_1$ as $$\EE_0(\rho)=M_0\rho M_0^\dagger, \quad \EE_1(\rho)=M_1\rho M_1^\dagger$$ for all density operators $\rho$.  

A straightforward generalisation of the expected runtimes of probabilistic programs defined in \cite{kaminski2016,kaminski2018} yields the following: 

\begin{definition}\label{def_EXT}
  The expected runtime of a quantum program $S$ is a real-valued function $\ERT[S]: \DD(\HH_S) \to \RR \cup \{ \infty \}$ defined as follows: 
  \begin{enumerate}
    \item $\ERT[\skp](\rho) = 0$; 
    \item $\ERT[q := \ket{0}](\rho) = \tr{\rho}$; 
    \item $\ERT[\oq := U[\oq]](\rho) = \tr{\rho}$; 
    \item $\ERT[S_1; S_2](\rho) = \ERT(S_1)(\rho) + \ERT(S_2)(\sem{S_1}(\rho))$; 
    \item $\ERT[\ifb(\guard m\cdot M[\oq] = m \to S_m)\ife](\rho) = \\
    \phantom{ }\qquad\qquad\qquad\qquad\qquad\quad \tr \rho + \sum_{m} \ERT[S_m](\EE_{M_m}(\rho))$; 
    \item $\ERT[\while\ M[\oq]=1\ \wdo\ S\ \wod](\rho) = \\
    \phantom{ }\qquad\qquad\qquad\qquad\qquad\ \  \lim\limits_{k \to \infty} \ERT[\while^{[k]}[M, S]](\rho), $ \\ 
        where $\while^{[k]}[M,S]$ is the first $k$ iterations of the loop defined by:
\begin{equation*}
  \begin{cases}
  &\while^{[0]}[M,S]  \equiv \skp, \\
  &\while^{[k+1]}[M,S] \\
  &\begin{aligned}
  \equiv\ &\ifb\ M[\oq] = \ 0 \to \skp \\
  &\guard \qquad\qquad 1 \to S; \while^{[k]}[M,S]\ \ife
\end{aligned}\end{cases}\end{equation*}
for $k \geq 0$. [Note the difference between $\while^{[0]}$ defined here and $\while^{(0)}$ in Lemma \ref{lem-structural}. This makes $\while^{[k]}$ and $\while^{(k)}$ different for all $k\geq 0$.]
  \end{enumerate}
\end{definition}

Intuitively, for any state $\rho\in \DD(\HH_S)$, $\ERT[S](\rho)$ is the expected runtime of program $S$ on input $\rho$. Our design decisions in the above definition are explained as follows:    
\begin{itemize}
  \item As usual, we choose to assume the expected runtime of $\skp$ to be zero. This will be used in defining the expected runtime of a $\while$-loop. It should be noted that the runtime of $\skp$ is not equal to that of  
    identity transformation $q := I[q]$, which does nothing but takes $1$ step.
  \item When applying to a density operator, the runtime of an initialisation or a unitary transformation is defined to be $1$. 
    We would like to extend the domain of $\ERT[S]$ from density operators to
    partial density operators for simplifying the presentation. It is reasonable to define their runtime applying to a partial density operator $\rho$ as $\tr(\rho)$
    , which is the product of probability $\tr \rho$ and $1$.
    Such a definition is consistent with the requirement  that $\ERT[S]$ is linear.  
  \item The runtime of sequential composition $S_1;S_2$ is defined in the same way as the case of probabilistic programs. It is worth noting that whenever $\ERT[S_1](\rho) = \infty$, then the second summand $\ERT[S_2](\sem{S_1}(\rho))$ can be ignored.
  \item The runtime of an $\ifb$ statement is defined as the sum of the 
    runtimes of all branches plus $\tr(\rho)$, which can be thought of as
      the time that the measurement in its guard takes.
  \item The runtime of a $\while$ statement is defined as the
    limit of the runtime of its unfolding (iterations).  
\end{itemize}

 The function $\ERT$ for probabilistic programs was derived in \cite{kaminski2016, kaminski2018} from their operational semantics. Here we choose to define $\ERT$ for quantum programs directly. Indeed, in the quantum case, $\ERT$ can also be derived from the operational semantics (see Section 3.2 of \cite{Ying16}), but the derivation is quite tedious. We believe that a direct definition can make our main idea clearer, and leave the derivation in the Appendix \ref{sub:derivation_of_ert_from_operational_semantics} for interested readers.

We now present several useful properties of the function $\ERT$. The following lemma gives a way for computing the runtime of the iterations of a loop. 
\begin{lemma}
  \label{lemm_ERT_while} For any input state $\rho$, we have: 
  \begin{align*}
    &\ERT[\while^{[n]}[M, S]](\rho) =  \sum_{k = 0}^{n-1} \tr \left( (\sem{S}\circ \EE_1)^{k}(\rho) \right)\\ 
    & \qquad\qquad\qquad\qquad + \sum_{k = 0}^{n - 1} \ERT[S](\EE_1 \circ (\sem{S} \circ \EE_1)^{k}(\rho)). 
  \end{align*}
  where $\circ$ stands for the composition of super-operators.
\end{lemma}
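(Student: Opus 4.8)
The plan is to prove the identity by induction on $n$, using the recursive definition of $\while^{[n+1]}[M,S]$ as an $\ifb$ statement whose $0$-branch is $\skp$ and whose $1$-branch is $S;\while^{[n]}[M,S]$. The base case $n=0$ is immediate: $\while^{[0]}[M,S]\equiv\skp$ gives $\ERT[\while^{[0]}](\rho)=0$ by clause (1) of Definition \ref{def_EXT}, while the right-hand side consists of two empty sums (indices $k=0,\dots,-1$) and hence also vanishes.

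For the inductive step, assume the formula holds for $n$ and for \emph{every} input state; this quantification over states is what makes the induction go through. First I would unfold $\while^{[n+1]}[M,S]$. Since the guard measurement $M=\{M_0,M_1\}$ has exactly the outcomes $0$ and $1$, clause (5) of Definition \ref{def_EXT} applied to this $\ifb$ statement yields
\begin{align*}
\ERT[\while^{[n+1]}[M,S]](\rho) &= \tr(\rho)+\ERT[\skp](\EE_0(\rho))\\
&\quad +\ERT[S;\while^{[n]}[M,S]](\EE_1(\rho)),
\end{align*}
where I have used $\EE_{M_0}=\EE_0$ and $\EE_{M_1}=\EE_1$. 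The second summand is $0$ since $\ERT[\skp]\equiv0$. To the third summand I would apply the sequential-composition clause (4), rewriting it as $\ERT[S](\EE_1(\rho))+\ERT[\while^{[n]}[M,S]]\bigl((\sem{S}\circ\EE_1)(\rho)\bigr)$, using $\sem{S}(\EE_1(\rho))=(\sem{S}\circ\EE_1)(\rho)$.

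The decisive move is now to invoke the induction hypothesis at the shifted state $(\sem{S}\circ\EE_1)(\rho)$. This replaces $\ERT[\while^{[n]}[M,S]]\bigl((\sem{S}\circ\EE_1)(\rho)\bigr)$ by two sums over $k=0,\dots,n-1$ in which every argument acquires an extra factor $\sem{S}\circ\EE_1$; re-indexing $k\mapsto k+1$ turns these into $\sum_{k=1}^{n}\tr\bigl((\sem{S}\circ\EE_1)^{k}(\rho)\bigr)$ and $\sum_{k=1}^{n}\ERT[S]\bigl(\EE_1\circ(\sem{S}\circ\EE_1)^{k}(\rho)\bigr)$. Finally I would absorb the two residual terms $\tr(\rho)=\tr\bigl((\sem{S}\circ\EE_1)^{0}(\rho)\bigr)$ and $\ERT[S](\EE_1(\rho))=\ERT[S]\bigl(\EE_1\circ(\sem{S}\circ\EE_1)^{0}(\rho)\bigr)$ as the $k=0$ terms, extending both ranges to $k=0,\dots,n$, which is exactly the claimed identity at $n+1$.

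I expect no conceptual obstacle; the work is entirely bookkeeping. The points needing care are keeping the two summations synchronised through the index shift and checking that the $\skp$ in the $0$-branch contributes nothing. One should also keep in mind that the intermediate arguments $\EE_1(\rho)$, $(\sem{S}\circ\EE_1)^{k}(\rho)$, and so on are in general partial density operators, so the argument silently relies on $\ERT$ and all clauses of Definition \ref{def_EXT} having been extended linearly to partial density operators, as made explicit in the discussion following the definition. Since $n$ is held fixed, all sums are finite and no convergence issue arises — the passage to the limit is reserved for the full loop.
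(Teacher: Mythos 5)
Your proposal is correct and follows essentially the same route as the paper's own proof: induction on $n$, unfolding $\while^{[n+1]}[M,S]$ as an $\ifb$ statement, discarding the $\skp$ branch, applying the sequential-composition clause, and invoking the induction hypothesis at the shifted state $(\sem{S}\circ\EE_1)(\rho)$ before re-indexing. Your version is merely more explicit about the bookkeeping (the index shift, the quantification of the induction hypothesis over all input states, and the extension of $\ERT$ to partial density operators), all of which the paper leaves implicit.
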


Essentially, the $k$-th term of the first part in the right-hand side of the above equation is the time that the $k$-th measurement $M$ takes in the process of iterations, and the $k$-th term of the second part is the time that the $k$-th application of $S$ takes. 

We define the set of inputs from which the expected runtime of program $S$ is finite: \begin{equation}\label{finite-run}T_S = \{ \rho \in \DD(\HH_S): \ERT[S](\rho) < \infty \}.\end{equation} The following lemma
shows the linearity of the expected runtime of quantum programs over $T_S$.

\begin{lemma}[Linearity]\label{linearity}  For any $\rho_1, \rho_2 \in T_S$ and $\lambda_1, \lambda_2 > 0$ with $\lambda_1 + \lambda_2 \leq 1$: 
  $$\ERT[S](\lambda_1 \rho_1 + \lambda_2 \rho_2) = \lambda_1 \ERT[S](\rho_1) + \lambda_2 \ERT[S](\rho_2).$$
\end{lemma}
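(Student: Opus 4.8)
The plan is to prove linearity by structural induction on the program $S$, mirroring the recursive clauses of Definition \ref{def_EXT}. For each syntactic construct I would verify that $\ERT[S]$ respects the convex combination $\lambda_1\rho_1 + \lambda_2\rho_2$, using crucially that the denotational semantics $\sem{S}$ is a super-operator and hence linear. The base cases are immediate: for $\skp$ the value is constantly $0$; for $q := \ket{0}$ and $\oq := U[\oq]$ the value is $\tr{\rho}$, which is linear by linearity of the trace. The assumption $\rho_1,\rho_2\in T_S$ guarantees all quantities in play are finite, so no $\infty - \infty$ indeterminacies arise and ordinary arithmetic of reals applies throughout.

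For the inductive step on sequential composition $S_1;S_2$, I would expand $\ERT[S_1;S_2](\lambda_1\rho_1+\lambda_2\rho_2)$ using clause (4), then apply the induction hypothesis to $S_1$ on the first summand and push the convex combination through $\sem{S_1}$ (a linear super-operator) before applying the hypothesis to $S_2$ on the second summand. One subtlety: to invoke the hypothesis for $S_2$ at the point $\sem{S_1}(\rho_i)$, I must first check that $\sem{S_1}(\rho_i)\in T_{S_2}$; this follows because $\rho_i\in T_{S_1;S_2}$ forces both summands of $\ERT[S_1;S_2](\rho_i)$ to be finite. The $\ifb$ case is analogous: clause (5) is a sum of a linear trace term and terms $\ERT[S_m](\EE_{M_m}(\rho))$, and each $\EE_{M_m}$ is linear, so the convex combination distributes cleanly through the hypotheses for the branch programs $S_m$.

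The main obstacle will be the $\while$-loop, since its runtime is defined as a limit rather than a finite expression. My approach is to work through Lemma \ref{lemm_ERT_while}, which expresses $\ERT[\while^{[n]}](\rho)$ as a finite sum involving $(\sem{S}\circ\EE_1)^k$ and $\ERT[S]$. Since $\sem{S}$ and $\EE_1$ are both linear super-operators, every composite $(\sem{S}\circ\EE_1)^k$ and $\EE_1\circ(\sem{S}\circ\EE_1)^k$ is linear, so each individual $\ERT[\while^{[n]}]$ is a finite sum of linear functions of $\rho$ (using the induction hypothesis on the loop body $S$ for the second group of terms) and is therefore itself linear in $\rho$ on $T_{\while^{[n]}}$. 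The genuinely delicate point is passing from the linearity of each finite approximation $\ERT[\while^{[n]}]$ to linearity of the limit $\ERT[\while] = \lim_{n\to\infty}\ERT[\while^{[n]}]$: I need that
\begin{align*}
  \lim_{n\to\infty}\ERT[\while^{[n]}](\lambda_1\rho_1+\lambda_2\rho_2)
  &= \lambda_1\lim_{n\to\infty}\ERT[\while^{[n]}](\rho_1)\\
  &\quad+ \lambda_2\lim_{n\to\infty}\ERT[\while^{[n]}](\rho_2).
\end{align*}
This is where finiteness is essential: because $\rho_1,\rho_2\in T_S=T_{\while}$, both right-hand limits converge to finite values, and the sequences $\ERT[\while^{[n]}](\rho_i)$ are monotonically nondecreasing in $n$ (each extra iteration only adds nonnegative runtime). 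Monotone convergence to finite limits lets me split the limit of the sum into the sum of the limits and pull out the positive scalars $\lambda_1,\lambda_2$, completing the loop case and hence the induction.
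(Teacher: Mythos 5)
Your proposal is correct and takes essentially the same route as the paper: the paper's entire proof reads ``By induction on the structure of $S$ and using Lemma \ref{lemm_ERT_while} when dealing with a loop,'' which is precisely the strategy you execute. Your additional care about the finiteness side conditions (checking, e.g., that $\sem{S_1}(\rho_i)\in T_{S_2}$) and about interchanging the limit with the convex combination in the loop case simply fills in details the paper leaves implicit.
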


\subsection{Examples}
\label{sec_ex_ert}

To illustrate the definition introduced in the above subsection, let us compute the expected runtimes of the programs in our working examples presented in Section \ref{sec_ex_2}. Some more practical examples will be given in Section \ref{sec:case_studies} as case studies. 

\begin{example}
  \label{ex_geo}
  Consider repeated quantum coin tossing program $Q_\mathit{geo}$ defined in Eq. (\ref{q-coin}) with initial state $|1\rangle$. 
Denote command $q := H[q]$ by $S_H$. Then by definition we obtain: 
\begin{align*}
  \ERT[\while&^{[k+1]}[M, S_H]](\ket{1} \bra{1}) \\
  = &2 + \ERT[\while^{[k]}[M, S_H]](\ket{-} \bra{-})
\end{align*} where $\ket{-}= \frac{1}{\sqrt{2}}(|0\rangle-|1\rangle).$ 
Moreover, we have:
\begin{align*}
  \ERT[\while&^{[k+1]}[M, S_H]](\ket{-} \bra{-}) \\
  = &\frac{3}{2} + \ERT[\while^{[k]}[M, S_H]](\frac{1}{2} \ket{-} \bra{-}).
\end{align*}
Using Lemma \ref{linearity} we further obtain:
\begin{equation*}
  \ERT[\while^{[k+1]}[M, S_H]](\ket{1} \bra{1}) = 2 + \sum_{j = 1}^{k} \frac{3}{2^j}.
\end{equation*}
Therefore, we have:
\begin{equation*}
  \ERT[Q_\mathit{geo}] = \lim\limits_{k \to \infty} \ERT[\while^{[k+1]}[M, S_H]](\ket{1} \bra{1}) = 5.
\end{equation*}
This shows that the expected runtime of quantum program $Q_\mathit{geo}$ is the same as that of its probabilistic counterpart $C_\mathit{geo}$.
\end{example}

\begin{example}\label{ex-rw-EXT}
Consider semi-infinite Hadamard walk $Q_\mathit{rw}$ defined in Eq. (\ref{q-rw}) initialised in direction $c:=|L\rangle$ and position $q:=|1\rangle$. 
It was proved in \cite{Ambainis2001} that its termination probability is  
$2/\pi$. We can use the tools developed here to prove that its expected runtime is $\infty$. 
Let $\rho = \ket{L, 1}\bra{L, 1}$ be the density operator corresponding to the initial pure state. According to Theorem 8 in \cite{Ambainis2001}, we have:  
\begin{equation*}
  \sum_{k = 0}^{\infty} \tr(\EE_0 \circ (\sem{S}\circ \EE_1)^k (\rho)) = \frac{2}{\pi}. 
\end{equation*}
Then using Lemma \ref{lemm_ERT_while}, we derive:
  \begin{align*}
    &\ERT[Q_\mathit{qw}](\rho) 
     \geq \sum_{k = 0}^{\infty} \tr \left( (\sem{S}\circ \EE_1)^{k}(\rho) \right) \\
    & = \tr\rho + \sum_{k = 1}^{\infty} (\tr \rho - \sum_{j = 0}^{k-1}\tr(\EE_0 \circ (\sem{S}\circ \EE_1)^j (\rho))) \\
    & \geq 1 + \sum_{k = 1}^{\infty}(1 - \frac{2}{\pi}) = \infty.
  \end{align*}
\end{example}

\begin{example}
\label{ex_2_3_EXT}
Consider quantum program $Q\equiv Q_1;Q_2$ defined in Eq. (\ref{q-two}) with input $q:=|R\rangle$ and $p:=|1\rangle$. 
We show that similar to its probabilistic counterpart $C\equiv C_1;C_2$, it has 
an infinite expected runtime. 
Let $\rho_0 = \ket{R}_q \bra{R} \otimes \ket{1}_p \bra{1}.$ It can be shown in a way similar to Example \ref{ex_geo} that $$\ERT[Q_1](\rho_0) = 7.$$
Moreover, it can be proved by induction that
  $$ \sem{Q_1}(\rho_0) = \sum_{k=1}^{\infty} \frac{1}{2^k} \ket{L}_q \bra{L} \otimes \ket{2^k}_p \bra{2^k}.$$
Then we have  
  $ \ERT[Q_2](\ket{L}_q \bra{L} \otimes \ket{t}_p \bra{t}) = 2t+1 $
for $t > 0$.
Finally, we obtain:
\begin{equation*}
  \ERT[Q_1; Q_2](\rho_0) = 7 + \sum_{k = 1}^{\infty} \frac{1}{2^k} (2^{k+1} + 1) = \infty. 
\end{equation*}
\end{example}

\section{Expected Runtime as an Observable}
\label{sec:expected_runtime_as_an_observable}

In the previous section, the expected runtime function $\ERT[S]$ of a quantum program $S$ is simply seen as a real-valued function from input states. This is consistent with the idea of expected runtimes as generalised weakest preconditions in \cite{kaminski2016,kaminski2018} for probabilistic programs. However, quantum weakest preconditions are defined in  \cite{d2006quantum} as physical observables. The aim of this section is to present a representation of expected run time $\ERT(S)$ as a physical observable, through which the close connection between $\ERT(S)$ and quantum weakest preconditions becomes clear. Such a   representation brings an additional benefit. Obviously, it is inefficient to compute $\ERT[S](\rho)$ by Definition \ref{def_EXT} for each input $\rho$, especially when $S$ contains loops. 
The representation of $\ERT[S]$, however, enables us to find a uniform way of computing expected runtime $\ERT[S](\rho)$ for all inputs $\rho$, which will be presented in the next section.  

\subsection{Representation Theorem}\label{sec-represent}
The notion of observable is a cornerstone of quantum mechanics. Recall from \cite{NC} (or any standard textbook of quantum mechanics) that an observable of a quantum system with state Hilbert space $\HH$ is mathematically described by a Hermitian operator $A$ on $\HH$. It determines a quantum measurement with the  eigenvalues of $A$ being the possible outcomes. If the system's current state is $\rho$ and we perform the measurement on it, then nondeterminism may happen here: different outcomes may occur with certain probabilities. But we have a statistical law; that is, one can assert that the expectation (average value) of the outcomes is $\mathit{tr}(A\rho)$.

{\color{red}
}

Now let us consider how $\ERT(S)$ can be characterised in terms of physical observable. First of all, we notice that for each quantum program $S$, the linearity of $\ERT(S)$ shown in Lemma \ref{linearity} immediately implies that there exists 
an observable $A_S$ such that \begin{equation}\label{gap1}\ERT(S)(\rho) = \tr(A_S \rho)\end{equation} for all 
$\rho \in T_S$ (the set of input states with finite expected runtimes; see Eq. (\ref{finite-run})). So, our actual problem is to find an explicit and structural representation of this observable $A_S$. 

To present the representation of $A_S$, we need the following lemma proved in \cite{Zhou19}. Recall from \cite{Ying11} that $\tr(\sem{S}(\rho))$ is the probability that quantum program $S$ terminates with input $\rho$. Then for a density operator $\rho$, we say that $S$ almost surely terminates upon $\rho$ if $\tr(\sem{S}(\rho))=1$. By linearity of trace and $\sem{S}$, this condition should be rewritten as $\tr(\sem{S}(\rho))=\tr(\rho)$ if $\rho$ is a partial density operator.  
\begin{lemma}[Theorem 3.1 in \cite{Zhou19}]
  \label{lemm_term_space}  For any quantum \textbf{while}-program with state Hilbert space $\HH$, the set of the (unnormalised) pure states from which $S$ 
almost surely terminates:
  \begin{equation}\label{almost-term}
    V_S = \{ \ket{\psi} \in \HH: \tr(\sem{S}(\ket{\psi}\bra{\psi})) = \tr (\ket{\psi} \bra{\psi}) \}
  \end{equation}
  is a subspace of $\HH$
,which can be computed using the Kraus operator-sum representation of $\sem{S}$.
\end{lemma}

We will also need the notion of the dual of a super-operator. 
For any (completely positive) super-operator $\EE$ on Hilbert space $\HH$ such that $$\EE(\rho) = \sum_{j} E_j \rho E_j^\dagger,$$ its dual $\EE^*$ is an operator on Hermitian operators defined by $$\EE^*(A) = \sum_j E_j^\dagger A E_j.$$ The duality between operators $\EE$ and $\EE^\dagger$ is characterised by the equation $\tr(A \EE(\rho)) = \tr(\EE^*(A) \rho)$.

Now we are ready to define the promised representation of observable $A_S$. 



  \begin{definition} 
   \label{def_ert}
    Let $A$ be an observable. Then the expected runtime of quantum program $S$ followed by a continuation with expected runtime represented by $A$ is inductively defined as follow:
\begin{enumerate}
  \item $\ert[\skp](A) = A$.
  \item $\ert[q := \ket{0}](A) = I + \sum_{n} \ket{n}_q\bra{0}A\ket{0}_q\bra{n}$.
  \item $\ert[\oq := U[\oq]](A) = I + U^\dagger A U$.
  \item $\ert[S_1; S_2](A) = \ert[S_1](\ert[S_2](A))$.
  \item $\ert[\ifb(\guard m\cdot M[\oq] = m \to S_m)\ife](A) = \\
  \phantom{ }\qquad\qquad\qquad\qquad\qquad\qquad\quad I + \sum_m \EE_{M_m}^*(\ert[S_m](A))$.
  \item \label{def_ert_while} for loop $\while\equiv \while\ M[\oq]=1\ \wdo\ S\ \wod$: 
      \begin{align*}
      & \ert[\while](A) = \sum_{k = 0}^{\infty} P_\while (\EE_1^* \circ \sem{S}^*)^{k}(I) P_\while + \EE_0^*(A)\\ 
                       &\quad + \sum_{k = 0}^{\infty} P_\while (\EE_1^* \circ \sem{S}^*)^{k} \circ \EE_1^* (\ert[S](\EE_0^*(A))) P_\while\end{align*}
  \end{enumerate}
  where $P_\while$ is the projection onto the almost surely terminating subspace $V_\while$ of loop $\while\equiv \while\ M[\oq]=1\ \wdo\ S\ \wod$, as defined in Lemma \ref{lemm_term_space}.
\end{definition}

  We note that $\ert[\while](A)$ always converges, as guaranteed by the following lemma. 

  \begin{lemma}
    \label{ert_while_converge}
    Let $\while \equiv \while\ M[\oq]=1\ \wdo\ S\ \wod$ be a quantum loop and $X$ be a positive operator. Then
    $$ \sum_{k=0}^\infty P_{\while} (\EE_1^* \circ \sem{S}^* )^k(X) P_{\while} $$
    converges.
  \end{lemma}

Similar to classical and probabilistic weakest-precondition style calculus,  Definition \ref{def_ert} is compositional for sequential programs. 
The main difference between $\ert[S](A)$ and its probabilistic counterpart given in Table 1 of \cite{kaminski2018} is that the former is a Hermitian operator, and the latter is a real-valued function. In particular, the unit of time in their definitions is represented by the identity operator $I$ and the constant $\mathbf{1}$, respectively.
On the other hand, from Section 4 of \cite{d2006quantum} and Section 4.2.2 of \cite{Ying16}, it is clear that whenever the unit of time vanishes, $\ert[S](A)$ degenerates to the weakest precondition $wp.S.A$ of $S$ given postcondition $A$.

By induction on the structure of $S$, it is easy to show that $\ert[S](A)$ defined above is a positive (and thus Hermitian) operator. Therefore, it is indeed (the mathematical description of) a physical observable. 

In particular, the observable $\ert[S](\zeromat)$ describes the expected runtime of program $S$ without any continuation. To simplify the presentation we abbreviate it to $\ert[S]$ in the absence of ambiguity. 

The following lemma clarifies the relationship between $\ert[S](A)$ and $\ert[S]$.

  \begin{lemma} \label{split_ert}
  For any observable $A$ and initial state $\rho \in W_S$, we have:
    \begin{equation*}
      \tr(\ert[S](A) \cdot \rho) = \tr(\ert[S] \cdot \rho) + \tr(A \cdot \sem{S}(\rho)).
    \end{equation*}
  \end{lemma}


Now we can present the main result of this section. Note that all elements of the almost surely terminating subspace $V_S$ defined by Eq. (\ref{almost-term}) are pure states. We further define the set of partial density operators upon which program $S$ almost surely terminates: 
 $$W_S = \{ \rho \in \DD(\HH) : \tr (\sem{S}(\rho)) = \tr \rho \}.$$ Then our representation of $A_S$ can be stated as the following: 
 
\begin{theorem}[Observable Representation of Runtime]
  \label{main_th1}
  For any quantum program $S$ and for all $\rho \in W_S$, we have: 
  \begin{equation} \label{gap2} 
    \ERT[S](\rho) = \tr (\ert[S] \cdot \rho).
  \end{equation}
\end{theorem}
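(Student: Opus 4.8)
The plan is to prove $\ERT[S](\rho)=\tr(\ert[S]\cdot\rho)$ by induction on the structure of $S$, using throughout the defining duality between a super-operator and its dual, $\tr(\EE^*(A)\,\sigma)=\tr(A\,\EE(\sigma))$, valid for every operator $A$ and partial density operator $\sigma$. The three base cases are immediate: comparing Definition \ref{def_EXT} with Definition \ref{def_ert} one reads off $\tr(\zeromat\cdot\rho)=0$ for $\skp$ and $\tr(I\cdot\rho)=\tr\rho$ for $q:=\ket{0}$ and $\oq:=U[\oq]$, matching the stated runtimes.

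For sequential composition and the $\ifb$ statement I would first move the duals off the observables and onto $\rho$: for $S_1;S_2$ this gives $\tr(\ert[S_1;S_2]\,\rho)=\tr(\ert[S_1]\,\rho)+\tr(\ert[S_2]\,\sem{S_1}(\rho))$, and for the conditional $\tr(\ert[S]\,\rho)=\tr\rho+\sum_m\tr(\ert[S_m]\,\EE_{M_m}(\rho))$. The induction hypothesis then closes both cases, but only after checking that the relevant sub-states lie in the appropriate termination sets, which is precisely where $\rho\in W_S$ is used. Since every super-operator is trace non-increasing, the chain $\tr(\sem{S_2}(\sem{S_1}(\rho)))\le\tr(\sem{S_1}(\rho))\le\tr\rho$ together with $\rho\in W_{S_1;S_2}$ collapses to equalities, yielding $\rho\in W_{S_1}$ and $\sem{S_1}(\rho)\in W_{S_2}$; an analogous argument, using completeness $\sum_m M_m^\dagger M_m=I$ to obtain $\sum_m\tr(\EE_{M_m}(\rho))=\tr\rho$, yields $\EE_{M_m}(\rho)\in W_{S_m}$ for each $m$.

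The loop is the heart of the matter. Writing $\rho_k=(\sem{S}\circ\EE_1)^k(\rho)$, Lemma \ref{lemm_ERT_while} and the limit in Definition \ref{def_EXT} give $\ERT[\while](\rho)=\sum_{k=0}^\infty\tr(\rho_k)+\sum_{k=0}^\infty\ERT[S](\EE_1(\rho_k))$, where the limit equals the sum by monotone convergence of nonnegative partial sums (so both sides are read in $[0,\infty]$). On the observable side I would use $\rho\in W_\while$ to discard the projections in Definition \ref{def_ert}: since $\rho\in W_\while$ forces $\supp(\rho)\subseteq V_\while$, we have $P_\while\,\rho\,P_\while=\rho$, whence $\tr(P_\while A P_\while\,\rho)=\tr(A\rho)$ for every $A$. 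Dualising $(\EE_1^*\circ\sem{S}^*)^k$ to $(\sem{S}\circ\EE_1)^k$ matches the first sum term by term, and dualising $(\EE_1^*\circ\sem{S}^*)^k\circ\EE_1^*$ to $\EE_1\circ(\sem{S}\circ\EE_1)^k$ turns the second sum into $\sum_k\tr(\ert[S]\cdot\EE_1(\rho_k))$. (The support characterisation $\rho\in W_\while\iff\supp(\rho)\subseteq V_\while$ follows from Lemma \ref{lemm_term_space} by the same trace-collapse reasoning, applied to a spectral decomposition of $\rho$.)

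The main obstacle, and the only genuinely quantum step, is justifying the induction hypothesis $\tr(\ert[S]\cdot\EE_1(\rho_k))=\ERT[S](\EE_1(\rho_k))$ for every $k$, which demands $\EE_1(\rho_k)\in W_S$, i.e. that the loop body almost surely terminates on each state reached inside the loop. I would establish this by a telescoping trace argument. Since $M$ is trace preserving, $\tr\rho_k=\tr(\EE_0(\rho_k))+\tr(\EE_1(\rho_k))$, while $\tr\rho_{k+1}=\tr(\sem{S}(\EE_1(\rho_k)))\le\tr(\EE_1(\rho_k))$; hence $\tr(\EE_0(\rho_k))\le\tr\rho_k-\tr\rho_{k+1}$, and summing gives $\sum_{k=0}^\infty\tr(\EE_0(\rho_k))\le\tr\rho-\lim_N\tr\rho_N\le\tr\rho$. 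Almost-sure termination of the loop is exactly $\sum_k\tr(\EE_0(\rho_k))=\tr\rho$, so every inequality must be tight, and tightness at step $k$ reads $\tr(\sem{S}(\EE_1(\rho_k)))=\tr(\EE_1(\rho_k))$, i.e. $\EE_1(\rho_k)\in W_S$. With this the two sums coincide with $\ERT[\while](\rho)$, completing the induction.
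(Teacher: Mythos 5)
Your overall strategy---structural induction combined with the duality $\tr(\EE^*(A)\,\sigma)=\tr(A\,\EE(\sigma))$---is the same as the paper's, and several of your steps are correct and even make explicit details the paper leaves implicit: your telescoping-trace argument is precisely the paper's Lemma \ref{lemm_term_imp_0} (it yields both $\lim_k\tr\,(\sem{S'}\circ\EE_1)^k(\rho)=0$ and $\EE_1\circ(\sem{S'}\circ\EE_1)^k(\rho)\in W_{S'}$, writing $S'$ for the loop body), and your trace-collapse arguments showing $\rho\in W_{S_1}$, $\sem{S_1}(\rho)\in W_{S_2}$ and $\EE_{M_m}(\rho)\in W_{S_m}$ are exactly what is needed to invoke the induction hypothesis in the composition and conditional cases.

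There is, however, a genuine gap in the loop case: you never show that the two infinite series of positive operators in Definition \ref{def_ert} converge, i.e.\ that $\ert[\while]$ exists as a finite Hermitian operator. Your proof reads $\tr(\ert[\while]\cdot\rho)$ via monotone convergence as a sum in $[0,\infty]$, and on that reading you obtain only an equality of extended reals; but the theorem asserts an identity involving a well-defined observable, and establishing its existence is the bulk of the paper's technical work. The paper does this by taking an orthonormal basis $\{\ket{\psi_j}\}$ of $V_\while$, combining Lemma \ref{lemm_term_imp_0} with Lemma \ref{lemm_0_imp_conv} (a Jordan-decomposition argument showing that $\lim_k \tr\,(\sem{S'}\circ\EE_1)^k(\rho)=0$ forces $\sum_k \tr\,(\sem{S'}\circ\EE_1)^k(\rho)<\infty$; note that the limit being zero does not by itself give summability) to get convergence on each one-dimensional subspace spanned by $\ket{\psi_j}$, and then invoking Lemma \ref{po_maxp} to patch these together into existence of $\bigsqcup_n P_\while\sum_{k=0}^{n-1}(\EE_1^*\circ\sem{S'}^*)^k(I)\,P_\while$; pointwise convergence of $\bra{\psi_j}\cdot\ket{\psi_j}$ on basis vectors does not trivially give operator convergence, which is why the Cauchy--Schwarz machinery of Lemma \ref{po_maxp} is needed. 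The second series is then handled via the bound $0\sqsubseteq\EE_1^*(\ert[S'])\sqsubseteq cI$. This omission also matters downstream: the paper derives Theorem \ref{main_th2}($\Leftarrow$) and the uniform bound $\ERT[S](\rho)\leq\norm{\ert[S]}$ directly from the finiteness of the operator $\ert[S]$, and neither of these follows from your $[0,\infty]$-valued version of the identity.
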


The above theorem shows that observable $\ert[S]$ is an explicit representation of $A_S$ defined in Eq. (\ref{gap1}) over $W_S$.

\subsection{A Condition for Finite Expected Runtimes}
\label{sec:condition_of_finite_expected_runtime}

The reader must have already noticed a gap between Eqs. (\ref{gap1}) and (\ref{gap2}): the former is valid over $T_S$ and the latter is valid over $W_S$. 
This gap can actually be filled in by the following:

\begin{theorem}[Equivalence of Almost Sure Termination and Finite Expected Runtime]\label{main_th2}
  Let $S$ be an arbitrary quantum \textbf{while}-program with state Hilbert space $\HH$. Then for any partial density operator $\rho\in\mathcal{D}(\HH)$: 
  \begin{equation*}
    \ERT[S](\rho) < \infty \iff \tr(\sem{S}(\rho)) = \tr \rho.
  \end{equation*}
\end{theorem}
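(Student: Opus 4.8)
The plan is to prove both implications simultaneously by induction on the structure of $S$, with the loop as the only genuinely difficult case. For the base cases ($\skp$, $q:=\ket{0}$, $\oq:=U[\oq]$) both sides hold unconditionally: the runtime is $0$ or $\tr\rho$, hence always finite, and each construct is trace-preserving, so $\tr(\sem{S}(\rho))=\tr\rho$ always. For $S_1;S_2$ and the $\ifb$ statement the argument is pure bookkeeping. Writing the denotational and runtime semantics as sums of non-negative contributions over branches, and using that every $\sem{S_i}$ and every $\EE_{M_m}$ is trace-non-increasing, one has $\tr(\sem{S}(\rho))\le\tr\rho$ with equality iff equality holds in each summand. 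Combined with the inductive hypothesis applied to the immediate subprograms on the intermediate states $\sem{S_1}(\rho)$ and $\EE_{M_m}(\rho)$, both directions follow: finiteness of the total runtime forces finiteness of each summand (hence, by the hypothesis, termwise trace preservation, hence global trace preservation), and conversely global trace preservation forces termwise trace preservation (equality of a sum of dominated terms), hence termwise finiteness, hence total finiteness.

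For the loop $\while\equiv\while\ M[\oq]=1\ \wdo\ S\ \wod$, abbreviate $\sigma_k=(\sem{S}\circ\EE_1)^k(\rho)$ and $\tau_k=\EE_1(\sigma_k)$, so $\sigma_{k+1}=\sem{S}(\tau_k)$. By Lemma \ref{lemm_ERT_while} and the definition of $\ERT[\while]$ as a limit,
\[
\ERT[\while](\rho)=\sum_{k=0}^\infty\tr(\sigma_k)+\sum_{k=0}^\infty\ERT[S](\tau_k),
\]
both sums non-negative, while the unfolding in Lemma \ref{lem-structural}(6) gives $\tr(\sem{\while}(\rho))=\sum_{k=0}^\infty\tr(\EE_0(\sigma_k))$. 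The bridge is the telescoping identity obtained from $\tr(\sigma_k)=\tr(\EE_0(\sigma_k))+\tr(\tau_k)$ and $\tr(\sigma_{k+1})=\tr(\sem{S}(\tau_k))$, namely
\[
\tr(\sigma_k)-\tr(\sigma_{k+1})=\tr(\EE_0(\sigma_k))+\bigl(\tr(\tau_k)-\tr(\sem{S}(\tau_k))\bigr),
\]
in which both added terms are non-negative. For the forward direction, $\ERT[\while](\rho)<\infty$ gives $\tr(\sigma_k)\to0$ and $\ERT[S](\tau_k)<\infty$ for every $k$; the inductive hypothesis for $S$ then yields $\tr(\sem{S}(\tau_k))=\tr(\tau_k)$, so summing the identity collapses it to $\tr\rho=\sum_k\tr(\EE_0(\sigma_k))=\tr(\sem{\while}(\rho))$. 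For the converse, $\tr(\sem{\while}(\rho))=\tr\rho$ forces, after summing the same identity, both $\tr(\sigma_k)\to0$ and $\tr(\sem{S}(\tau_k))=\tr(\tau_k)$ for all $k$; the latter gives, by the inductive hypothesis, each $\ERT[S](\tau_k)<\infty$, i.e. $\tau_k\in T_S$.

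The main obstacle is the converse for the loop: knowing only $\tr(\sigma_k)\to0$ does not by itself make $\sum_k\tr(\sigma_k)$ converge (consider $\tr(\sigma_k)\sim 1/k$), and this is precisely where the interference of amplitudes could spoil a naive tracking of the state. I would close this gap by a finite-dimensional spectral argument. Regard $\FF=\sem{S}\circ\EE_1$ as a fixed linear operator on the finite-dimensional space $\BB(\HH)$; it is completely positive and trace-non-increasing, so its dual $\FF^\ast$ satisfies $\FF^\ast(I)\le I$ and is an operator-norm contraction, which forces every eigenvalue of $\FF$ of modulus $1$ to be semisimple. Consequently $f(k)=\tr(\sigma_k)=\tr(\FF^k(\rho))$ equals $\sum_j c_j\lambda_j^{\,k}$ plus a geometrically decaying remainder, where the $\lambda_j$ with $\abs{\lambda_j}=1$ carry constant coefficients; since $f(k)\to0$, a standard averaging argument ($\frac1N\sum_{k<N}\overline{\lambda_{j_0}^{\,k}}\,f(k)\to c_{j_0}$) forces every such $c_j$ to vanish, leaving only eigenvalues of modulus $<1$. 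Hence $\tr(\sigma_k)$ decays geometrically and $\sum_k\tr(\sigma_k)<\infty$.

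Finally, the second sum is controlled by the first: by the linearity of $\ERT[S]$ (Lemma \ref{linearity}) there is a Hermitian observable $A_S$ with $\ERT[S](\tau)=\tr(A_S\tau)$ on $T_S$ (equation (\ref{gap1})), so $\ERT[S](\tau_k)\le\lambda_{\max}(A_S)\,\tr(\tau_k)\le\lambda_{\max}(A_S)\,\tr(\sigma_k)$, whence $\sum_k\ERT[S](\tau_k)<\infty$. Both sums being finite gives $\ERT[\while](\rho)<\infty$, which completes the loop case and the induction.
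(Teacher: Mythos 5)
Your proposal is correct, and it reaches the conclusion by a route that is genuinely different from the paper's. The paper treats the two implications asymmetrically: the implication $\tr(\sem{S}(\rho)) = \tr\rho \Rightarrow \ERT[S](\rho) < \infty$ is dispatched in one line as a corollary of Theorem \ref{main_th1} (the observable representation over $W_S$), so all the real work for that direction sits inside the proof of Theorem \ref{main_th1}: an induction that constructs the explicit observable $\ert[S]$ of Definition \ref{def_ert}, proves existence of the defining least upper bounds via Lemmas \ref{po_maxp} and \ref{po_bound}, and invokes Lemmas \ref{lemm_term_imp_0} and \ref{lemm_0_imp_conv}; only the direction $\ERT[S](\rho)<\infty \Rightarrow$ almost-sure termination is proved by direct structural induction, with the same telescoping trace identity you use. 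You instead run a single simultaneous induction and never need the explicit $\ert[S]$ or the lub-existence machinery: in the loop case you re-derive the content of Lemma \ref{lemm_term_imp_0} (your telescoping identity summed against $\tr(\sem{\while}(\rho))=\tr\rho$), you re-prove Lemma \ref{lemm_0_imp_conv} in scalar form --- the same semisimplicity of the peripheral spectrum of $M_{\sem{S}\circ\EE_1}$ as in Section \ref{sub:computing_expected_runtime_observable_of_while_statement}, but with Ces\`aro averaging applied to the numbers $\tr(\sigma_k)$ rather than the paper's argument that the peripheral component of the vectorized state is killed by a diagonal unitary --- and you replace the paper's domination $0 \sqsubseteq \EE_1^*(\ert[S']) \sqsubseteq cI$ by the bound $\ERT[S](\tau_k) \leq \norm{A_S}\tr(\sigma_k)$, where $A_S$ is the abstract observable of Eq.~(\ref{gap1}). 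This is legitimate and non-circular, since Eq.~(\ref{gap1}) follows from Lemma \ref{linearity} alone and you apply it only to the loop body, for states $\tau_k$ already placed in $T_S$ by the inductive hypothesis. What each approach buys: the paper's detour is longer but yields the explicit representation $\ert[S]$, which Section \ref{sec:computing_expected_runtime_observable} needs anyway for computing runtimes, whereas your argument is leaner and self-contained if the equivalence is the only goal. Two points deserve explicit care in a full write-up, though neither is a gap: the extension of $\ERT[S]$ from the cone $T_S$ to a genuine linear functional on $\spanv(T_S)$ (well-definedness under cancellation, which follows from Lemma \ref{linearity} by rescaling), and the fact that the tail you obtain decays as polynomial-times-geometric rather than literally geometrically --- which is still summable, so your conclusion stands.
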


An immediate corollary of the above theorem is $T_S=W_S$. Moreover, it shows that  
a quantum program $S$ with input $\rho$ almost surely terminates if and only if it has a finite expected runtime.

\begin{sloppypar}
The above theorem further implies that if $\rho \in W_S$, then $\ERT[S](\rho) = \tr(\ert[S]\cdot \rho)$; otherwise $\ERT[S](\rho) = \infty$. However, calculating the value of $\ert[S]$ is challenging when $S$ has loops. This problem will be considered in Section \ref{sub:computing_expected_runtime_observable_of_while_statement}.
\end{sloppypar}

Remember that in this paper, we only consider quantum programs in finite-dimensional Hilbert spaces. Then we can 
furthermore give a finite upper bound for the expected runtime for a quantum
program $S$ for all almost surely terminating inputs $\rho \in W_S$. Recall that for an operator $A$ on a Hilbert space $\HH$, its norm is defined as
$$\|A\|=\sup\{|A|\psi\rangle|: |\psi\rangle\in\HH\ {\rm and}\ ||\psi\rangle|=1\}.$$

\begin{corollary}[Uniform Bound of Expected Runtime]
  For any quantum program $S$, let $M$ be the norm of its expected runtime observable: $M = \norm{\ert[S]}$.  Then for all $\rho \in W_S$, we have $$\ERT[S](\rho) \leq M.$$
\end{corollary}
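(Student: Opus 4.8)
The plan is to reduce the corollary directly to Theorem~\ref{main_th1} together with a standard operator inequality. First I would invoke Theorem~\ref{main_th1}: for any $\rho \in W_S$, the expected runtime is represented by the observable $\ert[S]$ via the clean identity $\ERT[S](\rho) = \tr(\ert[S]\cdot\rho)$. This converts the statement we must bound from a recursively-defined real function into a single trace expression, which is the crucial simplification. From the discussion following Definition~\ref{def_ert}, I would also record that $\ert[S]$ is a positive (hence Hermitian) operator; this positivity is what makes the subsequent bound go through with the right sign.

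The key step is then a bound of the form $\tr(A\rho) \le \norm{A}\,\tr\rho$ valid for any positive operator $A$ and any partial density operator $\rho$. I would establish this as follows. Since $A = \ert[S]$ is positive with norm $M = \norm{A}$, its largest eigenvalue is $M$, so $M\cdot I - A$ is also a positive operator (all its eigenvalues are nonnegative). Because $\rho$ is positive, the product $(M\cdot I - A)\rho$ has nonnegative trace — here I use the standard fact that for positive operators $B$ and $\rho$ one has $\tr(B\rho)\ge 0$, which follows by writing $\tr(B\rho)=\tr(\rho^{1/2}B\rho^{1/2})$ and noting $\rho^{1/2}B\rho^{1/2}$ is positive. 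Expanding by linearity of the trace gives $M\cdot\tr\rho - \tr(A\rho)\ge 0$, i.e. $\tr(\ert[S]\cdot\rho)\le M\cdot\tr\rho$.

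Combining the two ingredients yields, for every $\rho \in W_S$,
\begin{equation*}
  \ERT[S](\rho) = \tr(\ert[S]\cdot\rho) \le M\cdot\tr\rho \le M,
\end{equation*}
where the final inequality uses $\tr\rho\le 1$ for a partial density operator. This is exactly the claimed uniform bound. Note that finite-dimensionality is used tacitly but essentially: it guarantees that $\ert[S]$ has finite norm $M<\infty$ (its spectrum is a finite set of eigenvalues), so the bound is nonvacuous.

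I do not expect a genuine obstacle here; the corollary is an easy consequence of the representation theorem and positivity of $\ert[S]$. The only point deserving care is the restriction to $W_S$: the identity $\ERT[S](\rho)=\tr(\ert[S]\cdot\rho)$ holds only on $W_S$ by Theorem~\ref{main_th1}, and indeed outside $W_S$ the left-hand side is infinite (by Theorem~\ref{main_th2}), so no uniform bound could possibly hold there. Thus the hypothesis $\rho\in W_S$ is not a technicality but is exactly the domain on which the statement is meaningful, and the proof should state this explicitly rather than silently assuming the trace identity everywhere.
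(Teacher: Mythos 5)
Your proposal is correct and follows essentially the same route as the paper, which proves this corollary in one line as an immediate consequence of Theorem~\ref{main_th1} (the paper also cites Theorem~\ref{main_th2}, but only to identify $W_S$ with the finite-runtime inputs, exactly the point you make in your closing remark). Your explicit justification of the operator inequality $\tr(\ert[S]\cdot\rho)\leq\norm{\ert[S]}\tr\rho$ via positivity of $M\cdot I-\ert[S]$ simply spells out the step the paper leaves implicit.
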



\section{Computing Expected Runtime}
\label{sec:computing_expected_runtime_observable}

Theorem \ref{main_th1} established in the last section gives a physical interpretation of expected runtime as an observable. As pointed out before, this observable representation provides us with an efficient method for computing expected runtimes. More precisely, we see from Eq. (\ref{gap2}) that if the observable $\ert[S]$ is known, 
 we can easily compute the expected runtime $\ERT[S](\rho)$ for any  input $\rho$.
Therefore, in this section, we develop a method for computing $\ert[S]$ based on the matrix representation of denotational semantics of quantum programs, by 
  extending and incorporating the approach of \cite{Ying13} for termination analysis into the setting of expected runtimes.

\subsection{Matrix Representation of Super-Operators}
\label{sub:matrix_representation_of_quantum_operations}

Recall from Lemma \ref{lem-structural} that the denotational semantics of a quantum program is a (completely positive) super-operator mapping partial density operators (i.e. matrices in the finite-dimensional case considered in this paper) to themselves.  For the convenience of the reader, in this subsection let us briefly review some necessary mathematical tools for computing super-operators from \cite{Ying16} (see Section 5.1.2 there). 

Let us first show that an $d\times d$ matrix can be encoded as an $d^2$-dimensional vector using a maximally entangled state.
 Assume that $\{ \ket{j} \}_j$ is an orthonormal basis of $\HH$. Then we write:  
 $\ket{\Psi} = \sum_j \ket{j j}$ for the (unnormalised) maximally entangled state in 
  $\HH \otimes \HH$. 
  
  \begin{lemma}
  \label{lemm_mrep0}
  For any $d \times d$ matrix $A = (A_{i,j})$, we have: 
  \begin{equation}\label{entangled}
    \begin{split}
    (A \otimes I)\ket{\Psi} = (&A_{0,0}, A_{0,1}, \dots, A_{0,(d-1)}, A_{1,0}, \dots,\\ &A_{(d-1), 0}, A_{(d-1), 1}, \dots ,A_{(d-1), (d-1)}). 
\end{split}\end{equation}
\end{lemma}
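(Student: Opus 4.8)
The plan is to compute $(A \otimes I)\ket{\Psi}$ directly by expanding the maximally entangled state in the computational basis and reading off its coordinates. First I would substitute the definition $\ket{\Psi} = \sum_j \ket{jj}$ and push the operator $A \otimes I$ through the sum by linearity, using the defining action of a tensor product of operators on product states:
\begin{equation*}
(A \otimes I)\ket{\Psi} = \sum_j (A\ket{j}) \otimes (I\ket{j}) = \sum_j (A\ket{j}) \otimes \ket{j}.
\end{equation*}

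Next I would expand each $A\ket{j}$ in the basis $\{\ket{i}\}$. Since the $(i,j)$ entry of $A$ is by definition $A_{i,j} = \bra{i} A \ket{j}$, we have $A\ket{j} = \sum_i A_{i,j}\ket{i}$, whence
\begin{equation*}
(A \otimes I)\ket{\Psi} = \sum_j \sum_i A_{i,j}\, \ket{i}\otimes\ket{j} = \sum_{i,j} A_{i,j}\,\ket{ij}.
\end{equation*}
This shows that the coefficient of each product basis vector $\ket{ij}$ in $(A \otimes I)\ket{\Psi}$ is exactly the matrix entry $A_{i,j}$.

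Finally I would invoke the standard ordering convention for the tensor-product basis: the $d^2$ vectors $\ket{ij}$ are listed lexicographically as $\ket{00}, \ket{01}, \dots, \ket{0(d-1)}, \ket{10}, \dots, \ket{(d-1)(d-1)}$. Reading the coordinates of $(A\otimes I)\ket{\Psi}$ in this order therefore produces the matrix entries $A_{i,j}$ in row-major order, which is precisely the right-hand side of (\ref{entangled}).

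The computation itself is routine; the only point demanding care is the bookkeeping of indices, namely keeping the first tensor factor aligned with the row index $i$ and the second with the column index $j$, and matching the lexicographic ordering of $\{\ket{ij}\}$ to the row-major flattening of $A$ (with all indices running from $0$ to $d-1$). Once these conventions are fixed, the identity is immediate, so I do not anticipate any genuine obstacle beyond this indexing discipline.
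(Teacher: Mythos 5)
Your proof is correct. The paper itself states Lemma \ref{lemm_mrep0} without proof, presenting it as background material reviewed from the cited reference (Section 5.1.2 of \cite{Ying16}), and your direct computation---expanding $\ket{\Psi}$, pushing $A \otimes I$ through the sum to get $\sum_{i,j} A_{i,j}\ket{ij}$, and matching the lexicographic basis ordering to the row-major flattening of $A$---is exactly the standard argument that justifies it.
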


Note that the left-hand side of Eq. (\ref{entangled}) is a $d^2$-dimensional vector.  

As is well-known that in practice, an abstract super-operator is usually hard to compute. But we can often use its 
matrix representation in the computation.
 
\begin{definition} Suppose  super-operator $\EE$ on a Hilbert space $\HH$ has the Kraus operator-sum representation: 
  \begin{equation}\label{Kraus}
    \EE(\rho) = \sum_m M_m \rho M_m^\dagger
  \end{equation}
  for all $\rho \in \DD(\HH)$. Then its matrix representation is defined as the following operator on $\HH\otimes\HH$: 
  \begin{equation*}
    M_{\EE} = \sum_m M_m\otimes M_m^*
  \end{equation*}
  where $M_m^*$ is the conjugate of $M_m$, and $\otimes$ stands for tensor product. 
\end{definition}

Note that when the dimension $d=\dim\HH <\infty$, each $M_m$ in Eq. (\ref{Kraus}) is a $d\times d$ matrix, and thus $M_\EE$ is a $d^2\times d^2$ matrix. 

The next lemma gives a close connection between a super-operator $\EE$ and its matrix representation through the maximal entanglement. 
\begin{lemma}
  \label{lemm_mrep1}
  For any $d \times d$ matrix $A$, we have:  
  \begin{equation}\label{super-transfer}
    (\EE(A)\otimes I)\ket{\Psi} = M_{\EE} (A \otimes I) \ket{\Psi}
  \end{equation}
\end{lemma}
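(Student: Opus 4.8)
The plan is to reduce the claimed identity to a single transpose (``ricochet'') property of the maximally entangled state $\ket{\Psi}$ and then verify it term-by-term over the Krause decomposition of $\EE$. Writing $\EE(\rho)=\sum_m M_m\rho M_m^\dagger$ and $M_{\EE}=\sum_m M_m\otimes M_m^*$, both sides of Eq. (\ref{super-transfer}) are finite sums over $m$, so by linearity it suffices to prove, for each fixed $m$,
$$(M_m A M_m^\dagger \otimes I)\ket{\Psi} = (M_m \otimes M_m^*)(A \otimes I)\ket{\Psi} = (M_m A \otimes M_m^*)\ket{\Psi}.$$

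First I would establish the auxiliary identity that for any $d\times d$ operator $B$,
$$(I \otimes B)\ket{\Psi} = (B^{\mathrm{T}}\otimes I)\ket{\Psi},$$
where $B^{\mathrm{T}}$ is the transpose of $B$. This follows from a direct expansion of $\ket{\Psi}=\sum_j\ket{jj}$ in the computational basis: using $B\ket{j}=\sum_k B_{kj}\ket{k}$ and $(B^{\mathrm{T}})_{kj}=B_{jk}$, both sides reduce to $\sum_{j,k}B_{kj}\ket{jk}$ after a relabelling of the summation indices. (Alternatively, one could read this off from the explicit coordinate description of $(B\otimes I)\ket{\Psi}$ provided by Lemma \ref{lemm_mrep0}, but the basis computation is the cleanest route.)

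With this identity in hand, I would factorise $(M_m A \otimes M_m^*) = (M_m A \otimes I)(I \otimes M_m^*)$ and apply the auxiliary identity with $B=M_m^*$, giving $(I \otimes M_m^*)\ket{\Psi} = ((M_m^*)^{\mathrm{T}}\otimes I)\ket{\Psi} = (M_m^\dagger \otimes I)\ket{\Psi}$, since the transpose of the conjugate is the conjugate transpose, i.e. $(M_m^*)^{\mathrm{T}}=M_m^\dagger$. Substituting back yields
$$(M_m A \otimes M_m^*)\ket{\Psi} = (M_m A \otimes I)(M_m^\dagger \otimes I)\ket{\Psi} = (M_m A M_m^\dagger \otimes I)\ket{\Psi},$$
which is exactly the per-$m$ identity required. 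Summing over $m$ and recombining the left-hand terms into $\EE(A)=\sum_m M_m A M_m^\dagger$ then produces Eq. (\ref{super-transfer}).

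The computation is essentially routine once the transpose identity is isolated, so I do not expect a genuine obstacle; the only place demanding care is the bookkeeping among the three closely related operations --- conjugate ($^*$), transpose ($^{\mathrm{T}}$), and conjugate transpose ($^\dagger$) --- and in particular the fact that the dagger appearing in the Krause form $M_m^\dagger$ is precisely the transpose of the conjugate $M_m^*$ appearing in $M_{\EE}$. This coincidence is the conceptual reason the matrix representation is built from $M_m\otimes M_m^*$ rather than, say, $M_m\otimes M_m^\dagger$, and it is the single step where a convention slip would invalidate the result.
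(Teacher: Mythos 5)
Your proof is correct: the reduction over the Kraus terms, the transpose identity $(I\otimes B)\ket{\Psi}=(B^{\mathrm{T}}\otimes I)\ket{\Psi}$, and the observation that $(M_m^*)^{\mathrm{T}}=M_m^\dagger$ together give exactly Eq.~(\ref{super-transfer}). The paper itself states Lemma~\ref{lemm_mrep1} without proof, citing the standard treatment in Section 5.1.2 of \cite{Ying16}, and your argument is precisely that standard computation, so there is no divergence to report.
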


We observe that in the left-hand side of Eq. (\ref{super-transfer}) super-operator $\EE$ is applied to operator $A$, but all operations on the right-hand side are matrix multiplications. As we will see below, a combination of the above two lemmas provides us with an effective way of manipulating super-operators in computing the expected runtime of quantum programs.

\subsection{Computing the Expected Runtimes of Loops}
\label{sub:computing_expected_runtime_observable_of_while_statement}

We see from Definition \ref{def_ert} that computing
runtime observable $\ert[S]$ is difficult only when $S$ contains \textbf{while}-loop. So, this subsection is devoted to develop a method for computing the following runtime observable of loop: $$\ert[\while\ M[\oq]=1\ \wdo\ S\ \wod]$$
using the matrix representation introduced in the previous subsection. 

\begin{sloppypar}
For simplicity, we write $\while$ for the loop $$\while\ M[\oq]=1\ \wdo\ S\ \wod.$$ Assume that $\ert[S]$ is given. 
First, we notice that computing $\ert[\while]$ directly using Definition \ref{def_ert} is very difficult because (possibly infinitely) many iterations of super-operators are involved there. 
\end{sloppypar}
 
However, this difficulty can be circumvented with the matrix representations of these super-operators. More precisely, 
repeatedly using Lemma \ref{lemm_mrep1}, we obtain: 
\begin{align}
   & (\ert[\while] \otimes I) \ket{\Psi}\\
  = &\label{inf-series0}  \left(\sum_{k = 0}^{\infty} \left(\EE_{P} \circ (\EE_1^* \circ \sem{S}^*)^{k}\right)(I + \EE_1^*(\ert[S]))  \otimes I\right) \ket{\Psi} \\
  = &\label{inf-series} \left( \sum_{k = 0}^\infty M_{P}(M_{\EE_1^*}M_{\sem{S}^*})^k \otimes I \right)
  \cdot \left( (I + \EE_1^*(\ert[S])) \otimes I \right) \ket{\Psi}
\end{align}
where as in Definition \ref{def_ert}, $P$ is the projection onto the almost surely terminating subspace $V_\while$ of loop $\while$, and   
\begin{align*}
  \EE_{P}(\rho) & = P \rho P^\dagger,\qquad  \EE_1(\rho) = M_1 \rho M_1^\dagger
\end{align*} for all $\rho$, $M_{P}  = P \otimes P^*$ is the matrix representation of $\EE_{P}$, and $\sem{S}^*$ is the dual of super-operator $\sem{S}$ (the denotational semantics of $S$). 
Note that the super-operators in the infinite series of (\ref{inf-series0}) are all transferred to their matrix representations in (\ref{inf-series}).

Next we compute the infinite series of matrices in (\ref{inf-series}).
Let us introduce matrix: $$R=M_{\EE_1^*}M_{\sem{S}^*}.$$ Then this infinite series can be simply written as: 
\begin{equation}\label{mat_rep_series}
\sum_{k=0}^\infty M_{P}R^k.
\end{equation}

  We see from Lemma \ref{ert_while_converge} that (\ref{mat_rep_series}) always converges. If the program $\while$ almost surely terminates on all initial states (i.e., $P = I$), then (\ref{mat_rep_series}) can be computed directly by
  \begin{equation}\label{mat_rep_series_finite}
  \sum_{k=0}^\infty M_P R^k = \sum_{k=0}^\infty R^k = (I \otimes I - R)^{-1}. 
  \end{equation}
  In the case that $P \neq I$, the series (\ref{mat_rep_series}) can be computed using the Jordan decomposition-based technique introduced in \cite{Ying13}. 

Suppose that the Jordan decomposition of 
$R$ is $$R = A J(R) A^{-1}$$ where $J(R)$ is the Jordan normal form of $R$ such 
that
\begin{equation*}
  J(R) = \bigoplus_{i = 1}^l J_{k_{i}}(\lambda_i) 
  = \diag(J_{k_1}(\lambda_1), J_{k_2}(\lambda_2), \dots, J_{k_l}(\lambda_l))
\end{equation*}
where $J_{k_i}(\lambda_i)$ is a $k_i \times k_i$-Jordan block of eigenvalue $\lambda_i$.
Since all super-operators considered in this paper do not increase the trace: $\mathit{tr}(\EE(\rho))\leq\mathit{tr}(\rho)$ for partial density operators $\rho$, we have:  
\begin{lemma}[cf. Lemma 4.1 in \cite{Ying13}] \quad
  \label{lemm_eig_le_1}
  \begin{enumerate}
    \item The eigenvalues satisfy: $\abs{\lambda_i} \leq 1$ for $1 \leq i \leq l$.
    \item If $\abs{\lambda_i} = 1$, then the dimension of the $i$th Jordan block is $k_i = 1$.
  \end{enumerate}
\end{lemma}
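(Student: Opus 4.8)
The plan is to recognise $R$ as the matrix representation of a single super-operator and then read off its spectral data from properties of that operator. Write $\FF = \sem{S}\circ\EE_1$ for the super-operator implementing one full iteration of the loop (outcome $1$ of the guard measurement followed by the body $S$). Because the matrix representation turns composition of super-operators into multiplication of matrices — apply Lemma \ref{lemm_mrep1} twice and use that $A\mapsto(A\otimes I)\ket{\Psi}$ is a linear bijection (Lemma \ref{lemm_mrep0}) — we get $R = M_{\EE_1^*}M_{\sem{S}^*} = M_{\EE_1^*\circ\sem{S}^*} = M_{\FF^*}$, where $\FF^*=(\sem{S}\circ\EE_1)^*=\EE_1^*\circ\sem{S}^*$ is the dual of $\FF$. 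More precisely, writing $\mathrm{vec}(A)=(A\otimes I)\ket{\Psi}$, Lemma \ref{lemm_mrep1} gives $R\,\mathrm{vec}(A)=\mathrm{vec}(\FF^*(A))$ for every $d\times d$ matrix $A$; since $\mathrm{vec}$ is a linear isomorphism, $R$ is similar to $\FF^*$ regarded as a linear operator on the $d^2$-dimensional space of matrices. Consequently $R$ and $\FF^*$ share eigenvalues and Jordan block structure, so it suffices to prove both claims for $\FF^*$.

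Next I would establish that $\FF^*$ is a contraction in the operator norm. The body semantics $\sem{S}$ is a completely positive, trace-non-increasing map (Lemma \ref{lem-structural}), and $\EE_1(\rho)=M_1\rho M_1^\dagger$ is trace-non-increasing since $M_1^\dagger M_1\le I$; hence $\FF$ is completely positive with $\tr(\FF(\rho))\le\tr\rho$ for all $\rho\ge 0$. Dualising, $\tr(\FF^*(I)\rho)=\tr(\FF(\rho))\le\tr\rho$ for all $\rho\ge 0$ forces $0\le\FF^*(I)\le I$, so $\FF^*$ is a completely positive, sub-unital map. By the Russo--Dye/Kadison bound a positive map attains its norm at the identity, so $\norm{\FF^*}=\norm{\FF^*(I)}\le 1$; that is, $\norm{\FF^*(A)}\le\norm{A}$ for every matrix $A$.

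From this contraction the two claims follow. For (1), if $\FF^*(A)=\lambda A$ with $A\ne 0$ then $\abs{\lambda}\norm{A}=\norm{\FF^*(A)}\le\norm{A}$, so $\abs{\lambda}\le 1$, and by similarity $\abs{\lambda_i}\le 1$ for every eigenvalue of $R$. For (2), suppose some $\lambda$ with $\abs{\lambda}=1$ had a Jordan block of size at least two; then there are matrices $A_0\ne 0$ and $A_1$ with $\FF^*(A_0)=\lambda A_0$ and $\FF^*(A_1)=\lambda A_1+A_0$, whence $(\FF^*)^n(A_1)=\lambda^n A_1+n\lambda^{n-1}A_0$. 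Using $\abs{\lambda}=1$, the reverse triangle inequality yields $\norm{(\FF^*)^n(A_1)}\ge n\norm{A_0}-\norm{A_1}\to\infty$, contradicting $\norm{(\FF^*)^n(A_1)}\le\norm{A_1}$ obtained by iterating the contraction. Hence every Jordan block of $\FF^*$, and therefore of $R$, at a modulus-one eigenvalue is $1\times 1$, i.e. $k_i=1$.

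The step I expect to be the main obstacle is the norm contraction of $\FF^*$. The hint that the super-operators do not increase the trace only controls $\tr(\FF(\rho))$ for positive $\rho$, whereas eigenoperators belonging to non-real eigenvalues are in general neither positive nor Hermitian, so the trace bound on the positive cone does not by itself bound $\norm{\FF^*(A)}$ for such $A$. The essential work is therefore upgrading ``trace-non-increasing on the positive cone'' to a genuine norm inequality valid on all operators — either via the Russo--Dye/Kadison inequality for sub-unital positive maps in the operator norm (the route above), or dually via the fact that a trace-non-increasing completely positive map contracts the trace norm on the whole matrix algebra. Everything else is the routine intertwining and a generalised-eigenvector growth estimate.
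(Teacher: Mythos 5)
Your proof is correct, and it is a genuinely different (and self-contained) route from the paper's. The paper does not prove this lemma in-house at all: it defers to Lemma 4.1 of \cite{Ying13} (Lemma 5.1.10 of \cite{Ying16}), where the argument runs through \emph{power-boundedness}: the iterates $R^n$ are matrix representations of the trace-non-increasing super-operators $(\EE_1^*\circ\sem{S}^*)^n$, and since density operators span the whole matrix algebra, the trace bound on the positive cone makes $\{R^n\}_{n\geq 0}$ a bounded family; a power-bounded matrix has spectral radius at most $1$ and only trivial Jordan blocks on the unit circle. Your route instead identifies $R$, via the vectorization isomorphism $A \mapsto (A \otimes I)\ket{\Psi}$, with the dual super-operator $\FF^* = \EE_1^* \circ \sem{S}^*$ acting on matrix space, shows $\FF^*$ is completely positive and sub-unital because $\FF$ is trace-non-increasing, and invokes Russo--Dye to get the genuine contraction $\norm{\FF^*(A)} \leq \norm{A}$, after which the eigenvector and Jordan-chain estimates finish both claims. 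You also correctly isolated the one real subtlety --- trace-non-increase only constrains the positive cone, while eigenoperators of non-real eigenvalues need not be positive --- which is exactly the obstacle both proofs must clear: the cited proof clears it by spanning/boundedness, yours by duality and contraction, which is the sharper conclusion (norm exactly bounded by $1$, not merely uniformly bounded powers). One cosmetic simplification: since $\FF^*$ is completely positive, you do not need Russo--Dye in its general positive-map form; writing $\FF^*(A) = \sum_i K_i^\dagger A K_i$ with $\sum_i K_i^\dagger K_i \sqsubseteq I$, a direct Cauchy--Schwarz estimate on $\abs{\bra{u}\FF^*(A)\ket{v}}$ already yields $\norm{\FF^*(A)} \leq \norm{A}$, keeping the argument entirely elementary.
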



It is known from matrix analysis \cite{horn2012matrix} that whenever some eigenvalue $\lambda_i$ has module $1$, $\sum_{k=0}^\infty R^k$ will be divergent. Fortunately, we can remove these eigenvalues by changing $R$ to $$N=AJ(N)A^{-1}$$ where the Jordan normal form $J(R)$ of $R$ is replaced by $$J(N)=\diag(J^\prime_{k_1}(\lambda_1), J^\prime_{k_2}(\lambda_2), \dots, J^\prime_{k_l}(\lambda_l)),$$ $$J^\prime_{k_i}(\lambda_i)=\begin{cases}0\ &{\rm if}\ |\lambda_i|=1,\\ J_{k_i}(\lambda_i) &{\rm otherwise},\end{cases}$$
that is, $J^\prime_{k_i}(\lambda_i)$ is the same as the Jordan block $J_{k_i}(\lambda_i)$ of $J(R)$ when the module of its eigenvalue is less than $1$, but whenever eigenvalue $\lambda_i$ has module $1$, then the corresponding $1$-dimensional Jordan block is simply replaced by $0$. The following lemma guarantees that such a modification is feasible: 

\begin{lemma}[cf. Lemma 4.2 in \cite{Ying13}]
  \label{lemm_reduce_jnf}
  $M_{P} R^k = M_{P} N^k$ for all integers $k \geq 0$.
\end{lemma}


As a result of the above lemma, the infinite series of matrices in equation (\ref{inf-series}) can be computed as follows: 
\begin{equation}\label{inf-series2}
  \sum_{k=0}^\infty M_{P} R^k 
  = M_{P} \sum_{k=0}^\infty N^k
  = M_{P}(I\otimes I-N)^{-1}.
\end{equation}
Plugging (\ref{inf-series2}) into (\ref{inf-series}), we obtain:
\begin{equation}\label{fin-formula}\begin{split}
   & (\ert[\while]\otimes I) \ket{\Psi} \\
  & =  \left( \sum_{k=0}^\infty M_{P} R^k \right)
  \cdot \left( (I + \EE_1^*(\ert[S])) \otimes I \right) \ket{\Psi}\\
  &=  M_{P}(I\otimes I-N)^{-1} 
  \cdot \left( (I + \EE_1^*(\ert[S])) \otimes I \right) \ket{\Psi}. 
\end{split}\end{equation}
Since $\ert[S]$ is assumed to be given, $(\ert[\while]\otimes I) \ket{\Psi}$ is computed now. Furthermore, using Lemma \ref{lemm_mrep0}, $\ert[\while]$ can be retrieved from $(\ert[\while]\otimes I) \ket{\Psi}$.  

{\vskip 3pt} 

To conclude this section, we remark that for an arbitrary quantum program $S$, its expected runtime $\ERT[S]$ can be inductively computed by the above procedure combined with Definition \ref{def_ert}.
Furthermore, by Theorems \ref{main_th1} and \ref{main_th2}, either $\ERT[S](\rho)=\infty$, or it can be computed as $\ERT[S](\rho) = \tr (\ert[S] \cdot \rho).$

\section{Case Studies}
\label{sec:case_studies}

In this section, we present two more sophisticated examples to show the power of our method for computing the expected runtimes of quantum programs developed in the last section. 

\subsection{Quantum Bernoulli Factory}
\label{sub:quantum_bernoulli_factory}
Classical Bernoulli Factory (CBF) is an algorithm for generating random numbers. More precisely, 
it simulates a new coin that has probability $f(p)$ of heads given a coin
with unknown probability $p$ of heads, where $f:[0,1] \to [0,1]$ is a function.

Quantum Bernoulli Factory  (QBF) was proposed in \cite{DJR15} as a quantum counterpart of CBF. Comparing to CBF, QBF can utilize a quantum coin like 
$\ket{p}  = \sqrt{p} \ket{0} + \sqrt{1-p} \ket{1}.$
It was proved in \cite{DJR15} that QBF can simulate a strictly
larger class of function $f$ than CBF. 

{\vskip 3pt}

\textbf{Quantum Program QBF}: An example that QBF can simulate
but CBA cannot is:
\begin{equation*}
  f(p) = 1 - \abs{2p-1} = 
  \begin{cases}
    2p & p \in [0, 1/2]; \\
    2(1-p) & p \in (1/2, 1]. 
  \end{cases}
\end{equation*}
The key of simulating $f$ is to simulate
$ f'(p) = (2p-1)^2 $.
To this end, we construct a quantum coin
  $$\ket{f'(p)} = (2p-1) \ket{0} + 2\sqrt{p(1-p)}\ket{1}$$
using the following program
\begin{equation}\label{QBF}\begin{split}
  {\rm QBF} \equiv\ & q_1 := \ket{1}; q_2 := \ket{1}; \ 
   \while\ M[q_2]=1\ \wdo\ S\ \wod
\end{split}\end{equation}
where $M= \{ \ket{0}\bra{0} , \ket{1}\bra{1}\}$ is the measurement on a qubit in the computational basis, and the loop body is 
  \begin{equation}\label{QBF-i}S  \equiv\ q_1:=\ket{p}; q_2 := \ket{p}; q_1,q_2 := U[q_1,q_2]\end{equation}
  with unitary transformation: $$U = 
  \begin{pmatrix}
    \frac{1}{\sqrt{2}} & 0 & 0 & -\frac{1}{\sqrt{2}} \\
    \frac{1}{\sqrt{2}} & 0 & 0 & \frac{1}{\sqrt{2}} \\
    0 & \frac{1}{\sqrt{2}} & \frac{1}{\sqrt{2}} & 0 \\
    0 & \frac{1}{\sqrt{2}} & -\frac{1}{\sqrt{2}} & 0 \\
  \end{pmatrix}.$$

Initialisation in $\ket{1}$ and $\ket{p}$ can be realised as follows:   
\begin{align*}
  q_j & := \ket{1} \equiv q_j := \ket{0}; q_j := X[q_j] \\
  q_j & := \ket{p} \equiv q_j := \ket{0}; q_j := U_p[q_j]
\end{align*}
where $U_p$ is a unitary operator such that $$U_p \ket{0} = \sqrt{p} \ket{0} + \sqrt{1-p} \ket{1}.$$
It can be shown by a simple calculation that after every execution of $S$, the state of $q_1q_2$ is
  $$ \ket{q_1q_2} = \frac{1}{\sqrt{2}} (2p-1) \ket{00} + \frac{1}{\sqrt{2}} \ket{01} + \sqrt{2p(1-p)} \ket{10}.$$
 Thus, whenever the loop 
terminates, the state of $q_1$  is
  $$\ket{q_1} = (2p-1) \ket{0} + 2 \sqrt{p(1-p)} \ket{1},$$
which is exactly $\ket{f'(p)}$. 

Using the techniques in \cite{Ying16}, it can be verified that program QBF is almost surely terminating. 

{\vskip 3pt}

\textbf{Expected Runtime of QBF}: Now we further compute the expected runtime of program QBF using the method developed in the previous section. 
We first compute the expected runtime of the \textbf{while}-loop in QBF. Let
\begin{equation*}
  W \equiv \while\ M[q_2] = 1\ \wdo\ S\ \wod,
\end{equation*}
a direct application of formula (\ref{fin-formula}) yields:  
\begin{align*}
  (\ert[W] \otimes I) &\ket{\Psi} 
   = (I \otimes I - M_{\EE_{1}^*} M_{\sem{S}^*})^{-1}(I \otimes I) \ket{\Psi} \\
  & + (I \otimes I - M_{\EE_{1}^*} M_{\sem{S}^*})^{-1}(\EE_{1}^*(\ert[S]) \otimes I) \ket{\Psi}
\end{align*}
The left-hand side of the above equation is the vector representation of $\ert[W]$ (see Lemma \ref{lemm_mrep0}). The first term in the right-hand side is the expected time that measurement $M$ takes, and the second term is
the expected runtime that the loop body takes. 

It is easy to see that $\ert[S] = 5 \cdot I$ by the definition of $\ert$. Therefore, the runtime observable is:  
\begin{align*}
  \ert[W] &= 
  \begin{pmatrix}
    1 & 0 & 0 & 0 \\
    0 & 3 & 0 & 0 \\
    0 & 0 & 1 & 0 \\
    0 & 0 & 0 & 3 
  \end{pmatrix}
  +
  \begin{pmatrix}
    0 & 0 & 0 & 0 \\
    0 & 10 & 0 & 0 \\
    0 & 0 & 0 & 0 \\
    0 & 0 & 0 & 10 
  \end{pmatrix} \\
  &=
  \begin{pmatrix}
    1 & 0 & 0 & 0 \\
    0 & 13 & 0 & 0 \\
    0 & 0 & 1 & 0 \\
    0 & 0 & 0 & 13 
  \end{pmatrix}
\end{align*}
Hence, by Definition \ref{def_ert} we obtain the runtime observable of QBF:
  $\ert[\text{QBF}] = 17 \cdot I,$ where $I$ is the identity operator. Thus, the expected runtime of QBF is $\ERT(QBF)(\rho)=17$ for all density operators.  
It is interesting to see that the expected runtime is independent of the probabilistic parameter $p$.

\subsection{Quantum Random Walk}
\label{sub:quantum_random_walk}

We were able to show in Example \ref{ex-rw-EXT} by Definition \ref{def_EXT}  that the expected runtime of a simple quantum random walk, namely Hadamard walk, initialised in direction $L$ and position $1$, is $\infty$. In this subsection, we consider a more complicated quantum random walk, a quantum walk on an $(n+1)$-circle with absorbing boundaries at positions $0$ and $n$, of which the expected runtime is hard to compute using Definition \ref{def_EXT}  directly. Instead, we will compute it using the method introduced in the previous section. 

{\vskip 3pt}

\textbf{Quantum Program QW}: The quantum coin is the same as before, with $\HH_q$ spanned by the orthonormal basis $\{ \ket{L}, \ket{R} \}$ as its state Hilbert space. But the position space is an $n$-dimensional Hilbert space $\HH_p$ with orthonormal basis
$\{ \ket{0}, \ket{1},$ $\dots, \ket{n} \}$, where basis state $\ket{i}$ is used to denote position $i$ on the circle. The state space of quantum walk
is $\HH \stackrel{\triangle}{=} \HH_q \otimes \HH_p$. Each step of the quantum walk consists of: 
\begin{enumerate}
  \item Measure the position of the system to see whether it is the absorbing boundary $0$ or $n$. If it is the case, the walk terminates; otherwise, it continues. Mathematically,  the measurement is described as $M = \{ M_0, M_1 \}$, where
    $$
      M_0 = \ket{0}_p\bra{0} + \ket{n}_p\bra{n}, \quad M_1 = I_p - M_0 = \sum_{k = 1}^{n-1} \ket{k}_p\bra{k}.
    $$
  \item Toss the coin by applying an operator $T$ on $\HH_q$: 
    \begin{equation*}
      T = 
      \begin{pmatrix}
        a & \ b^* e^{i\phi}\\
        b & \ -a^* e^{i\phi}
      \end{pmatrix}
    \end{equation*} where $a,b$ are complex numbers satisfying the normalisation condition: $\abs{a}^2 + \abs{b}^2 = 1.$ Note that the coin tossing operator here is a general $2\times 2$ unitary operator rather than the Hadamard gate $H$.  
  \item Shift the position to the left or right according to the state of  
    coin. The shift operator is given as 
    \begin{equation*}
      S = \sum_{i = 0}^{n} \ket{L} \bra{L} \otimes \ket{i \ominus 1} \bra{i}
      + \sum_{i = 0}^{n} \ket{R} \bra{R} \otimes \ket{i \oplus 1} \bra{i}
    \end{equation*}
    where $\oplus$ and $\ominus$ are addition and subtraction modulo $n + 1$. Note that addition and subtraction modulo $n+1$ are used here because the walk is on an $(n+1)$-circle.   
\end{enumerate}

The above process can be formally described as the following quantum loop:
\begin{equation}\label{n-QW}
  QW_n \equiv \while\ M[p] = 1\ \wdo\ q := T[q]; q,p := S[q, p]\ \wod
\end{equation}

{\vskip 3pt}

\textbf{Expected Runtime of QW}: The expected runtime of $QW_n$ has been an open problem since \cite{Ambainis2001}, and it was proved in \cite{Ying13} to be $n$ for a special initial state with $n < 30$.  
Here, we compute $\ERT(QW_n)$ for the general case using the method developed in the previous section. 
To this end, let us write: $$QW_n' \equiv q := T[q]; q,p := S[q, p]$$ for its loop body. 
  It was shown in Theorem 4.5 of \cite{kuklinski2018absorption} that $QW_n$ almost surely terminates on all computational basis states. Based on this, we can prove that $QW_n$ almost surely terminates on all initial states by Lemma \ref{lemm_term_space}, and thus has a finite expected runtime by Theorem \ref{main_th2}.
Then by formulas  (\ref{fin-formula}) and (\ref{mat_rep_series_finite}), the runtime observable of $QW_n$ can be computed as follows:
\begin{align*}
  &(\ert[QW_n] \otimes I) \ket{\Psi} 
   = (I \otimes I - E \otimes E^*)^{-1}(I \otimes I) \ket{\Psi} \\
  & \qquad\qquad + (I \otimes I - E \otimes E^*)^{-1}(\EE_{1}^*(\ert[QW_n']) \otimes I) \ket{\Psi}
\end{align*} where $$E = M_1^\dagger (T \otimes I)^\dagger S^\dagger.$$

We are more interested in the first term in the right-hand side of
the above equation because it is actually the expected steps that the quantum random walk goes
plus one. Let $$Q_n = \sum_{k = 0}^{\infty} (\EE_1^* \circ \sem{QW'}^*)^{k}(I).$$ Then 
we have: $$(Q_n \otimes I) \ket{\Psi} = (I \otimes I - E \otimes E^*)^{-1} \ket{\Psi}.$$
Consequently, $\bra{L,k} Q_n \ket{L, k}$ and $\bra{R, k} Q_n \ket{R, k}$ are 
exactly the expected steps that the quantum random walk takes when it  
begins with states $\ket{L, k}$ and $\ket{R, k}$, respectively.

In practice, it is a bit difficult to calculate $Q_n$ if $n$ is treated as an abstract 
parameter. However, we can easily calculate $Q_n$ when $n$ is a given integer. 
Moreover, we can guess a pattern $X$ of $Q_n$ for arbitrary $n$ from 
these results of some given values of $n$. Then, it is sufficient to show that
the pattern that we guessed is exactly $Q_n$. Suppose that we have a matrix $X$ such that $$I + E X E^\dagger = X.$$ Then we obtain: 
\begin{align*}
  (X \otimes I) \ket{\Psi} 
  & = ((I + E X E^*)\otimes I) \ket{\Psi} \\
  & = \ket{\Psi} + (E \otimes E^*)(X \otimes I)\ket{\Psi}
\end{align*}
by Lemma \ref{lemm_mrep1}. The previous equation is equivalent to 
\begin{equation*}
  (I \otimes I - E \otimes E^*) (X \otimes I) \ket{\Psi} = \ket{\Psi}
\end{equation*}
Since this quantum random walk is almost surely terminating,
$(I \otimes I - E \otimes E^*)$ is invertible. Thus, we have:
$$(Q_n \otimes I) \ket{\Psi} = (X \otimes I) \ket{\Psi}.$$
Note that $(Q_n \otimes I) \ket{\Psi}$ is an encoding of $Q_n$ 
according to Lemma \ref{lemm_mrep0}. Then we can conclude that
$Q_n = X$ by Lemma \ref{lemm_mrep0}.  

Now we compute $Q_n$ in the following four steps:

{\vskip 3pt}

\textbf{\textit{Step 1: Reduce to real coin tossing operators}}: The entries of coin tossing operator $T$ are allowed to be complex numbers. But we can show that it is sufficient to deal with 
the case where all entries of $T$ are reals. Since $T$ is unitary, by the $Z$-$Y$ decomposition (see Theorem 4.1 in \cite{NC}), we can find reals $x, y, \alpha, \beta, \delta$ so that
\begin{equation*}
  T = e^{i\alpha}
  \begin{pmatrix}
    e^{-i(\beta + \delta)} x & e^{-i(\beta - \delta)} y \\
    e^{i(\beta - \delta)} y & -e^{i(\beta + \delta)} x
  \end{pmatrix}, x^2 + y^2 = 1.
\end{equation*}
Note that $EXE^\dagger$ is irrelevant to $\alpha$, we can assume $\alpha = 0$
here.

Consider another quantum walk that uses the coin-tossing operator: 
\begin{equation*}
  T' = 
  \begin{pmatrix}
    x & y \\
    y & -x
  \end{pmatrix} 
\end{equation*}
Note that all entries of matrix $T^\prime$ are reals. 
Let $$E' = M_1^\dagger (T' \otimes I)^\dagger S^\dagger,$$ $$(Q_n' \otimes I) \ket{\Psi} = (I \otimes I - E' \otimes {E'}^\dagger)^{-1}\ket{\Psi}.$$ Then we have: 

\begin{lemma}
  \label{lemm_real_is_enough}
  $Q_n$ and $Q_n'$ are related by a unitary operator $P$: 
  \begin{equation*}
    Q_n = P^\dagger Q_n' P
  \end{equation*}
  where: 
$$    P = 
    \begin{pmatrix}
      P_L & \zeromat \\
      \zeromat & P_R
    \end{pmatrix}, $$
    $$\quad P_L = \sum_{k = 0}^{n} e^{-i[(\beta + \delta)k + 2 \delta]} \ket{k}\bra{k},\  
    P_R = \sum_{k = 0}^{n} e^{-i(\beta + \delta) k} \ket{k} \bra{k}.$$
\end{lemma}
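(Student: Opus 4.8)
The plan is to reduce the identity $Q_n = P^\dagger Q_n' P$ to a single intertwining relation between the one-step operators, and then to verify that relation by a direct computation on basis states. Recall that $Q_n = \sum_{k=0}^{\infty}(\EE_1^* \circ \sem{QW_n'}^*)^{k}(I)$, and that the super-operator $\EE_1^*\circ\sem{QW_n'}^*$ acts as $A \mapsto E A E^\dagger$ for the one-step operator $E = M_1^\dagger (T\otimes I)^\dagger S^\dagger$ (this is the content of the relation $I + E X E^\dagger = X$ used above). Hence, at the operator level, $Q_n = \sum_{k=0}^{\infty} E^{k}(E^\dagger)^{k}$, and likewise $Q_n' = \sum_{k=0}^{\infty} E'^{k}(E'^\dagger)^{k}$ for $E' = M_1^\dagger (T'\otimes I)^\dagger S^\dagger$ (both series converge because the walk is almost surely terminating). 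The key point is that if $P$ is unitary and $P E = E' P$, equivalently $E = P^\dagger E' P$, then $E^{k} = P^\dagger E'^{k} P$ and $(E^\dagger)^{k} = P^\dagger (E'^\dagger)^{k} P$, so $E^{k}(E^\dagger)^{k} = P^\dagger E'^{k}(E'^\dagger)^{k} P$ for every $k$; summing over $k$ yields $Q_n = P^\dagger Q_n' P$ at once. It therefore suffices to prove the intertwining relation $P E = E' P$.

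To set this up, I would first use the phase decomposition already prepared in the text: discarding the global phase $\alpha$ (legitimate, since $EXE^\dagger$ is independent of $\alpha$), the $Z$-$Y$ form gives $T = D_1 T' D_2$ with coin-diagonal unitaries $D_1 = \diag(e^{-i\beta}, e^{i\beta})$ and $D_2 = \diag(e^{-i\delta}, e^{i\delta})$. Because $D_1\otimes I$ is coin-diagonal it commutes with the coin-controlled shift $S$, while $D_2\otimes I$ commutes with the position projection $M_1$; this already exhibits $E = (D_2^\dagger\otimes I)\,E'\,(D_1^\dagger\otimes I)$, so that $E$ and $E'$ differ only by coin phases. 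The purpose of the position-linear phases $e^{-i(\beta+\delta)k}$ carried by $P_L$ and $P_R$ is to gauge these coin phases away: since $S^\dagger$ moves the walker by exactly one unit of position, a position-linear phase absorbs the per-step coin phase, converting the asymmetric factorization above into the symmetric form $P^\dagger E' P$.

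Concretely, I would verify $P E = E' P$ by evaluating both sides on $\ket{L,k}$ and $\ket{R,k}$, using $S^\dagger\ket{L,k}=\ket{L,k\oplus 1}$, $S^\dagger\ket{R,k}=\ket{R,k\ominus 1}$, the real symmetric action of $T'$, and the annihilation of position $0$ by $M_1$. Writing $P\ket{c,k} = \mu_c\, e^{-i(\beta+\delta)k}\ket{c,k}$ and $\omega = e^{-i(\beta+\delta)}$, the coefficient on each target vector reduces to the phase identity $e^{\pm i(\beta+\delta)}\,\omega^{\,k\pm 1} = \omega^{k}$ on the diagonal entries, while matching the off-diagonal entries of the $L$- and $R$-equations forces a fixed value of the ratio $\mu_L/\mu_R$, which is exactly the relative constant built into $P_L$ and $P_R$. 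This phase matching is the calculational core of the lemma.

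The step I expect to be the genuine obstacle is the behaviour at the absorbing boundary and at the seam of the circle. First, whenever $S^\dagger$ carries a position onto the absorbing site $0$, both $E\ket{c,k}$ and $E'\ket{c,k}$ vanish through $M_1$, so these boundary basis states must be treated separately and both sides of $P E = E' P$ shown to be zero there. Second, the phase $e^{-i(\beta+\delta)k}$ is not periodic modulo $n$, so a naive conjugation would break down across the seam between positions $n-1$ and $1$; the resolution is that the absorbing boundary at $0$ confines the surviving dynamics to the arc $\{1,\dots,n-1\}$, both of whose ends feed into $0$, so the non-periodic phase is never required to match across the seam. Once these boundary cases are discharged and the unitarity of $P$ is recorded, the intertwining relation, and hence $Q_n = P^\dagger Q_n' P$, follows.
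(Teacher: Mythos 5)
Your route is the same as the paper's: reduce the lemma to the one‑step intertwining $E = P^\dagger E' P$ and then check that relation by direct computation. The paper does the reduction via uniqueness of the solution of $X = I + EXE^\dagger$ and checks that $P^\dagger Q_n' P$ solves it, which is the same bookkeeping as your term‑by‑term conjugation of the series $Q_n = \sum_k E^k (E^\dagger)^k$; your factorization $E = (D_2^\dagger \otimes I)\, E'\, (D_1^\dagger \otimes I)$ from the $Z$-$Y$ form (coin‑diagonal unitaries commute with $S$, and $D_2 \otimes I$ commutes with $M_1$) is correct and is a clean way to organize the verification, which the paper instead does blockwise.

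The genuine gap is at the seam, exactly where you anticipated trouble, and your proposed resolution is wrong: the full intertwining $PE = E'P$ is \emph{false} for the position‑linear phases of the lemma. Your boundary analysis covers the states that step \emph{onto} the absorbing site, and there you are right that both sides vanish ($E\ket{L,n-1} = E\ket{R,1} = 0$, likewise for $E'$); but it does not cover the states \emph{at} position $0$, which $E$ does not annihilate, because in $E = M_1^\dagger (T\otimes I)^\dagger S^\dagger$ the projection acts on the output side. Concretely, with $\alpha = 0$, $a = e^{-i(\beta+\delta)}x$, $b = e^{-i(\beta-\delta)}y$, one has $E\ket{R,0} = b\ket{L,n-1} - a\ket{R,n-1} \neq 0$, which is precisely the seam transition; matching the $\ket{R,n-1}$ coefficients of $PE\ket{R,0}$ and $E'P\ket{R,0}$ forces $a\, p_R(n-1) = x\, p_R(0)$, i.e. $e^{-i(\beta+\delta)n} = 1$ for the phases $p_R(k) = e^{-i(\beta+\delta)k}$, which fails for generic $\beta,\delta,n$. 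So the linear phase \emph{is} required to match around the circle in the relation you set out to prove, and it cannot. The missing idea that turns your confinement intuition into a proof is a compression step: since $E^\dagger$ annihilates position $0$ and $EE^\dagger = M_1$, the innermost cancellation $S(T\otimes I)\cdot (T\otimes I)^\dagger S^\dagger = I$ gives $E^k (E^\dagger)^k = (W^\dagger)^{k-1} M_1 W^{k-1}$ for all $k \geq 1$, where $W = M_1 S (T\otimes I) M_1$. Hence $Q_n$ depends on $T$ only through the compressed one‑step operator $W$, which has no matrix elements across the seam; since $P$ commutes with $M_1$, it suffices to verify the compressed relation $W = P^\dagger W' P$, in which $p_L(0)$ and $p_R(0)$ never enter. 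One further warning: if you actually carry out the off‑diagonal phase matching, under the paper's parametrization of $T$ the forced relative constant between $P_L$ and $P_R$ is $e^{\pm 2i\delta}$, not the $e^{2\beta}$ of the lemma statement, so your assertion that the forced ratio ``is exactly the relative constant built into $P_L$ and $P_R$'' does not survive the computation; this discrepancy is inherited from the paper itself, whose appendix verifies only the top‑left block, the one block that is insensitive to that constant.
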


With the above lemma, we only need to consider the case where 
$\phi = 0$ and $a, b$ are both reals.

{\vskip 3pt}

\textbf{\textit{Step 2: Find the pattern of $Q_n$}}: To this end, 
we first compute $Q_n$ with the Hadamard coin-tossing operator $T = H$ for some fixed $n$'s. Our results shows  
that $Q_n$ has the form:
\begin{equation}\label{pattern}
  Q_n = 
  \begin{pmatrix}
    A_n & B_n^\dagger \\
    B_n & C_n
  \end{pmatrix}
\end{equation}
where $A_n$, $B_n$ and $C_n$ are all $(n+1) \times (n+1)$ matrices. Moreover, 
we see that the non-zero entries $(A_n)_{i,j}$ and $(C_n)_{i, j}$ 
in $A_n$ and $C_n$ are all quadratic polynomials of $i$ and $j$, 
and the non-zero entries $(B_n)_{i, j}$ in  $B_n$ 
are linear in $i$ and $j$. 

{\vskip 3pt}

\textbf{\textit{Step 3: Solution of $Q_n$}}: Finally, we can present a solution of $Q_n$ for a general coin tossing operator $T$ with real entries. We use pattern (\ref{pattern}), so what we need to compute are matrices $A_n$, $B_n$, and $C_n$.
Let
\begin{align*}
  f_n(j, k) & = (-1)^{\frac{j - k}{2}} \cdot \frac{b^2}{a^2} \cdot (j \bmod n) \cdot (n - 1 - k), \\
  h_n(j, k) & = (-1)^{\frac{j - k}{2}} \cdot \frac{b}{a} \cdot (j + k - n).
\end{align*}
 The solutions of $A_n, B_n$ and $C_n$ can be given as follows:
\begin{align*}
  (A_n)_{j, k} & = 
  \begin{cases}
    \begin{aligned}&f_n(j, j) + 1 \\
    &+(j \bmod n)\end{aligned} & j = k, \\
    f_n(j, k) & 0 < j < k < n \text{ and } 2 \mid (j - k), \\
    f_n(k, j) & 0 < k < j < n \text{ and } 2 \mid (j - k), \\
    0 & \text{otherwise},
  \end{cases} \\
  (B_n)_{j, k} & =
  \begin{cases}
    h_n(j, k) & 0 < j \leq k < n \text{ and } 2 \mid (j - k), \\
    0 & \text{otherwise},
  \end{cases}\\
  (C_n)_{j, k} & = (A_n)_{n - j, n - k}.
\end{align*}

\textbf{\textit{Step 4: Verification of $Q_n$}}: The correctness of the above solutions can be verified by checking the following equation:
\begin{equation*}\begin{aligned}
  &\begin{pmatrix}
    A_n & B_n^\dagger \\
    B_n & C_n
  \end{pmatrix}
  = I + E 
  \begin{pmatrix}
    A_n & B_n^\dagger \\
    B_n & C_n
  \end{pmatrix}
  E^\dagger\\ 
  & \qquad = 
  I + 
  \begin{pmatrix}
    M_1 & 0 \\
    0 & M_1
  \end{pmatrix}
  \begin{pmatrix}
    a S_L^\dagger & b S_L \\
    b S_L^\dagger & -a S_L
  \end{pmatrix}
  \begin{pmatrix}
    A_n & B_n^\dagger \\
    B_n & C_n
  \end{pmatrix} \\
  &\qquad\qquad\qquad \cdot
  \begin{pmatrix}
    a S_L & b S_L \\
    b S_L^\dagger & -a S_L^\dagger
  \end{pmatrix}
  \begin{pmatrix}
    M_1 & 0 \\
    0 & M_1
  \end{pmatrix},\end{aligned}
\end{equation*}
where $S_L = \sum_{i = 0}^{n} \ket{i \ominus 1} \bra{i}.$
This equation can be divided into four equations for the sub-matrices. Each of them can be checked directly by matrix calculation. As an example, we check
the equation for the top left submatrix, which is
\begin{multline*}
  A_n = I_n + M_1 (a^2 S_L^\dagger A_n S_L + b^2 S_L C_n S_L^\dagger\\ 
  + ab S_L B_n S_L + ab S_L^\dagger B_n^\dagger S_L^\dagger)M_1
\end{multline*}
According to the construction of $B_n$, there are two cases which 
depend on whether the element of $A_n$ is on the diagonal. These two cases can be
reduced to the equations:
\begin{equation} \label{fun-1}
  \begin{split}
    (A_n)_{j, j} & = 1 + a^2 \cdot (A_n)_{j - 1, j - 1} + b^2 \cdot (A_n)_{n-(j+1), n-(j+1)}, \\
    (A_n)_{j, k} & = a^2 \cdot (A_n)_{j - 1, k - 1} + b^2 \cdot (C_n)_{j + 1, k + 1}  - ab \cdot (B_n)_{j + 1, k - 1}
  \end{split}
\end{equation}
for $0 < j < k < n$.
Both of the above equations can be checked by simple calculation. As a result, we have:
\begin{proposition}The expected steps of $QW_n$ starting from state $\ket{L, k}$ and $\ket{R,k}$ are:
\begin{align*}
  \bra{L,k} Q_n \ket{L, k} &= f_n(k, k) + (k \bmod n) + 1,\\ 
  \bra{R, k} Q_n \ket{R, k}&= f_n(n - k, n - k) + ((n - k) \bmod n) + 1
\end{align*} respectively. 
More generally, if starts from $$|\Psi\rangle=\sum_{k=0}^{n}(\alpha_k |L,k\rangle+\beta_k |R,k\rangle),$$ the expected step of $Q_n$ is: 
\begin{align*}
  & \langle\Psi |Q_n|\Psi\rangle 
   =  \sum_{j = 0}^{n} \Bigg[
  \big(f_{n}(j, j) + (j \bmod n) + 1\big) \\
  & \cdot \Big(\alpha_j^*\alpha_j + \beta_{n - j}^*\beta_{n - j}\Big) \Bigg] +  \sum_{j = 1}^{n - 1} \Bigg[h_n(j, j) (\alpha_j^*\beta_j + \beta_j^*\alpha_j)\Bigg] \\
   & + \underset{\begin{subarray}{c} 0 < j < k < n, 2 \mid (k - j)\end{subarray}} {\sum} \Bigg[ h_n(j, k) \Big(\beta_j^*\alpha_k + \alpha_k^*\beta_j\Big) \\
  & + f_n(j, k)\Big(\alpha_j^*\alpha_k + \alpha_k^*\alpha_j + \beta_{n - j}^*\beta_{n - k} + \beta_{n - k}^*\beta_{n - j}\Big) 
\Bigg]
\end{align*}
\end{proposition}

It deserves to mention
that the first term in Eq. (\ref{fun-1}) coincides with the expected runtime of classical 
random walk on an $n$-circle which moves to $k - 1$ with probability 
$a^2$ and moves to $n - (k + 1)$ with probability $b^2$ from position 
$0 < k < n$.  For example, the expected steps taken by the quantum random walk on 
a $6$-circle starting from state $\ket{L}\ket{k}$ are equal to the expected 
runtime of the classical random walk in Figure \ref{fig_cw} starting from $k$ 
and terminating at $t$.
\begin{figure}[htbp]
  \centering
  \begin{tikzpicture}[shorten >=1pt,node distance=1.2cm,on grid]
    \node[state] (0) [scale=1] {$0$};
    \node[state] (t) [right =of 0, xshift=1cm, scale=1] {$t$};
    \node[state] (1) [below left=of 0, xshift=-1.2cm, yshift=-0.7cm, scale=1] {$1$};
    \node[state] (4) [below right=of 0, xshift=1.2cm, yshift=-0.7cm, scale=1] {$4$};
    \node[state] (2) [below=of 1, xshift=0.8cm, yshift=-1.2cm, scale=1] {$2$};
    \node[state] (3) [below=of 4, yshift=-1.2cm, xshift=-0.8cm, scale=1] {$3$};
    \path[->] 
      (0) edge node [above] {$1$} (t)
      (1) edge node [above left] {$a^2$} (0)
          edge [] node [below left] {$b^2$} (3)
      (2) edge node [left] {$a^2$} (1)
          edge [loop below] node [left] {$b^2$} (2)
      (3) edge node [below] {$a^2$} (2)
          edge [bend right] node [above right] {$b^2$} (1)
      (4) edge node [right] {$a^2$} (3)
          edge node [above right] {$b^2$} (0);
  \end{tikzpicture}
  \caption{Classical random walk related to the quantum random walk on $6$-circle}
  \Description{Classical random walk related to the quantum random walk on $6$-circle}
  \label{fig_cw}
\end{figure}

\section{Conclusion}
In this paper, we defined the expected runtimes of quantum programs as a generalisation of quantum weakest precondition \cite{d2006quantum}.
A representation of the expected runtimes as a quantum observable was presented. 
This representation gives a physical interpretation of the notion of expected runtime. Based on it, we develop an effective method for computing the expected runtimes of quantum programs in finite-dimensional Hilbert spaces using the mathematical tool of the matrix representation of super-operators.
We demonstrated the power of our computational method through several case studies, including the expected runtime of quantum Bernoulli factory --- a quantum algorithm for generating random numbers; in particular, our method is able to compute the expected runtime of quantum walk on an $n$-circle, for arbitrary $n$, an arbitrary quantum coin and an arbitrary initial state, and thus solve an open problem.

The basic idea of this paper came from recent work on the corresponding problem for probabilistic programs \cite{kaminski2016,kaminski2018,Ngo2018}, but the computational method presented in this paper is quite different from there. 
Except for a weakest precondition calculus, a set of proof rules for reasoning about the expected runtime of probabilistic programs was also presented in \cite{kaminski2016,kaminski2018}. It is also possible to develop some similar proof rules for quantum programs by extending quantum Hoare logic \cite{Ying11}.

Our approach to computing the expected runtime of quantum programs is limited to the case of finite-dimensional Hilbert spaces and thus cannot deal with the quantum integer type. Nevertheless, most of the existing quantum algorithms have been designed in the finite-dimensional case, and our results can be applied to them. On the other hand, the infinite-dimensional case is certainly an interesting (and challenging, we believe) topic for future research.

We saw in Subsection \ref{sub:quantum_random_walk} that the computation of the expected runtime of a quantum program can be much more involved than that of a probabilistic program. So, one of the most important topics for future research is  efficient symbolic automation of our computational method; more specifically, for example, how to combine the quantum weakest precondition-style reasoning developed in this paper with the automatic amortised resource analysis (AARA) \cite{Ngo2018,carbonneaux2017automated,celiku2005compositional,hoffmann2015type,hofmann2015multivariate} for computing the expected runtimes of more complicated  quantum programs. 

\begin{acks}
This work was supported by the National Key R\&D Program of China (Grant No: 2018YFA0306701) and the National Natural Science Foundation of China (Grant No: 61832015).
\end{acks}

\newpage

\bibliographystyle{ACM-Reference-Format}
\bibliography{ert}

\onecolumn

\begin{appendix}

{\centering\Large \textbf{Supplementary material and deferred proofs}}

\section{Basics of Quantum Theory}
\label{sec:preliminaries}


Quantum computing acts on microsystems which dominated by quantum mechanics and can be described by linear algebra over the field of complex numbers $\mathbb{C}$.
Analog to the classical logical bit with one of two possible values $0$ or $1$,  a quantum bit (aka qubit) also has two basis states which are conventionally denoted by Dirac—or ``bra–ket''—notation as $\ket{0}$ and $\ket{1}$, representing two column vectors $\ket{0}=\left(\begin{array}{c} 1 \\ 0\end{array}\right)$ and $\ket{1}=\left(\begin{array}{c} 0 \\ 1\end{array}\right)$.
Different from classical state, the pure state of a qubit can also be in a superposition of basis states, i.e., $a_0 \ket{0} + a_1 \ket{1}$, where $a_0, a_1\in\mathbb{C}$ satisfying normalization condition $\abs{a_0}^2 + \abs{a_1}^2 = 1$, for example, $$\ket{+}
\equiv\frac{1}{\sqrt{2}}(|0\rangle+ |1\rangle),\quad \ket{-}\equiv\frac{1}{\sqrt{2}}(|0\rangle- |1\rangle)$$
are also pure states. The set of all states (without normalization condition) equipped with inner product formulate the two-dimensional Hilbert space $\mathcal{H}_2$. 
Similarly, a quantum integer (aka quint) analog to classical integer has infinite basis states $\ket{0}, \ket{1}, \ket{2}, \dots$ and a pure state of quint has form $a_0\ket{0} + a_1\ket{1} +  a_2\ket{2} + \cdots$ with $\sum_{j=0}^\infty \abs{a_j}^2 = 1$. We use $\mathcal{H}_\infty$ to denote the infinite-dimensional Hilbert space.

The evolution of a quantum system is described by linear transformations on Hilbert space. For closed systems, the evolution is deterministic and may be represented by a unitary transformation, i.e., a unitary matrix $U$. For example, the frequently used Hadamard gate $H$ has the matrix form $\frac{1}{\sqrt{2}}\left(\begin{array}{cc}1 & 1\\ 1 & -1\end{array}\right)$, and if we apple it to a qubit state $\ket{0}$, the final state we obtained is $$H\ket{0} = \frac{1}{\sqrt{2}}\left(\begin{array}{cc}1 & 1\\ 1 & -1\end{array}\right)\left(\begin{array}{c} 1 \\ 0\end{array}\right) = \frac{1}{\sqrt{2}}\left(\begin{array}{c} 1 \\ 1\end{array}\right) =  \frac{1}{\sqrt{2}}(|0\rangle+ |1\rangle) = \ket{+}.$$ 

Another important evolution is quantum measurement, which is often used to extract classical information from quantum states. Mathematically, a measurement is described by a collection of matrices $M=\{ M_j \}$ with completeness equation $\sum_j M_j^\dagger M_j = I$ where $I$ is the identity matrix. Performing a measurement $M$ on the state $\ket{s}$ yields an classical outcome $j$ with probability $\norm{M_j \ket{s}}^2$ and the state changes to $M_j \ket{s} / \norm{M_j \ket{s}}$ correspondingly. Consider the measurement in the computational basis, i.e., $M = \{M_0 = \ket{0}\bra{0}, M_1 = \ket{1}\bra{1}\}$ (the matrix form of $M$ is $\left\{\left(\begin{array}{cc}1 & 0\\ 0 & 0\end{array}\right), \left(\begin{array}{cc}0 & 0\\ 0 & 1\end{array}\right)\right\}$), if we measure state $\ket{+}$ using $M$, we have probability $\norm{M_0 \ket{+}} = \frac{1}{2}$ to obtain classical outcome $0$ and the state changes to $M_0 \ket{+} / \norm{M_0 \ket{+}} = \ket{0}$, and similarly we also have the rest probability $\norm{M_1 \ket{+}} = \frac{1}{2}$ to get outcome $1$ with the post-measurement state $\ket{1}$.
It should be point that we can not directly read the state of a quantum program, 
This means that the access of information of quantum state often results in 
changes in the state. 

Quantum measurement brings probabilism to quantum programs, i.e., the quantum state may not completely known. A conventional way to handle this is to use density operator language, where quantum states are described by matrices instead of vectors.  
Consider a quantum system which is in one of the states $\ket{s_j}$ with respective probability $p_j$, its density operator is defined by $$\rho \equiv \sum_j p_j \ket{s_j}\bra{s_j}.$$ For example, after we perform $M = \{M_0 = \ket{0}\bra{0}, M_1 = \ket{1}\bra{1}\}$ on state $\ket{+}$, the post-measurement state is either $\ket{0}$ or $\ket{1}$ with same probability $\frac{1}{2}$, and if we ignore the classical outcome and only focus on the quantum state, such post-measurement state should be described by density operator $$\frac{1}{2}\ket{0}\bra{0}+\frac{1}{2}\ket{1}\bra{1} = \frac{1}{2}\left(\begin{array}{cc}1 & 0\\ 0 & 1\end{array}\right).$$ 
Mathematically, a density operator refers to positive semi-definite matrix $\rho$ with trace (i.e., the sum of diagonal elements) $\tr(\rho) = 1$, and for those with trace less equal 1, we call it partial density operator. Correspondingly, in density operator language, a unitary transformation $U$ maps density operator $\rho$ to $U\rho U^\dag$ where $U^\dag$ is the conjugate-transpose of $U$, and a measurement $M=\{ M_j \}$ applied on $\rho$ yields outcome $j$ with probability $\tr(M_j\rho M_j^\dag)$ and post-measurement state $$M_j\rho M_j^\dag / \tr(M_j\rho M_j^\dag).$$ We may also describe the evolution of an open system deterministically using density operator language, which is known as quantum operation. A well-known 
representation of quantum operations is the operator-sum representation: 
for a quantum operation $\EE$, there exists matrices $\{ E_j \}$ such that 
$I - \sum_j E_j^\dagger E_j$ is positive and for all density operator 
$\rho$ as input, the output state $\EE(\rho)$ after performing $\EE$ is $$\EE(\rho) = \sum_j E_j \rho E_j^\dagger.$$ 

Finally, it is frequently to consider the composite quantum systems made up of two (or more) distinct physical systems. Suppose we have two qubits $q_1$ and $q_2$ and the states of each system is $\ket{\psi_1}$ and $\ket{\psi_2}$ respectively, then the joint state of the total system is described by tensor product $\ket{\psi_1}\otimes\ket{\psi_2}$, and may be written in abbreviation  $\ket{\psi_1}\ket{\psi_2}$ or $\ket{\psi_1, \psi_2}$. Together with the superposition principle, there exists state such as $\ket{\Phi} = \frac{1}{2}(\ket{0}\ket{0}+\ket{1}\ket{1})$ which cannot be written as the tensor product of two single-qubit states $\ket{s_1}$ and $\ket{s_2}$, and we call such states (with property that it cannot be written as a product of states of its component systems) entangled states.


\section{Operational Semantics of Quantum While-Programs}
\label{sub:operational_semantics_of_quantum_while_programs}

The denotational semantics in Lemma \ref{lem-structural} and the $\ERT$ in Definition \ref{def_EXT} can be derived from the operational semantics below.

For quantum program $S$, a quantum configuration is a pair $C = \langle S, \rho \rangle$ where $S$ is
a program or the termination symbol $\downarrow$, and $\rho \in \DD(\HH_S)$
denotes the state of the quantum variables in $S$.
The operational semantics of quantum programs can be defined in a way similar
to classical programs:

\begin{definition}[Operational Semantics \cite{Ying11}]
  \label{def_operational_semantics}
  Let $N$ be the largest number of outcomes of the measurement in the program.
  Let $A = \{\Sk, \In, \Un, \Wh_0, \Wh_1\} \cup \{ \Br_m \mid 0 \leq m < N\}$ be an alphabet which represent the set of program actions.
  The operational semantics of quantum \textbf{while}-programs is a transition
  relation $\to$ between quantum configurations defined by the transition rules
  in Figure \ref{fig 3.1}.

\begin{figure}[h]\centering
  \begin{equation*}
    \begin{split}&({\rm Sk})\ \ \langle\mathbf{skip},\rho\rangle\xrightarrow{\Sk}\langle \downarrow,\rho\rangle \\
&({\rm In})\ \ \ \langle q:=|0\rangle,\rho\rangle\xrightarrow{\In}\langle \downarrow,\rho^{q}_0\rangle\\
&({\rm UT})\ \ \langle\overline{q}:=U[\overline{q}],\rho\rangle\xrightarrow{\Un}\langle \downarrow,U\rho U^{\dag}\rangle \\
&({\rm SC})\ \ \ \frac{\langle S_1,\rho\rangle\xrightarrow{a}\langle S_1^{\prime},\rho^{\prime}\rangle} 
{\langle S_1;S_2,\rho\rangle\xrightarrow{a}\langle S_1^{\prime};S_2,\rho^\prime\rangle}\\
&({\rm IF})\ \ \ \langle\mathbf{if}\ (\square m\cdot M[\overline{q}]=m\rightarrow S_m)\ \mathbf{fi},\rho\rangle\xrightarrow{\Br_m}\langle S_m,M_m\rho M_m^{\dag}\rangle\\
&({\rm L0})\ \ \ \langle\mathbf{while}\ M[\overline{q}]=1\ \mathbf{do}\ S\ \mathbf{od},\rho\rangle\xrightarrow{\Wh_0}
\langle \skp, M_0 \rho M_0^\dagger \rangle\\
&({\rm L1})\ \ \ \langle\mathbf{while}\ M[\overline{q}]=1\ \mathbf{do}\ S\ \mathbf{od},\rho\rangle\xrightarrow{\Wh_1}
\langle S; \mathbf{while}\ M[\overline{q}]=1\ \mathbf{do}\ S\ \mathbf{od}, M_1 \rho M_1^\dagger \rangle
    \end{split}
    \end{equation*}
\caption{Transition Rules. In (In), $\rho^{q}_0=\sum_n|0\rangle_q\langle n|\rho|n\rangle_q\langle 0|$, where $\{|n\rangle\}$ is an orthonormal basis of $\HH_q$.
In (SC), we make the convention $\downarrow;S_2=S_2.$
In (IF), $m$ ranges over every possible outcome of measurement $M=\{M_m\}.$}\label{fig 3.1}
\end{figure}
 \end{definition}
 
The rules (Sk) and (SC) are the same as in classical or probabilistic programming. Other rules are determined by the basic postulates of quantum mechanics. In particular, (IF), (L0) and (L1) are essentially probabilistic, 
but we choose to present them as nondeterministic
transitions, following a convention from \cite{Selinger04}: a probabilistic transition $\langle S,\rho\rangle\stackrel{p}{\rightarrow}\langle S^\prime,\rho^\prime\rangle$ can be identified with the non-probabilistic transition   
$\langle S,\rho\rangle{\rightarrow}\langle S^\prime,\rho^{\prime\prime}\rangle$, where $\rho,\rho^\prime$ are density operators denoting the program states before and after the transition, respectively, $p$ is the probability of the transition, and $\rho^{\prime\prime}=p\rho^\prime$ is a partial density operator. Obviously, transition probability $p$ can be retrieved from partial density operator $\rho^{\prime\prime}$ as its trace: $p=\tr(\rho^{\prime\prime})$. 
For example in rule (IF), a measurement outcome $m$ occurs with probability $p_m=\tr(M_m\rho M_m^\dag)$ and in this case $\rho_m=M_m\rho M_m^\dag/p_m$ will be the state after the measurement.  
They are combined into partial operator $p_m\rho_m=M_m\rho M_m^\dag$ denoting the program state after transition. 
This convention significantly simplifies the subsequent presentation, and its reasonableness is guaranteed by the linearity in quantum mechanics.   


\begin{definition}[Denotational Semantics \cite{Ying11}]
  \label{def_denotational_semantics}
  The denotational semantics of a quantum \textbf{while}-program $S$ is the
  mapping $\sem{S} : \DD(\HH_S) \to \DD(\HH_S)$ defined by
  \begin{equation*}
    \sem{S}(\rho) = \sum_{a \in A^*, \langle S, \rho\rangle \xrightarrow{a}^* \langle \downarrow, \rho'\rangle} \rho'
  \end{equation*}
  for every $\rho \in \DD(\HH_S)$, where $\xrightarrow{\bullet}^*$ is the reflexive and transitive closure of $\xrightarrow{\bullet}$.
\end{definition}

Intuitively, the output of quantum program $S$ on input state $\rho$ is the sum of all partial states $\rho'$ at which the program terminates in a finite number of steps.

\section{Derivation of ERT from Operational Semantics}
\label{sub:derivation_of_ert_from_operational_semantics}

The expected runtime function in Definition \ref{def_EXT} can be obtained from the above operational semantics. We first define the time cost for progressing one step in a quantum program.

\begin{definition}
  The one-step time cost function $c$ is defined by
  \begin{itemize}
    \item $c(\downarrow) = 0$;
    \item $c(\skp) = 0$;
    \item $c(q := \ket{0}) = 1$; 
    \item $c(\oq := U[\oq]) = 1$; 
    \item $c(S_1; S_2) = c(S_1)$;
    \item $c(\ifb(\guard m\cdot M[\oq] = m \to S_m)\ife) = 1$;
    \item $c(\while\ M[\oq]=1\ \wdo\ S\ \wod) = 1$.
  \end{itemize}
\end{definition}

$c(S)$ denotes the time cost for program $S$ to move one step in Definition \ref{def_operational_semantics}. 

Note that the transition relation in Definition \ref{def_operational_semantics} can be split into two functions $T_1$ and $T_2$ such that
\begin{equation*}
  T_1(S, a) = 
  \begin{cases}
    S' & \exists \rho'.\ \langle S, \rho \rangle \xrightarrow{a} \langle S', \rho' \rangle \\
    \skp & \text{otherwise},
  \end{cases}
  \quad \quad \quad
  T_2(S, a, \rho) = 
  \begin{cases}
    \rho' & \exists S'.\ \langle S, \rho \rangle \xrightarrow{a} \langle S', \rho' \rangle \\
    \zeromat & \text{otherwise}.
  \end{cases}
\end{equation*}
Moreover, we can prove that $T_2$ is linear on $\rho$. Then Definition \ref{def_denotational_semantics} can be written as
\begin{equation*}
  \sem{S}(\rho) = \sum_{a \in A^*, T_1(S, a) = \downarrow} T_2(S, a, \rho).
\end{equation*}

Now, we can give an alternative definition of $\ERT$:

\begin{definition}[Alternative definition of $\ERT$]
  The expected runtime of a quantum \textbf{while}-program $S$ on $\rho \in \DD(\HH_S)$ is defined by
  \begin{equation*}
    \ERT'[S](\rho) = \sum_{a \in A^*, \langle S, \rho\rangle \xrightarrow{a}^* \langle S', \rho'\rangle} c(S')\tr(\rho') 
    = \sum_{a \in A^*} c(T_1(S, a)) \tr(T_2(S, a, \rho))
  \end{equation*}
\end{definition}

The two definitions of $\ERT$ can be related by the following Lemma:

\begin{lemma}
  For quantum program $S$ and state $\rho \in \DD(\HH_S)$, we have $\ERT'[S](\rho) = \ERT[S](\rho)$.
\end{lemma}

\begin{proof}
  We proceed by induction on the structure of $S$.
  \begin{itemize}
    \item $S \equiv \skp$ , $S \equiv \oq := U[\oq]$ and $S \equiv q := \ket{0}$: The result is straightforward.
    \item $S \equiv S_1; S_2$: With the induction hypothesis, we have:
      \begin{align*}
        \ERT'[S](\rho) & = \sum_{a \in A^*, \langle S, \rho\rangle \xrightarrow{a}^* \langle S', \rho'\rangle} c(S')\tr(\rho') \\
       & = \sum_{a \in A^*, \langle S_1, \rho\rangle \xrightarrow{a}^* \langle S', \rho'\rangle} c(S')\tr(\rho')
       + \sum_{a \in A^*, \langle S_1, \rho\rangle \xrightarrow{a}^* \langle \downarrow, \rho'\rangle} 
       \left(\sum_{a' \in A^*, \langle S_2, \rho' \rangle \xrightarrow{a'}^* \langle S', \rho'\rangle} c(S')\tr(\rho') \right) \\
       & = \ERT'[S_1](\rho)
       + \sum_{a \in A^*, \langle S_1, \rho\rangle \xrightarrow{a}^* \langle \downarrow, \rho'\rangle} 
       \left(\sum_{a' \in A^*} c(T_1(S_2, a')) \tr(T_2(S_2, a', \rho')) \right) \\
       & = \ERT'[S_1](\rho)
       + \sum_{a' \in A^*} \left(\sum_{a \in A^*, \langle S_1, \rho\rangle \xrightarrow{a}^* \langle \downarrow, \rho'\rangle}  
       c(T_1(S_2, a')) \tr(T_2(S_2, a', \rho')) \right) \\
       & = \ERT'[S_1](\rho)
       + \sum_{a' \in A^*} 
       c(T_1(S_2, a')) \tr\left(T_2(S_2, a', \sum_{a \in A^*, \langle S_1, \rho\rangle \xrightarrow{a}^* \langle \downarrow, \rho'\rangle} \rho' )\right) \\
       & = \ERT'[S_1](\rho)
       + \sum_{a' \in A^*} 
       c(T_1(S_2, a')) \tr(T_2(S_2, a', \sem{S_1}(\rho) )) \\
       & = \ERT'[S_1](\rho) + \ERT'[S_2](\sem{S_1}(\rho)) \\
       & = \ERT[S_1](\rho) + \ERT[S_2](\sem{S_1}(\rho)) \\
       & = \ERT[S](\rho).
      \end{align*}
    \item $S \equiv \ifb(\guard m\cdot M[\oq] = m \to S_m)\ife$: With the induction hypothesis, we have: 
        \begin{align*}
          \ERT'[S](\rho) & = \sum_{a \in A^*, \langle S, \rho\rangle \xrightarrow{a}^* \langle S', \rho'\rangle} c(S')\tr(\rho') \\
                         & = c(S)\tr(\rho) + \sum_m \sum_{a \in A^*, \langle S_m, M_m\rho M_m^\dagger \rangle \xrightarrow{a}^* \langle S', \rho'\rangle} c(S')\tr(\rho') \\
                         & = \tr(\rho) + \sum_m \ERT'[S_m](\EE_{M_m}(\rho))\\
                         & = \ERT[S](\rho)
        \end{align*}
    \item $S \equiv \while\ M[\oq]=1\ \wdo\ S'\ \wod$: With the induction hypothesis, we only need to show that:
      \begin{equation*}
        \ERT'[S](\rho) = \lim\limits_{k \to \infty} \ERT'[\while^{[k]}[M, S']](\rho).
      \end{equation*}
      Let $A_{S', \downarrow} = \{ a \in A^* \mid \exists \rho'.\ \langle S', \rho \rangle \xrightarrow{a} \langle \downarrow, \rho' \rangle \}$, we have
      \begin{align*}
        & c(T_1(S, a))T_2(S, a, \rho) \neq \zeromat \\ 
        \text{ implies } & a = (\Wh_1 a_1)(\Wh_1 a_2)\cdots(\Wh_1 a_j) e \\
        \text{ where } & a_i \in A_{S', \downarrow} \text{ and } e \text{ is a prefix of } (\Wh_0 \Sk) \mid (\Wh_1 A_{S', \downarrow}) \\
      \end{align*}
      and
      \begin{align*}
         & \text{for } k > 0 \text{, } c(T_1(\while^{[k]}[M, S'], a))T_2(\while^{[k]}[M, S'], a, \rho) \neq \zeromat \\ 
        \text{ implies } & a = (\Br_1 a_1)(\Br_1 a_2)\cdots(\Br_1 a_j)e \\
        \text{ where } & j < k,\ a_i \in A_{S', \downarrow} \text{ and } e \text{ is a prefix of } (\Br_0 \Sk) \mid (\Br_1 A_{S', \downarrow}) \\
      \end{align*}
      Thus, it is sufficient to show that
      \begin{equation*}
        c(T_1(S, a)) = c(T_1(\while^{[k]}[M, S'], a')) \text{ and } T_2(S, a, \rho) = T_2(\while^{[k]}[M, S'], a', \rho)
      \end{equation*}
      for 
      \begin{itemize}
        \item $0 < j \leq k$
        \item $a_i \in A_{S', \downarrow}$ for $0 < i \leq j$
        \item $e$ is a prefix of $(\Wh_0 \Sk) \mid (\Wh_1 a_j)$
        \item $a = (\Wh_1 a_1)(\Wh_1 a_2)\cdots(\Wh_1 a_{j-1}) e$
        \item $e'$ is a prefix of $(\Br_0 \Sk) \mid (\Br_1 a_j)$
        \item $a' = (\Br_1 a_1)(\Br_1 a_2)\cdots(\Br_1 a_{j-1}) e'$.
      \end{itemize}
      This can be check by induction on $j$.
  \end{itemize}
\end{proof}


\section{Proof of Theorem \ref{main_th1}}\label{proof-thm1}

In order to prove this theorem, we need several technical lemmas.

\begin{lemma}
  \label{lemm_term_imp_0}
Let $S$ be a quantum \textbf{while}-program. Then $$\tr (\sem{\while\ M[\oq]=1\ \wdo\ S\ \wod}(\rho)) = \tr \rho$$ implies:\begin{enumerate}\item $\lim\limits_{k \to \infty} \tr (\sem{S} \circ \EE_1)^k(\rho) = 0$; \item $\tr \left( \sem{S} (\EE_1\circ(\sem{S}\circ \EE_1)^k (\rho)) \right) = \tr \left( \EE_1\circ(\sem{S}\circ \EE_1)^k (\rho)\right) $ for all $k \in \NN$.\end{enumerate}
\end{lemma}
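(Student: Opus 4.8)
The plan is to unfold the loop semantics, reduce the hypothesis to a single convergent series of traces, and then run a telescoping trace-balance argument. First I would use the structural representation of Lemma~\ref{lem-structural}(6) to write the denotational semantics of the loop explicitly as
$$\sem{\while\ M[\oq]=1\ \wdo\ S\ \wod}(\rho) = \sum_{k=0}^{\infty} \EE_0\big((\sem{S}\circ\EE_1)^k(\rho)\big),$$
which follows by induction on the unfoldings $\while^{(k)}$, each of which contributes exactly one additional terminating summand $\EE_0\circ(\sem{S}\circ\EE_1)^k$ corresponding to termination after precisely $k$ passes through the body. Writing $\sigma_k = (\sem{S}\circ\EE_1)^k(\rho)$ and taking traces, the hypothesis $\tr(\sem{\while}(\rho))=\tr\rho$ becomes the identity $\sum_{k=0}^{\infty}\tr(\EE_0(\sigma_k)) = \tr\rho$.

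Next I would isolate the two sources of trace loss per iteration. Set $t_k = \tr(\sigma_k)$, $e_k = \tr(\EE_0(\sigma_k))$, and $\delta_k = \tr(\EE_1(\sigma_k)) - \tr(\sem{S}(\EE_1(\sigma_k)))$. The completeness relation $M_0^\dagger M_0 + M_1^\dagger M_1 = I$ yields $\tr(\EE_0(\sigma)) + \tr(\EE_1(\sigma)) = \tr(\sigma)$ for every $\sigma$, while the fact that $\sem{S}$ is trace-nonincreasing gives $\delta_k \geq 0$. Since $(\sem{S}\circ\EE_1)^{k+1}(\rho) = \sem{S}(\EE_1(\sigma_k))$, these two facts combine into the exact recurrence $t_{k+1} = t_k - e_k - \delta_k$, which telescopes to $t_n = t_0 - \sum_{k=0}^{n-1}(e_k+\delta_k)$.

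Finally I would extract both conclusions from this single identity. Because each $\sigma_n$ is a partial density operator, $t_n \geq 0$, so $\sum_{k=0}^{n-1}(e_k+\delta_k)\leq t_0$ for all $n$; combined with $\sum_k e_k = t_0$ (the reduced hypothesis) and $\delta_k\geq 0$, this forces $\sum_k \delta_k = 0$, hence $\delta_k = 0$ for every $k$. The vanishing of $\delta_k$ is precisely claim~(2), since $\delta_k = \tr(\EE_1\circ(\sem{S}\circ\EE_1)^k(\rho)) - \tr(\sem{S}(\EE_1\circ(\sem{S}\circ\EE_1)^k(\rho)))$. Substituting $\delta_k = 0$ back into the telescoped identity gives $t_n = t_0 - \sum_{k=0}^{n-1}e_k \to t_0 - t_0 = 0$, which is claim~(1). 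The step I would be most careful about is the initial unfolding of the loop semantics and the bookkeeping of which summand corresponds to termination after exactly $k$ iterations; once the series form and the recurrence $t_{k+1}=t_k-e_k-\delta_k$ are in place, both conclusions follow purely from nonnegativity of traces and the exhaustion of the total trace budget.
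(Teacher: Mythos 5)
Your proof is correct and follows essentially the same route as the paper's: both rest on the series representation $\tr(\sem{\while}(\rho)) = \sum_k \tr\bigl(\EE_0((\sem{S}\circ\EE_1)^k(\rho))\bigr)$, the completeness relation $M_0^\dagger M_0 + M_1^\dagger M_1 = I$, and the trace-nonincrease of $\sem{S}$, with the hypothesis exhausting the trace budget. The only difference is organizational --- you telescope an exact recurrence $t_{k+1}=t_k-e_k-\delta_k$ and extract both claims from one identity, whereas the paper runs two separate chains of inequalities (one for claim (1), then one per index $n$ with a limit in $m$ for claim (2)); your bookkeeping is arguably tidier but not a different argument.
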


\begin{lemma}
  \label{lemm_0_imp_conv}
  Let $\while\ M[\oq]=1\ \wdo\ S\ \wod$ be a quantum loop. Then the following two statements are equivalent: \begin{enumerate}\item $\lim\limits_{k\to\infty} \tr (\sem{S}\circ \EE_1)^k (\rho) = 0$; \item $\sum_{k=0}^{\infty} \tr (\sem{S}\circ \EE_1)^k (\rho)\ {\it converges}.$\end{enumerate}
\end{lemma}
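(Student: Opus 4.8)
The plan is to prove the two directions separately, with essentially all of the content lying in $(1)\Rightarrow(2)$. The direction $(2)\Rightarrow(1)$ is immediate: each $\tr((\sem{S}\circ\EE_1)^k(\rho))$ is the trace of a positive operator, hence nonnegative, so convergence of the series forces its general term to tend to $0$. For the converse, I would first turn the sequence $a_k := \tr((\sem{S}\circ\EE_1)^k(\rho))$ into the iterate of a fixed matrix acting on a fixed vector. Writing $\mathcal{F} = \sem{S}\circ\EE_1$ and using the maximally entangled state together with Lemmas~\ref{lemm_mrep0} and~\ref{lemm_mrep1}, one has $\tr(\mathcal{F}^k(\rho)) = \bra{\Psi}(\mathcal{F}^k(\rho)\otimes I)\ket{\Psi} = \bra{\Psi} M_{\mathcal{F}}^{\,k}(\rho\otimes I)\ket{\Psi}$, so that $a_k = w^\dagger R^k v$ with $R = M_{\mathcal{F}}$, $v = (\rho\otimes I)\ket{\Psi}$ and $w = \ket{\Psi}$. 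Thus $(a_k)_k$ is a scalar sequence driven by the single matrix $R$.

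Next I would expand $a_k$ through the Jordan decomposition of $R$. Since $\mathcal{F}$ is trace-non-increasing (being a composition of trace-non-increasing super-operators), the eigenvalue lemma recorded earlier (the analogue of Lemma~4.1 of \cite{Ying13}) applies to $R$: every eigenvalue $\lambda_i$ satisfies $\abs{\lambda_i}\leq 1$, and any eigenvalue of modulus exactly $1$ lies in a $1\times 1$ Jordan block. Reading $w^\dagger R^k v$ off from $R^k = A\,J(R)^k A^{-1}$ then yields a closed form $a_k = \sum_i \lambda_i^{\,k} q_i(k)$, where the sum ranges over the distinct eigenvalues and each $q_i$ is a polynomial of degree strictly below the size of the largest Jordan block of $\lambda_i$. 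In particular, for every unit-modulus eigenvalue the corresponding $q_i$ is a constant $c_i$.

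I would then split $a_k = \sum_{\abs{\lambda_i}<1}\lambda_i^{\,k} q_i(k) + \sum_{\abs{\lambda_i}=1} c_i \lambda_i^{\,k}$. The first sum is absolutely summable, since geometric decay dominates polynomial growth, and in particular its general term tends to $0$. Invoking hypothesis $(1)$, that $a_k\to 0$, I conclude that the purely oscillatory remainder $b_k := \sum_{\abs{\lambda_i}=1} c_i\lambda_i^{\,k}$ also tends to $0$. The crucial step is to deduce from $b_k\to 0$ that every $c_i$ vanishes: this is the linear independence of the characters $k\mapsto \lambda_i^{\,k}$ for distinct $\lambda_i$ on the unit circle, which I would make quantitative by Ces\`aro averaging. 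For each fixed $j$, $\frac{1}{N}\sum_{k=0}^{N-1}\overline{\lambda_j}^{\,k} b_k \to c_j$, because $\frac{1}{N}\sum_{k=0}^{N-1}(\lambda_i\overline{\lambda_j})^k\to 0$ whenever $\lambda_i\neq\lambda_j$; yet these same averages tend to $0$ since $b_k\to 0$. Hence $c_j=0$ for all $j$, so $b_k\equiv 0$ and $a_k = \sum_{\abs{\lambda_i}<1}\lambda_i^{\,k}q_i(k)$ is absolutely summable, giving convergence of $\sum_k a_k$.

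The main obstacle, and the only place where the argument goes beyond bookkeeping, is this last deduction that a vanishing finite combination of unit-modulus exponentials must be trivial; everything else is the standard eigenvalue/Jordan machinery already assembled for the paper's computational method. A subtlety worth stating explicitly is that the hypothesis $a_k\to 0$ is genuinely needed and cannot be weakened to mere boundedness: without it the unit-modulus eigenvalues could contribute a nonzero bounded but non-summable oscillation, which is exactly the behaviour this lemma rules out.
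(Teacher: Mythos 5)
Your proof is correct, but at the decisive step it takes a genuinely different route from the paper's. Both arguments share the same skeleton: the direction $(2)\Rightarrow(1)$ is immediate from nonnegativity of the traces; for $(1)\Rightarrow(2)$ one encodes $\sem{S}\circ\EE_1$ by its matrix representation $R$, passes to the Jordan decomposition, invokes the structural fact that all eigenvalues satisfy $\abs{\lambda}\leq 1$ with unit-modulus eigenvalues confined to $1\times 1$ blocks, and sums the $\abs{\lambda}<1$ part by geometric-dominates-polynomial decay (the paper isolates this as Lemma~\ref{lemm_Jordan_conv}). The difference is how the peripheral (unit-modulus) contribution is killed. The paper works at the vector level: since each $(\sem{S}\circ\EE_1)^k(\rho)$ is positive, the hypothesis $\tr\,(\sem{S}\circ\EE_1)^k(\rho)\to 0$ upgrades to operator convergence $(\sem{S}\circ\EE_1)^k(\rho)\to 0$, hence $R^k(\rho\otimes I)\ket{\Psi}\to 0$; writing this vector in the Jordan basis, the peripheral block $C$ is diagonal unitary, so $C^k\ket{u}\to 0$ forces $\ket{u}=0$, i.e.\ the input vector simply has no peripheral component, and the full vector series $\sum_k R^k(\rho\otimes I)\ket{\Psi}$ converges. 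You instead stay at the scalar level: you expand $a_k=\bra{\Psi}R^k(\rho\otimes I)\ket{\Psi}$ as an exponential polynomial $\sum_i\lambda_i^k q_i(k)$ and eliminate the unit-modulus coefficients by Ces\`aro averaging against $\overline{\lambda_j}^{\,k}$, using orthogonality of distinct characters on the unit circle. Your route never uses positivity in the forward direction and establishes a slightly more general scalar fact (valid for any sequence of the form $w^\dagger R^k v$ with $R$ of this spectral type), at the cost of the extra equidistribution argument; the paper's route is shorter, exploits positivity directly, and yields the stronger vector-level conclusion rather than only convergence of the overlap with $\ket{\Psi}$. Both rest on the same spectral lemma (Lemma 5.1.10 of \cite{Ying16}), so neither needs machinery the other lacks, and your closing remark that convergence to zero cannot be weakened to boundedness is accurate --- a single unit eigenvalue with nonzero coefficient gives a bounded, non-summable sequence.
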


\begin{lemma}
  \label{po_maxp}
  Let $\{ A_n \}$ be an increasing sequence of positive operators on $\HH$, and let $P_1$ and $P_2$ be projections onto subspaces $X_1$ and $X_2$ of $\HH$, respectively. If both $$\bigsqcup_{n=0}^\infty P_1 A_n P_1\ {\rm and}\ \bigsqcup_{n=0}^\infty P_2 A_n P_2$$ exist, then $$\bigsqcup_{n=0}^\infty (P_1 + P_2) A_n (P_1 + P_2)$$ exists, where $P_1 + P_2$ is the projector onto the direct sum space of $X_1$ and $X_2$.
\end{lemma}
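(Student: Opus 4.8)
The plan is to reduce the claim to the standard fact that an increasing sequence of positive operators which is \emph{bounded above} in the L\"owner order possesses a least upper bound (the monotone convergence property for positive operators underlying the CPO structure of \cite{Ying16}, Lemma 3.3.2). Thus it suffices to establish two things about the sequence $B_n = (P_1 + P_2) A_n (P_1 + P_2)$: that it is increasing, and that it admits a common upper bound independent of $n$. First I would record that, since $P_1 + P_2$ is assumed to be a projection, we must have $P_1 P_2 = P_2 P_1 = \zeromat$, so that $X_1$ and $X_2$ are orthogonal and $P = P_1 + P_2$ is genuinely the projection onto $X_1 \oplus X_2$. Monotonicity of $\{ B_n \}$ is then immediate: for any positive operator $X$ the operator $PXP$ is again positive, because $\langle \psi | PXP | \psi \rangle = \langle P\psi | X | P\psi \rangle \geq 0$; applying this to $X = A_{n+1} - A_n$, which is positive as $\{ A_n \}$ increases, yields $B_n \sqsubseteq B_{n+1}$.

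The crux is the upper bound, where the cross terms $P_1 A_n P_2 + P_2 A_n P_1$ must be tamed. Expanding, $B_n = P_1 A_n P_1 + P_2 A_n P_2 + (P_1 A_n P_2 + P_2 A_n P_1)$. The trick I would use is that $(P_1 - P_2) A_n (P_1 - P_2) \sqsupseteq \zeromat$, since $P_1 - P_2$ is self-adjoint and $A_n$ is positive; upon expansion this gives exactly $P_1 A_n P_2 + P_2 A_n P_1 \sqsubseteq P_1 A_n P_1 + P_2 A_n P_2$. Hence $B_n \sqsubseteq 2 (P_1 A_n P_1 + P_2 A_n P_2)$. Writing $L_i = \bigsqcup_n P_i A_n P_i$ for $i = 1, 2$ (which exist by hypothesis), we have $P_i A_n P_i \sqsubseteq L_i$ for every $n$, and therefore $B_n \sqsubseteq 2 (L_1 + L_2)$ uniformly in $n$.

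With $\{ B_n \}$ increasing and bounded above by $2(L_1 + L_2)$, the monotone convergence property delivers the existence of $\bigsqcup_n B_n$, which finishes the proof. I expect the only genuinely delicate point to be the cross-term estimate through $(P_1 - P_2) A_n (P_1 - P_2) \sqsupseteq \zeromat$; once that operator inequality is in hand, the monotonicity and the final appeal to L\"owner-order completeness are routine. A secondary point worth stating explicitly is the orthogonality $P_1 P_2 = \zeromat$, which is what makes $P_1 + P_2$ a projection and is tacitly assumed in the statement.
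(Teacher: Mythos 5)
Your proof is correct under its reading of the hypothesis, but it takes a genuinely different route from the paper's. You control the cross terms with the operator inequality $(P_1 - P_2)A_n(P_1 - P_2) \sqsupseteq \zeromat$, obtaining $(P_1+P_2)A_n(P_1+P_2) \sqsubseteq 2(P_1A_nP_1 + P_2A_nP_2) \sqsubseteq 2(L_1+L_2)$, and then invoke completeness of the L\"owner order for bounded increasing sequences of positive operators. The paper instead establishes a uniform norm estimate $\norm{(P_1+P_2)A_j(P_1+P_2)} \leq c\left(\norm{P_1A_jP_1} + \norm{P_2A_jP_2}\right)$, where $c$ is produced by expanding unit vectors of $X_1 \oplus X_2$ over orthonormal bases of $X_1$ and $X_2$ via a Gram--Schmidt change-of-basis matrix and applying Cauchy--Schwarz to the off-diagonal matrix elements; it then concludes with the same bounded-monotone completeness fact, isolated as Lemma \ref{po_bound}. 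The trade-off is this: your argument is shorter and yields an explicit bound $2(L_1+L_2)$ independent of dimensions, but it genuinely requires $X_1 \perp X_2$, since only then does the operator sum $P_1+P_2$ equal the projector onto $X_1 \oplus X_2$ and only then is your four-term expansion of $(P_1+P_2)A_n(P_1+P_2)$ the quantity the lemma is about. The paper's proof never assumes orthogonality: it interprets ``$P_1+P_2$'' as the projector onto the span of $X_1 \cup X_2$, and its constant $c$ absorbs the overlap between the subspaces, so it proves a slightly more general statement. This extra generality is not needed where the lemma is applied --- in the proof of Theorem \ref{main_th1} the subspaces combined are spans of vectors from an orthonormal basis of $V_S$, which are pairwise orthogonal --- so your proof fully serves the paper's purposes. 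Two small points: your derivation of $P_1P_2 = P_2P_1 = \zeromat$ from idempotency of $P_1+P_2$ is sound (multiplying $(P_1+P_2)^2 = P_1+P_2$ by $P_1$ on both sides gives $P_1P_2P_1 = \zeromat$, hence $P_2P_1 = \zeromat$), but it deserves the one-line justification; and the completeness fact you appeal to is the content of Lemma \ref{po_bound} (equivalently, Lemma 4.1.3 of \cite{Ying16} after rescaling into quantum predicates), not the CPO of partial density operators of Lemma 3.3.2 in \cite{Ying16}, which is the citation you gave.
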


The proofs of these lemmas are deferred to Appendix \ref{sec:proofs_of_lemmas}.

To prove Theorem \ref{main_th1}, we have to prove Lemma \ref{split_ert} at the same time.

\begin{lemma}[Lemma \ref{split_ert} and Theorem \ref{main_th1}]
  \label{lemma_6_and_main_th1}
  For any quantum program $S$ and for all $\rho \in W_S$, we have: 
  \begin{align*}
    \tr(\ert[S](A) \cdot \rho) &= \tr(\ert[S] \cdot \rho) + \tr(A \cdot \sem{S}(\rho)) \\
    \text{and} \quad\quad\quad\quad \ERT[S](\rho) &= \tr (\ert[S] \cdot \rho).
  \end{align*}
\end{lemma}

\begin{proof} We proceed by induction on the structure of $S$. 
  Assume that $\rho \in W_S$. 
  \begin{itemize}
    \item $S \equiv \skp$ , $S \equiv \oq := U[\oq]$ and $S \equiv q := \ket{0}$: The result is
      straightforward.
    \item $S \equiv S_1; S_2$: With the induction hypothesis, we have:  
      \begin{align*}
        \tr(\ert[S_1;S_2](A) \cdot \rho)
        & = \tr(\ert[S_1](\ert[S_2](A)) \cdot \rho) \\
        & = \tr(\ert[S_1] \cdot \rho) + \tr(\ert[S_2](A) \cdot \sem{S_1}(\rho)) \\
        & = \tr(\ert[S_1] \cdot \rho) + \tr(\ert[S_2] \cdot \sem{S_1}(\rho)) + \tr(A \cdot \sem{S_2} \circ \sem{S_1}(\rho)) \\
        & = \tr(\ert[S_1](\ert[S_2]) \cdot \rho) + \tr(A \cdot \sem{S_1;S_2}(\rho))\\
        & = \tr(\ert[S_1;S_2] \cdot \rho) + \tr(A \cdot \sem{S_1;S_2}(\rho)).
      \end{align*}
      and
      \begin{align*}
        \ERT[S_1; S_2](\rho) 
        & = \ERT[S_1](\rho) + \ERT[S_2](\sem{S_1}(\rho)) \\
        & = \tr(\ert[S_1]\cdot \rho) + \tr(\ert[S_2] \cdot \sem{S_1}(\rho)) \\
        & = \tr(\ert[S_1](\ert[S_2]) \cdot \rho) \\
        & = \tr(\ert[S_1;S_2] \cdot \rho).
      \end{align*}

    \item $S \equiv \ifb(\guard m\cdot M[\oq] = m \to S_m)\ife$: With the induction hypothesis, we have: 
    \begin{align*}
      \ERT[S](\rho) 
        & = \tr \rho + \sum_{m} \ERT[S_m](\EE_{M_m}(\rho)) \\
        & = \tr \rho + \sum_m \tr(\ert[S_m] \cdot \EE_{M_m}(\rho)) \\
        & = \tr \left( (I + \sum_m \EE_{M_m}^*(\ert[S_m])) \cdot \rho \right)\\
        & = \tr(\ert[S] \cdot \rho).
    \end{align*}
    and
    \begin{align*}
    \tr(\ert[S](A) \cdot \rho) 
        & = \tr \rho + \sum_{m} \tr(\EE_{M_m}^*(\ert[S_m](A)) \cdot \rho) \\
        & = \tr \rho + \sum_{m} \tr(\ert[S_m](A) \cdot \EE_{M_m}(\rho)) \\
        & = \tr \rho + \sum_{m} (\tr(\ert[S_m] \cdot \EE_{M_m}(\rho)) + \tr(A \cdot \sem{S_m}\circ\EE_{M_m}(\rho))) \\
        & = \tr \rho + \sum_{m} \tr(\EE_{M_m}^*(\ert[S_m]) \cdot \rho) + \sum_{m} \tr(A \cdot \sem{S_m}\circ\EE_{M_m}(\rho)) \\
        & = \tr((I + \sum_{m} \EE_{M_m}^*(\ert[S_m])) \cdot \rho) + \tr(A \cdot \sum_{m} \sem{S_m}\circ\EE_{M_m}(\rho)) \\
        & = \tr(\ert[S] \cdot \rho) + \tr(A \cdot \sem{S}(\rho)).
    \end{align*}

    \item $S \equiv \while\ M[\oq]=1\ \wdo\ S'\ \wod$: 
    We first show that $\ERT[S](\rho) = \tr (\ert[S] \cdot \rho)$.
    From $\rho \in W_S$ and Lemma \ref{lemm_term_imp_0},  
      we obtain: 
      $$\EE_1\circ(\sem{S'}\circ \EE_1)^k (\rho) \in W_{S'}$$ 
      for all $k \in \NN$. For simplicity, we denote $\ert[S']$ by $A_{S'}$. Then it follows from Lemma \ref{lemm_ERT_while} that  
      \begin{align*}
    \ERT[\while^{[n]}[M, S']](\rho) 
     &= \sum_{k = 0}^{n-1} \tr \left( (\sem{S'}\circ \EE_1)^{k}(\rho) \right) + \sum_{k = 0}^{n - 1} \ERT[S']\left(\EE_1 \circ (\sem{S'} \circ \EE_1)^{k}(\rho)\right) \\
    & = \sum_{k = 0}^{n-1} \tr \left( (\sem{S'}\circ \EE_1)^{k}(\rho) \right) + \sum_{k = 0}^{n - 1} \tr \left(A_{S'} \cdot \EE_1 \circ (\sem{S'} \circ \EE_1)^{k}(\rho)\right) \\
    & =  \tr \bigg( \big( \sum_{k = 0}^{n-1} (\EE_1^* \circ \sem{S'}^*)^{k}(I) + \sum_{k = 0}^{n - 1} (\EE_1^* \circ \sem{S'}^*)^{k} \circ \EE_1^* (A_{S'}) \big) \cdot \rho \bigg).
      \end{align*}
      By Lemma \ref{ert_while_converge}, we have
      $$ \sum_{k=0}^\infty P_{S} (\EE_1^* \circ \sem{S'}^* )^k(I) P_{S} \text{ and } \sum_{k=0}^\infty P_{S} (\EE_1^* \circ \sem{S'}^* )^k\circ \EE_1^*(A_{S'}) P_{S} $$
      both converge. Then for $\rho \in W_S$, it holds that
      \begin{align*}
        \ERT[S](\rho) & = \lim\limits_{n \to \infty}\tr \bigg( \big( \sum_{k = 0}^{n-1} (\EE_1^* \circ \sem{S'}^*)^{k}(I) + \sum_{k = 0}^{n - 1} (\EE_1^* \circ \sem{S'}^*)^{k} \circ \EE_1^* (A_{S'}) \big) \cdot \rho\bigg) \\
        & =  \lim\limits_{n \to \infty} \tr \bigg( \big( \sum_{k = 0}^{n-1} P_S (\EE_1^* \circ \sem{S'}^*)^{k}(I) P_S + \sum_{k = 0}^{n - 1} P_S (\EE_1^* \circ \sem{S'}^*)^{k} \circ \EE_1^* (A_{S'}) P_S \big) \cdot \rho \bigg) \\
        & =  \tr \bigg( \big( \sum_{k = 0}^{\infty} P_S (\EE_1^* \circ \sem{S'}^*)^{k}(I) P_S + \sum_{k = 0}^{\infty} P_S (\EE_1^* \circ \sem{S'}^*)^{k} \circ \EE_1^* (A_{S'}) P_S \big) \cdot \rho \bigg) \\
        &=  \tr (\ert[S] \cdot \rho).
      \end{align*}
    
    Then we show that $\tr(\ert[S](A) \cdot \rho) = \tr(\ert[S] \cdot \rho) + \tr(A \cdot \sem{S}(\rho))$. Note that since $\rho \in W_S$, we have $P_S \rho P_S = \rho$. Thus, 
        \begin{align*}
& \tr\Big( \big(\sum_{k = 0}^{\infty} P_S (\EE_1^* \circ \sem{S'}^*)^{k} \circ \EE_1^* (\ert[S'](\EE_0^*(A))) P_S\big) \cdot \rho\Big) \\
= & \sum_{k = 0}^{\infty} tr\Big( \big(P_S (\EE_1^* \circ \sem{S'}^*)^{k} \circ \EE_1^* (\ert[S'](\EE_0^*(A))) P_S\big) \cdot \rho \Big) \\
= & \sum_{k = 0}^{\infty} tr\Big( \big((\EE_1^* \circ \sem{S'}^*)^{k} \circ \EE_1^* (\ert[S'](\EE_0^*(A))) \big) \cdot (P_S  \rho P_S) \Big) \\
= & \sum_{k = 0}^{\infty} tr\Big( \ert[S'](\EE_0^*(A)) \cdot \EE_1\circ(\sem{S'}\circ\EE_1)^{k}(\rho)\Big) \\
= &  \tr\Big( \ert[S'](\EE_0^*(A)) \cdot \big( \sum_{k = 0}^{\infty} \EE_1\circ(\sem{S'}\circ\EE_1)^{k}(\rho) \big)\Big) \\
= &  \tr\Big( \ert[S'] \cdot \big( \sum_{k = 0}^{\infty} \EE_1\circ(\sem{S'}\circ\EE_1)^{k}(\rho) \big)\Big) 
+ \tr\Big( \EE_0^*(A) \cdot \big( \sum_{k = 0}^{\infty} \sem{S'}\circ\EE_1\circ(\sem{S'}\circ\EE_1)^{k}(\rho) \big)\Big)\\
= &  \tr\Big( \ert[S'] \cdot \big( \sum_{k = 0}^{\infty} \EE_1\circ(\sem{S'}\circ\EE_1)^{k}(\rho) \big)\Big) 
+ \tr\Big( A \cdot \big( \sum_{k = 0}^{\infty} \EE_0\circ(\sem{S'}\circ\EE_1)^{k+1}(\rho) \big)\Big)\\
= & \tr\Big( \big(\sum_{k = 0}^{\infty} P_S (\EE_1^* \circ \sem{S'}^*)^{k} \circ \EE_1^* (\ert[S']) P_S\big) \cdot \rho\Big) + \tr\Big( \big(\sum_{k = 0}^{\infty} (\EE_1^* \circ \sem{S'}^*)^{k+1}(\EE_0^*(A)) \big) \cdot \rho\Big)
    \end{align*}
    Then we have
    \begin{align*}
        &\phantom{=} \tr(\ert[S](A) \cdot \rho)\\
        & = \tr\Big(\sum_{k = 0}^{\infty} P_S (\EE_1^* \circ \sem{S}^*)^{k}(I) P_S \cdot \rho \Big) + \tr(\EE_0^*(A)\cdot\rho) + \tr\Big(\big(\sum_{k = 0}^{\infty} P_S (\EE_1^* \circ \sem{S}^*)^{k} \circ \EE_1^* (\ert[S](\EE_0^*(A))) P_S\big) \cdot \rho\Big)\\
        & = \tr\Big(\sum_{k = 0}^{\infty} P_S (\EE_1^* \circ \sem{S}^*)^{k}(I) P_S \cdot \rho \Big) + \tr\Big( \big(\sum_{k = 0}^{\infty} P_S (\EE_1^* \circ \sem{S'}^*)^{k} \circ \EE_1^* (\ert[S']) P_S\big) \cdot \rho\Big) \\
        & \quad + \tr\Big(\EE_0^*(A)\cdot(P_S\rho P_S)\Big) + \tr\Big( \big(\sum_{k = 0}^{\infty}  (\EE_1^* \circ \sem{S'}^*)^{k+1}(\EE_0^*(A)) \big) \cdot \rho\Big)\\
        & = \tr\Big(\sum_{k = 0}^{\infty} P_S (\EE_1^* \circ \sem{S}^*)^{k}(I) P_S \cdot \rho \Big) + \tr\Big( \big(\sum_{k = 0}^{\infty} P_S (\EE_1^* \circ \sem{S'}^*)^{k} \circ \EE_1^* (\ert[S']) P_S\big) \cdot \rho\Big) 
        + \tr\Big( \big(\sum_{k = 0}^{\infty} (\EE_1^* \circ \sem{S'}^*)^{k}(\EE_0^*(A)) \big) \cdot \rho\Big) \\
        & = \tr\Big(\sum_{k = 0}^{\infty} P_S (\EE_1^* \circ \sem{S}^*)^{k}(I) P_S \cdot \rho \Big) + \tr\Big( \big(\sum_{k = 0}^{\infty} P_S (\EE_1^* \circ \sem{S'}^*)^{k} \circ \EE_1^* (\ert[S']) P_S\big) \cdot \rho\Big) 
        + \tr\Big( A \cdot \big( \sum_{k = 0}^{\infty} \EE_0\circ(\sem{S'}\circ\EE_1)^{k}(\rho) \big)\Big)\\
        & = \tr(\ert[S] \cdot \rho) + \tr(A \cdot \sem{S}(\rho))
    \end{align*}
    \end{itemize}
\end{proof}

\section{Proof of Theorem \ref{main_th2}}\label{proof-thm2}
\begin{proof} ($\Leftarrow$):
  Since $\HH$ is finite-dimensional, we can conclude directly from Theorem \ref{main_th1}.

($\Rightarrow$):  We proceed by induction on the structure of $S$.
  \begin{itemize}
    \item The case of $S \equiv \skp$, $\oq := U[\oq]$ and $q := \ket{0}$ is straightforward.
    \item $S \equiv S_1; S_2$: From $\ERT[S](\rho) < \infty$, we can assert that $\ERT[S_1](\rho) < \infty$ and $\ERT[S_2](\sem{S_1}(\rho)) < \infty$. By the induction hypothesis we have: 
      \begin{align*}
        \tr(\sem{S_1; S_2}(\rho)) &= \tr(\sem{S_2}(\sem{S_1}(\rho)))\\ &= \tr(\sem{S_1}(\rho)) = \tr \rho.
      \end{align*}
    \item $S \equiv \ifb(\guard m\cdot M[\oq] = m \to S_m)\ife$: It follows from $\ERT[S](\rho) < \infty$ that $\ERT[S_m](\EE_{M_m}(\rho)) < \infty$ for all $m$. By the induction hypothesis we have: $$\tr(\sem{S_m}(\EE_{M_m}(\rho))) = \tr(\EE_{M_m}(\rho))$$ for all $m$. Then we obtain: 
        \begin{align*}
          \tr(\sem{S}(\rho)) &= \sum_m \tr(\sem{S_m}(\EE_{M_m}(\rho)))\\
                             &= \sum_m \tr(\EE_{M_m}(\rho)) = \tr \rho.
        \end{align*}
      \item $S \equiv \while\ M[\oq]=1\ \wdo\ S'\ \wod$: By Lemma \ref{lemm_ERT_while} and $\ERT[S](\rho)$ $< \infty$, we see that $$\lim\limits_{k\to\infty} \tr \left( (\sem{S'} \circ \EE_1)^k(\rho) \right) = 0$$ and $$\ERT[S'](\EE_1\circ(\sem{S} \circ \EE_1)^k(\rho)) < \infty$$ for all $k$. Therefore, we have: $$\tr \left( \sem{S'}(\EE_1\circ(\sem{S'} \circ \EE_1)^k(\rho)) \right) = \tr (\EE_1\circ(\sem{S'} \circ \EE_1)^k(\rho))$$ for all $k$. Consequently, it holds that 
        \begin{align*}
          & \sum_{k = 0}^n \tr (\EE_0 \circ (\sem{S'} \circ \EE_1)^k(\rho)) + \tr((\sem{S'}\circ \EE_1)^{n+1}(\rho)) \\
          = & \sum_{k = 0}^n \tr (\EE_0 \circ (\sem{S'} \circ \EE_1)^k(\rho)) + \tr(\EE_1 \circ (\sem{S'}\circ \EE_1)^{n}(\rho)) \\
          = & \sum_{k = 0}^{n-1} \tr (\EE_0 \circ (\sem{S'} \circ \EE_1)^k(\rho)) + \tr((\sem{S'}\circ \EE_1)^{n}(\rho)) \\
          = & \dots \\
          = & \tr \rho.
        \end{align*}
        Now let $n \to \infty$. We obtain: 
        \begin{align*}
          &\tr(\sem{S'}(\rho)) 
           = \sum_{k = 0}^\infty \tr (\EE_0 \circ (\sem{S'} \circ \EE_1)^k(\rho)) \\
          & = \lim\limits_{n \to \infty} \sum_{k=0}^{n} \tr (\EE_0 \circ (\sem{S'} \circ \EE_1)^k(\rho)) +  \tr((\sem{S'}\circ \EE_1)^{n+1}(\rho)) \\
          & = \tr \rho.
        \end{align*}
  \end{itemize}
\end{proof}

\section{Proofs of of Technical Lemmas}
\label{sec:proofs_of_lemmas}

\subsection{Proof of Lemma \ref{lemm_ERT_while}}
\begin{proof}
  Induction on $n$.
  \begin{itemize}
    \item $n = 0$, $\ERT[\skp](\rho) = 0$.
    \item Suppose 
      \begin{align*}
        \ERT[\while^{[n]}[M, S]]&(\rho) = \sum_{k = 0}^{n-1} \tr \left( (\sem{S}\circ \EE_1)^{k}(\rho) \right) + \sum_{k = 0}^{n - 1} \ERT[S](\EE_1 \circ (\sem{S} \circ \EE_1)^{k}(\rho))
      \end{align*}
      for all $\rho \in \DD(\HH)$. We have
      \begin{align*}
         \ERT[\while^{[n + 1]}[M, S]](\rho) 
        = & \ERT[\ifb\ M[\oq]=0 \to skip\ \guard\ 1 \to S; \while^{[n]}[M, S]\ \ife] \\
          = & \tr \rho + \ERT[S; \while^{[n]}[M, S]](\EE_1(\rho)) \\
          = & \tr \rho + \ERT[S](\EE_1(\rho)) + \ERT[\while^{[n]}[M, S]](\sem{S} \circ \EE_1(\rho)) \\
          = & \sum_{k = 0}^{n} \tr \left( (\sem{S}\circ \EE_1)^{k}(\rho) \right) + \sum_{k = 0}^{n} \ERT[S](\EE_1 \circ (\sem{S} \circ \EE_1)^{k}(\rho)).
      \end{align*}
  \end{itemize}
\end{proof}

\subsection{Proof of Lemma \ref{linearity}}
\begin{proof}
  By induction on the structure of $S$ and using Lemma \ref{lemm_ERT_while} when dealing with a loop. 
\end{proof}

\subsection{Proof of Lemma \ref{lemm_term_space} \cite{Zhou19}}
\label{sub:proof_of_lemmalemm_term_space}
Suppose the Kraus operator-sum representation of $\sem{S}$ is
$$\sem{S}(\rho) = \sum_j E_j \rho E_j^\dagger.$$
Then for any $\ket{\psi} \in \HH_S$, $\tr(\sem{S}(\ket{\psi}\bra{\psi})) = \tr(\ket{\psi}\bra{\psi})$ is equivalent to
\begin{align*}
 \tr(\ket{\psi}\bra{\psi}) - \tr(\sem{S}(\ket{\psi}\bra{\psi})) 
  = & \tr\big(\ket{\psi}\bra{\psi}(I_{\HH_S} - \sum_j E_j^\dagger E_j)\big) \\
  = &\bra{\psi}\big(I_{\HH_S} - \sum_j E_j^\dagger E_j\big)\ket{\psi} = 0
\end{align*}
Since $\sem{S}$ is a super-operator, we have $I_{\HH_S} - \sum_j E_j^\dagger E_j$ is positive. We conclude that $\ket{\psi}$ is in the eigenspace of $I_{\HH_S} - \sum_j E_j^\dagger E_j$ with eigenvalue $0$.

\subsection{Proof of Lemma \ref{ert_while_converge}}
\label{sub:proof_of_lemma_ert_while_converge}
Let $\{ \ket{\psi_j} \}$ be an orthonormal basis of $V_{\while}$ in Lemma \ref{lemm_term_space}. Then by Lemma \ref{lemm_term_imp_0} we have: $$\lim\limits_{k \to \infty} \tr(\sem{S}\circ \EE_1)^k(\ket{\psi_j} \bra{\psi_j}) = 0.$$  By Lemma \ref{lemm_0_imp_conv} we further see that 
     \begin{align*}
      & \lim\limits_{n \to \infty} \bra{\psi_j} \sum_{k=0}^{n} (\EE_1^* \circ \sem{S}^*)^k(I) \ket{\psi_j}= \sum_{k=0}^{\infty} \tr (\sem{S}\circ\EE_1)^k (\ket{\psi_j}\bra{\psi_j})
     \end{align*}
     converges. Therefore, 
     \begin{align*}
       \bigsqcup_{n=0}^\infty (\ket{\psi_j}\bra{\psi_j}) \sum_{k=0}^{n} (\EE_1^* \circ \sem{S}^*)^k(I) (\ket{\psi_j}\bra{\psi_j})
     \end{align*}
     exists. Let $P_{\while}$ be the projector onto $V_{\while}$, then by Lemma \ref{po_maxp} we know that  
     \begin{equation*}
       \bigsqcup_{n=0}^{\infty} P_{\while} \sum_{k=0}^{n} (\EE_1^* \circ \sem{S}^*)^k(I) P_{\while}
     \end{equation*}
     exists. Since $0 \sqsubseteq X \sqsubseteq c I$ for some $c > 0$, we have: \begin{align*}0 &\sqsubseteq (\EE_1^* \circ \sem{S}^*)^k (X)\\ &\sqsubseteq (\EE_1^* \circ \sem{S}^*)^k(cI) = c(\EE_1^* \circ \sem{S}^*)^k(I)\end{align*} for all $k$. Hence,
     \begin{equation*}
       \bigsqcup_{n=0}^{\infty} P_{\while} \sum_{k=0}^{n} (\EE_1^* \circ \sem{S}^*)^k(X) P_{\while}
     \end{equation*}
     also exists. Thus, 
     \begin{align*}
       \sum_{k = 0}^\infty P_{\while} (\EE_1^* \circ \sem{S}^*)^{k}(X) P_{\while}  = \bigsqcup_{n=0}^\infty P_{\while} \sum_{k = 0}^{n} (\EE_1^* \circ \sem{S}^*)^{k}(X) P_{\while} 
     \end{align*}
     converges.


\subsection{Proof of Lemma \ref{lemm_eig_le_1}}
\label{sub:proof_of_lemmas_in_section_sec}
Similar to Lemma 4.1 in \cite{Ying13} (see also Lemma 5.1.10 in \cite{Ying16}).

\subsection{Proof of Lemma \ref{lemm_reduce_jnf}}
\label{sub:proof_of_lemma_lemm_reduce_jnf}
Similar to Lemma 4.2 in \cite{Ying13} (also see \cite{Ying16}, Lemma 5.1.13).

\subsection{Proof of Lemma \ref{lemm_0_imp_conv}}

First, we have: 
\begin{lemma}
  \label{lemm_Jordan_conv}
  Let $J_k(\lambda)$ be a Jordan block of size $k$ corresponding to eigenvalue $\lambda $ with $\abs{\lambda} < 1$. Then $$\sum_{n = 0}^\infty J_k(\lambda)^n$$ converges.
\end{lemma}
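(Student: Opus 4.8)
The plan is to reduce the matrix series to a finite collection of scalar series, one for each diagonal offset, and to show that each of these converges. First I would decompose the Jordan block as $J_k(\lambda) = \lambda I + N$, where $N$ is the nilpotent matrix carrying $1$'s on the first superdiagonal, so that $N^k = \zeromat$ and $N$ commutes with $\lambda I$. Because the two summands commute, the binomial theorem applies to powers and gives
$$J_k(\lambda)^n = \sum_{j = 0}^{k - 1} \binom{n}{j} \lambda^{n - j} N^j,$$
the sum truncating at $j = k-1$ since $N^j = \zeromat$ for $j \geq k$. As $N^j$ is precisely the matrix with $1$'s on its $j$-th superdiagonal, the $(p,q)$ entry of $J_k(\lambda)^n$ equals $\binom{n}{q-p}\lambda^{\,n-(q-p)}$ whenever $0 \leq q - p \leq k-1$, and $0$ otherwise.

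Second, I would note that the matrix series $\sum_{n=0}^\infty J_k(\lambda)^n$ converges if and only if each of its finitely many entry series converges, and that every nonzero entry series has, for a fixed offset $j = q-p$, the form $\sum_{n=0}^\infty \binom{n}{j}\lambda^{\,n-j}$. Each such scalar series converges absolutely: the binomial coefficient grows only polynomially in $n$ (of degree $j \leq k-1$), while $|\lambda|^{\,n-j}$ decays geometrically since $|\lambda| < 1$, so a ratio test or a comparison with a geometric series closes the argument at once. In fact one recognises the closed form $\sum_{n=j}^\infty \binom{n}{j}\lambda^{\,n-j} = (1-\lambda)^{-(j+1)}$, obtained by differentiating the geometric series $j$ times; this also exhibits the limit entrywise, which is what the surrounding computation of $(I\otimes I - N)^{-1}$ ultimately requires.

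Since every entry series converges, the matrix series converges, completing the argument. I do not expect a genuine obstacle here: the only points requiring any care are the entrywise reduction and the polynomial-times-geometric estimate, both routine. An even shorter route, if one is willing to invoke the cited matrix-analysis background \cite{horn2012matrix}, is to observe that the single eigenvalue of $J_k(\lambda)$ is $\lambda$, so its spectral radius is $|\lambda| < 1$; the Neumann series then converges with $\sum_{n=0}^\infty J_k(\lambda)^n = (I - J_k(\lambda))^{-1}$, the inverse existing because $1$ is not an eigenvalue. I would keep the explicit binomial computation as the main argument, reserving the spectral-radius remark as a one-line alternative.
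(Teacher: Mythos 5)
Your proposal is correct and follows essentially the same route as the paper: the paper likewise writes out $J_k(\lambda)^n$ entrywise as binomial coefficients times powers of $\lambda$ and concludes convergence of each (polynomially-growing times geometrically-decaying) entry series from $\abs{\lambda}<1$. Your derivation via $J_k(\lambda)=\lambda I+N$ and the binomial theorem merely makes the paper's explicit matrix formula rigorous, and the spectral-radius remark is a fine optional shortcut.
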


\begin{proof}
  We have: 
  \begin{equation*}
    J_k(\lambda)^n =
    \begin{pmatrix}
      \lambda^n & C_n^1 \lambda^{n-1} & C_n^2 \lambda^{n-2} & \dots & C_n^{k-1} \lambda^{n-k+1} \\
      0 & \lambda^n & C_n^1 \lambda^{n-1} & \dots & C_n^{k-2} \lambda^{n-k+2} \\
      \vdots & \vdots & \vdots & \vdots & \vdots \\
      0 & 0 & 0 & \dots & \lambda^n
    \end{pmatrix}.
  \end{equation*}
  Since $\abs{\lambda} < 1$, $$\sum_{n=j+1}^\infty C_n^{n-j}\lambda^{n - j}$$converges for $j = 0, 1, \dots, k - 1$. Thus, $$\sum_{n = 0}^\infty J_k(\lambda)^n$$ converges.
\end{proof}

\begin{proof}[Proof of Lemma \ref{lemm_0_imp_conv}]
  ($\Leftarrow$): Obvious.
  
   ($\Rightarrow$): Since $$(\sem{S}\circ \EE_1)^k (\rho)$$ is positive for all $k$, $$\lim\limits_{k\to\infty} \tr (\sem{S}\circ \EE_1)^k (\rho) = 0$$ implies $$\lim\limits_{k\to\infty} (\sem{S}\circ \EE_1)^k (\rho) = 0.$$ Let $R$ be the matrix representation of $\sem{S} \circ \EE_1$ and $AJ(R)A^{-1}$ be the Jordan decomposition of $R$. Then we have: 
  \begin{equation*}
    \lim\limits_{k\to\infty} R^k(\rho \otimes I)\ket{\Phi} = \lim\limits_{k \to \infty} ((\sem{S} \circ \EE_1)^k(\rho)\otimes I) \ket{\Phi} = 0.
  \end{equation*}
  Since $A$ is nonsingular, we have: 
  \begin{equation*}
    \lim\limits_{k \to \infty} J(R)^k A^{-1} (\rho \otimes I) \ket{\Phi} = 0.
  \end{equation*}
  By Lemma 5.1.10 in \cite{Ying16}, suppose 
  \begin{equation*}
    J(R) = 
    \begin{pmatrix}
      C & 0\\
      0 & D
    \end{pmatrix},\quad 
    S^{-1} (\rho \otimes I) \ket{\Phi} = 
    \begin{pmatrix}
      \ket{u} \\
      \ket{v}
    \end{pmatrix}
  \end{equation*}
  where $C$ is an $r$-dimensional matrix contains Jordan blocks of eigenvalues of modules $1$, $D$ contains Jordan blocks of eigenvalues of module less than $1$, and $\ket{u}$ is an $r$-dimensional vector. 
  It follows from Lemma 5.1.10 in \cite{Ying16} that $C$ is diagonal unitary. Moreover, we have: $$\lim\limits_{k\to\infty} C^k \ket{u} = 0.$$ Then it holds that $\ket{u} = 0$. Since the eigenvalues of all Jordan blocks in $D$ are less than $1$, by Lemma \ref{lemm_Jordan_conv} we know that $\sum_{k=0}^{\infty} D^k$ converges. Hence,  
  \begin{align*}
     \sum_{k = 0}^{\infty} \left( (\sem{S}\circ \EE_1)^k(\rho) \otimes I \right) \ket{\Phi} 
    = & \sum_{k=0}^{\infty} R^{k}(\rho \otimes I)\ket{\Phi} \\
    = & S
    \begin{pmatrix}
      0 \\
      \sum_{k=0}^{\infty} D^k \ket{v}
    \end{pmatrix}
  \end{align*}
  converges. Furthermore,
  \begin{equation*}
    \sum_{k=0}^{\infty} \tr (\sem{S}\circ \EE_1)^k (\rho) = \bra{\Phi } \sum_{k=0}^{\infty} R^{k}(\rho \otimes I)\ket{\Phi}
  \end{equation*}
  converges too.
\end{proof}

\subsection{Proof of Lemma \ref{lemm_term_imp_0}} 
\begin{proof}
  We have:  
  \begin{align*}
     \tr \EE_0 \circ (\sem{S} \circ \EE_1)^k (\rho) + \tr (\sem{S} \circ \EE_1)^{k+1}(\rho) 
    \leq & \tr \EE_0 \circ (\sem{S} \circ \EE_1)^k (\rho) + \tr \EE_1 \circ (\sem{S} \circ \EE_1)^{k}(\rho) \\
    = & \tr (\sem{S} \circ \EE_1)^k (\rho).
  \end{align*}
 Here, the equality holds only if 
  \begin{equation*}
    \tr \left( \sem{S} ((\EE_1 \circ (\sem{S} \circ \EE_1)^k) (\rho)) \right) 
    = \tr (\EE_1 \circ (\sem{S} \circ \EE_1)^k) (\rho)
  \end{equation*}
Moreover, we can derive that
  \begin{align*}
    & \tr (\sum_{k=0}^n \EE_0 \circ (\sem{S} \circ \EE_1)^k(\rho)) + \tr (\sem{S} \circ \EE_1)^{n + 1}(\rho)\\
    \leq & \tr (\sum_{k=0}^{n-1} \EE_0 \circ (\sem{S} \circ \EE_1)^k(\rho)) + \tr (\sem{S} \circ \EE_1)^{n}(\rho) \\
    \leq & \dots \\
    \leq & \tr \rho
  \end{align*}
  by repeatedly using the inequality above. By the assumption we have: 
  \begin{align*}
     \tr (\sem{\while\ M[\oq]=1\ \wdo\ S\ \wod}(\rho)) 
    = & \lim\limits_{n \to \infty} \tr (\sum_{k=0}^n \EE_0 \circ (\sem{S} \circ \EE_1)^k(\rho)) \\
    = & \tr \rho.
  \end{align*}
  Hence, 
  \begin{equation*}
    \lim\limits_{k \to \infty} \tr (\sem{S} \circ \EE_1)^k(\rho) = 0
  \end{equation*}
Further more, for $m > n$ we have: 
  \begin{align*}
    & \tr \left(\sum_{k=0}^m \EE_0 \circ (\sem{S} \circ \EE_1)^k(\rho)\right) + \tr (\sem{S} \circ \EE_1)^{m + 1}(\rho) \\
    & + \tr(\EE_1\circ(\sem{S} \circ \EE_1)^n(\rho)) - \tr \left(\sem{S}(\EE_1\circ(\sem{S} \circ \EE_1)^n(\rho))\right)\\
    \leq & \dots \\
    \leq & \tr \left(\sum_{k=0}^{n} \EE_0 \circ (\sem{S} \circ \EE_1)^k(\rho)\right) + \tr (\sem{S} \circ \EE_1)^{n + 1}(\rho) \\
         & + \tr(\EE_1\circ(\sem{S} \circ \EE_1)^n(\rho)) - \tr \left(\sem{S}(\EE_1\circ(\sem{S} \circ \EE_1)^n(\rho))\right)\\
    = & \tr \left(\sum_{k=0}^{n} \EE_0 \circ (\sem{S} \circ \EE_1)^k(\rho)\right) + \tr(\EE_1\circ(\sem{S} \circ \EE_1)^n(\rho))\\
    = & \tr \left(\sum_{k=0}^{n-1} \EE_0 \circ (\sem{S} \circ \EE_1)^k(\rho)\right) + \tr((\sem{S} \circ \EE_1)^n(\rho))\\
    \leq & \dots \\
    \leq & \tr \rho. 
  \end{align*}

  Let $m \to \infty$. Then we obtain: 
  \begin{equation*}
    \tr(\EE_1\circ(\sem{S} \circ \EE_1)^n(\rho)) - \tr \left(\sem{S}(\EE_1\circ(\sem{S} \circ \EE_1)^n(\rho))\right) = 0.
  \end{equation*}
\end{proof}

\subsection{Proof of Lemma \ref{po_maxp}}

We first have the following: 

\begin{lemma}
  \label{po_bound}
  Let $\{ A_n \}_{n=0}^\infty$ be increasing sequence of positive operators on finite-dimensional Hilbert space $\HH$. If there exists positive operator $A$ in $\HH$ such that $A_n \sqsubseteq A$ for all $n$, then $\bigsqcup_{n} A_n$ exists.
\end{lemma}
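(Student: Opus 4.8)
The plan is to exploit finite-dimensionality to reduce this operator-valued statement to the convergence of finitely many scalar sequences, and then to identify the entrywise limit as the desired least upper bound. Fix an orthonormal basis $\{\ket{j}\}_{j=0}^{d-1}$ of $\HH$, where $d = \ddim \HH < \infty$. First I would show that for every fixed $\ket{\psi} \in \HH$ the real sequence $\bra{\psi} A_n \ket{\psi}$ converges. Monotonicity of $\{A_n\}$ in the L\"owner order gives $\bra{\psi} A_n \ket{\psi} \leq \bra{\psi} A_{n+1} \ket{\psi}$, so the sequence is nondecreasing, while the bound $A_n \sqsubseteq A$ gives $\bra{\psi} A_n \ket{\psi} \leq \bra{\psi} A \ket{\psi}$, so it is bounded above. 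Hence it converges, to a limit I denote $q(\ket{\psi})$.

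Next I would upgrade this pointwise convergence of quadratic forms to convergence of all matrix entries $(A_n)_{j,k} = \bra{j} A_n \ket{k}$. The diagonal entries are exactly $q(\ket{j})$ and converge directly. For the off-diagonal (in general complex) entries I would invoke the polarization identity, which expresses $\bra{j} A_n \ket{k}$ as a fixed linear combination of the four quadratic forms $\bra{\psi} A_n \ket{\psi}$ with $\ket{\psi} \in \{\ket{j}+\ket{k},\ \ket{j}-\ket{k},\ \ket{j}+i\ket{k},\ \ket{j}-i\ket{k}\}$. Since each of these four scalar sequences converges by the previous step, so does $(A_n)_{j,k}$. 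Defining $B$ to be the operator with entries $B_{j,k} = \lim_{n} (A_n)_{j,k}$ then yields a well-defined matrix, and as a limit of Hermitian matrices $B$ is Hermitian.

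Finally I would verify that $B = \bigsqcup_n A_n$. By construction $\bra{\psi} B \ket{\psi} = \lim_n \bra{\psi} A_n \ket{\psi}$ for every $\ket{\psi}$; since the scalar sequence is nondecreasing, its limit dominates every term, so $\bra{\psi} A_n \ket{\psi} \leq \bra{\psi} B \ket{\psi}$ for all $n$, i.e. $A_n \sqsubseteq B$, and positivity of any single $A_n$ forces $B$ to be positive as well. Thus $B$ is an upper bound. For the \emph{least} upper bound, let $C$ be any upper bound, so $\bra{\psi} A_n \ket{\psi} \leq \bra{\psi} C \ket{\psi}$ for all $n$ and all $\ket{\psi}$; passing to the limit in $n$ gives $\bra{\psi} B \ket{\psi} \leq \bra{\psi} C \ket{\psi}$ for every $\ket{\psi}$, hence $B \sqsubseteq C$. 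Therefore $B$ is the least upper bound, and $\bigsqcup_n A_n$ exists.

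I expect the only genuinely delicate point to be the passage from convergence of the real quadratic forms to convergence of the complex off-diagonal matrix entries: L\"owner monotonicity controls only the diagonal forms $\bra{\psi} A_n \ket{\psi}$, so the polarization identity is essential to recover the remaining entries, and it is finite-dimensionality that guarantees there are only finitely many of them to control simultaneously (so that entrywise limits assemble into a bounded operator without any uniformity argument).
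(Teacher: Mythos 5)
Your proof is correct, but it takes a genuinely different route from the paper. The paper's argument is a two-line reduction to known infrastructure: since $A_n \sqsubseteq A$ for all $n$, the rescaled sequence $\{A_n/\norm{A}\}$ satisfies $0 \sqsubseteq A_n/\norm{A} \sqsubseteq I$, i.e.\ it is an increasing sequence of \emph{quantum predicates} (effect operators), and Lemma 4.1.3 of \cite{Ying16} states that the quantum predicates form a CPO under the L\"owner order; the least upper bound of the original sequence is then recovered by multiplying back by $\norm{A}$. Your argument instead constructs the supremum from scratch: monotone bounded scalar sequences $\bra{\psi}A_n\ket{\psi}$ converge, polarization upgrades this to convergence of every matrix entry, and the entrywise limit $B$ is verified directly to be the least upper bound. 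What the paper's route buys is brevity and reuse of a standard result (at the cost of an external citation and a minor unaddressed degenerate case, $A = \zeromat$, where division by $\norm{A}$ is undefined---though then all $A_n = \zeromat$ trivially). What your route buys is self-containedness and strictly more information: you show not merely that $\bigsqcup_n A_n$ exists but that $A_n \to B$ entrywise, hence in norm, which identifies the lub as an operator-norm limit. Your closing remark is also on target: L\"owner monotonicity only controls diagonal quadratic forms, so the polarization step is exactly where the off-diagonal entries are recovered, and finite-dimensionality is what lets the finitely many entrywise limits assemble into an operator with no uniformity or compactness argument.
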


\begin{proof}
  Since $\HH$ is finite dimensional, we know that $$\norm{A} = \sup_{\ket{\psi}\neq 0} \norm{A\ket{\psi}} / \norm{\ket{\psi} } $$ exists. 
  Consider increasing sequence $\{ A_n/\norm{A} \}$, we have: $$0 \sqsubseteq A_n/\norm{A} \sqsubseteq A/\norm{A} \sqsubseteq I$$ for all $n$. Then $\{ A_n/\norm{A} \}$ is sequence of quantum predicates. By Lemma 4.1.3 in \cite{Ying16} we know that $$\bigsqcup_{n} A_n/\norm{A}$$ exists. Then $$A' = \norm{A}\bigsqcup_{n} A_n/\norm{A} = \bigsqcup_{n} A_n$$ exists too.
\end{proof}

\begin{proof}[Proof of Lemma \ref{po_maxp}]
  Note that $P_1 + P_2$ is the projector onto the direct sum space $$X_1 \oplus X_2 = \{ \alpha \ket{\psi} + \beta \ket{\varphi}: \ket{\psi} \in X_1, \ket{\varphi} \in X_2, \alpha,\beta\in\CC  \}.$$ We are going to show that for every $j$,  
  \begin{equation*}
    \norm{(P_1 + P_2) A_j (P_1 + P_2)} \leq c (\norm{P_1 A_j P_1} + \norm{P_2 A_j P_2})
  \end{equation*}
  where $c$ only depends on $P_1, P_2$ and does not depend on $j$. 
  
  Let $\{ \ket{\psi_i}  \}$ be an orthonormal basis of $X_1$ and $\{ \ket{\varphi_i}  \}$ an orthonormal basis of $X_2$. 
  Let $\{ \ket{i}\}$ be an orthonormal basis of $X_1 \oplus X_2$. Furthermore, let $n_1 = dim(X_1), n_2 = dim(X_2)$ and $n_3 = dim(X_1 \oplus X_2)$.
By Schmidt orthogonalization there exists matrix $V$ such that
  \begin{equation*}
    \begin{pmatrix}
      \ket{1} \\
      \vdots \\
      \ket{n_3} 
    \end{pmatrix}
    = V 
    \begin{pmatrix}
      \ket{\psi_1} \\
      \vdots \\
      \ket{\psi_{n_1}} \\
      \ket{\varphi_{1}} \\
      \vdots \\
      \ket{\varphi_{n_2}}.
    \end{pmatrix}
  \end{equation*}
  Let $$a = \max \{ \abs{V_{i, j}}: i \leq n_3, j \leq n_1 + n_2 \},$$ and let $\ket{\phi}$ be an arbitrary unit vector in $X_1 \oplus X_2$. Then we can obtain: $$\ket{\phi} = \sum_i \alpha_i \ket{\psi_i} + \sum_i \beta_i \ket{\varphi_i}$$ from $V$, where $\abs{\alpha_i} < a$ and $\abs{\beta_i} < a$ for all $i$.

  Since $\HH$ is finite dimensional, it holds that $$\norm{(P_1 + P_2) A_j (P_1 + P_2)} = \abs{\bra{\phi}(P_1 + P_2) A_j (P_1 + P_2) \ket{\phi}}  $$  for some $\ket{\phi} \in M_1 \oplus M_2$ and $\norm{\ket{\phi} } = 1$. Let $\abs{M_1} = n_1$ and $\abs{M_2} = n_2$. Assume that $$\ket{\phi} = \sum_i \alpha_i \ket{\psi_i} + \sum_i \beta_i \ket{\varphi_i}$$ is obtained from $V$, and denote $(P_1 + P_2) A_j (P_1 + P_2)$ by $A_j'$. Then we have:
  \begin{align*}
    \norm{A_j'} 
    & = \abs{\bra{\phi}A_j' \ket{\phi}}  \\
    & = \Bigg\vert \sum_{i,k=0}^{n_1} \alpha_i^*\alpha_k \bra{\psi_i} A_j' \ket{\psi_k}  
 + \sum_{i,k=0}^{n_2} \beta_i^* \beta_k \bra{\varphi_i} A_j' \ket{\varphi_k} \\ 
    & + \sum_{i=0}^{n_1} \sum_{k=0}^{n_2} \alpha_i^* \beta_k \bra{\psi_i} A_j' \ket{\varphi_k}  
 + \sum_{i=0}^{n_1} \sum_{k=0}^{n_2} \beta_i^* \alpha_k \bra{\varphi_i} A_j' \ket{\psi_k} \Bigg\vert \\
 & \leq a^2 \Bigg(\sum_{i,k=0}^{n_1} \abs{ \bra{\psi_i} A_j' \ket{\psi_k}  }
 + \sum_{i,k=0}^{n_2} \abs{ \bra{\varphi_i} A_j' \ket{\varphi_k}  } \\
 & + \sum_{i=0}^{n_1} \sum_{k=0}^{n_2} \abs{ \bra{\psi_i} A_j' \ket{\varphi_k}  }
 + \sum_{i=0}^{n_1} \sum_{k=0}^{n_2} \abs{ \bra{\varphi_i} A_j' \ket{\psi_k}  }\Bigg).
  \end{align*}

  By Cauchy-Schwarz inequality we can obtain that for a positive operator $A$
  \begin{equation*}
    \abs{\bra{\varphi} A \ket{\psi} } 
    \leq \sqrt{\bra{\varphi}A\ket{\varphi} \bra{\psi} A \ket{\psi} }
    \leq (\bra{\varphi}A\ket{\varphi} + \bra{\psi} A \ket{\psi} ) / 2.
  \end{equation*}
  Furthermore, we have: $$\bra{\psi_i} A_j' \ket{\psi_i}  = \bra{\psi_i} P_1 A_j P_1 \ket{\psi_i}, \qquad \bra{\varphi_i} A_j' \ket{\varphi_i} = \bra{\varphi_i} P_2 A_j P_2 \ket{\varphi_i} .$$ Therefore, it holds that
  \begin{align*}
    \norm{A_j'} 
    & \leq a^2 (n_1^2 + n_2^2 + 2n_1n_2) \\
    & \cdot \left( \max_i\{ \bra{\psi_i} P_1 A_j P_1 \ket{\psi_i} \} 
    + \max_i\{ \bra{\varphi_i} P_2 A_j P_2 \ket{\varphi_i}  \} \right) \\
    & \leq c \left( \norm{P_1 A_j P_1} + \norm{P_2 A_j P_2} \right) .
  \end{align*}

  By the assumption, let $$A^{(1)} = \bigsqcup_j P_1 A_j P_1,\quad A^{(2)} = \bigsqcup_j P_2 A_j P_2.$$  Then we have: 
  \begin{align*}
     (P_1 + P_2) A_j (P_1 + P_2) 
    \sqsubseteq & c(\norm{P_1 A_j P_1} + \norm{P_2 A_j P_2}) I \\
    \sqsubseteq & c(\norm{A^{(1)}} + \norm{A^{(2)}}) I.
  \end{align*}
Finally, by Lemma \ref{po_bound}, we see that $\bigsqcup_j (P_1 + P_2) A_j (P_1 + P_2)$ exists. 
\end{proof}

\subsection{Proof of Lemma \ref{lemm_real_is_enough}}\label{lem_zhou}
 
\begin{proof}
  We have shown in Section \ref{sub:quantum_random_walk} that $Q_n$ 
  and $Q_n'$ are the unique solutions of $$I + E X E^\dagger = X$$ and 
  $$I + E' X' {E'}^\dagger = X',$$ respectively. To obtain $$Q_n = P^\dagger Q_n' P,$$ it is
  sufficient to show that $P^\dagger Q_n' P$ is a solution of 
  $I + E X E^\dagger = X$, which is equivalent to 
  \begin{equation*}
    I + E P^\dagger Q_n' P E^\dagger 
    = P^\dagger Q_n' P 
    = I + P^\dagger E' Q_n' {E'}^\dagger P.
  \end{equation*}
  This can be obtained from $E^\dagger = P^\dagger {E'}^\dagger P$. Using the
  definitions of $E$, $E'$, and $P$, that equation is equivalent to
  \begin{align*}
    & \begin{pmatrix}
      x e^{-i(\beta+\delta)} S_L M_1 & y e^{-i(\beta-\delta)} S_L M_1 \\
      y e^{i(\beta-\delta)} S_L^\dagger M_1 & -x e^{i(\beta+\delta)} S_L^\dagger M_1 \\
    \end{pmatrix} \\
    = 
    & \begin{pmatrix}
      x P_L^\dagger S_L M_1 P_L & y P_L^\dagger S_L M_1 P_R \\
      y P_R^\dagger S_L^\dagger M_1 P_L & -x P_R^\dagger S_L^\dagger M_1 P_R \\
    \end{pmatrix}.
  \end{align*}

  It can be shown by a simple calculation that the corresponding sub-matrices are equal. For example, we have: 
  \begin{align*}
    & P_L^\dagger S_L M_1 P_L \\
    = & (\sum_{k = 0}^{n} e^{i[(\beta + \delta)k + 2 \delta]} \ket{k}\bra{k})
    (\sum_{k = 0}^{n} \ket{k \ominus 1} \bra{k})
     (\sum_{k = 1}^{n - 1} \ket{k} \bra{k})
     (\sum_{k = 0}^{n} e^{-i[(\beta + \delta)k + 2 \delta]} \ket{k}\bra{k}) \\
    = & e^{-i(\beta + \delta)} \sum_{k = 1}^{n - 1} \ket{k \ominus 1}\bra{k} \\
    = & e^{-i(\beta + \delta)}S_LM_1 ,
  \end{align*}
  which implies that $$x e^{-i(\beta + \delta)} S_L M_1 = x P_L^\dagger S_L M_1 P_L.$$ The equalities between other sub-matrices can be proved similarly.
\end{proof}

\section{Calculation of Examples}
\label{sec:calculation_of_examples}

\subsection{Example \ref{ex-rw-EXT}}
\label{sub:example_ex_rw_EXT}
First, we show the following lemma:
\begin{lemma}
  \label{lemm_split_prob}
  Let $S$ be a quantum \textbf{while}-program on $\HH$ such that $\tr(\sem{S}(\rho)) = \rho$ for all $\rho \in \DD(\HH)$, then 
  \begin{equation*}
     \sum_{k = 0}^n \tr (\EE_0 \circ (\sem{S} \circ \EE_1)^k(\rho)) + \tr((\sem{S}\circ \EE_1)^{n+1}(\rho)) = \tr \rho
  \end{equation*}
\end{lemma}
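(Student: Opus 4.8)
The plan is to prove the identity by induction on $n$, using only two elementary facts. The first is the completeness of the loop-guard measurement $M = \{M_0, M_1\}$: since $M_0^\dagger M_0 + M_1^\dagger M_1 = I$, for any partial density operator $\sigma$ we have
$$\tr(\EE_0(\sigma)) + \tr(\EE_1(\sigma)) = \tr\big((M_0^\dagger M_0 + M_1^\dagger M_1)\sigma\big) = \tr(\sigma).$$
The second is the stated hypothesis, which I read as almost-sure termination of $S$ on every input, i.e. $\tr(\sem{S}(\sigma)) = \tr(\sigma)$ for all $\sigma \in \DD(\HH)$; in particular $\tr(\sem{S}(\EE_1(\sigma))) = \tr(\EE_1(\sigma))$.

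First I would treat the base case $n = 0$. The left-hand side is $\tr(\EE_0(\rho)) + \tr(\sem{S}\circ\EE_1(\rho))$. Applying the termination hypothesis to the second summand rewrites it as $\tr(\EE_1(\rho))$, and then measurement completeness gives $\tr(\EE_0(\rho)) + \tr(\EE_1(\rho)) = \tr(\rho)$, as required.

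For the inductive step I would denote the left-hand side by $T_n$ and show that $T_{n+1} = T_n$, so its value is constant in $n$. Writing $\sigma = (\sem{S}\circ\EE_1)^{n+1}(\rho)$ and noting that $(\sem{S}\circ\EE_1)^{n+2}(\rho) = \sem{S}(\EE_1(\sigma))$, the difference $T_{n+1} - T_n$ equals
$$\tr(\EE_0(\sigma)) + \tr(\sem{S}\circ\EE_1(\sigma)) - \tr(\sigma).$$
Using termination to replace $\tr(\sem{S}\circ\EE_1(\sigma))$ by $\tr(\EE_1(\sigma))$, and then measurement completeness, this collapses to $\tr(\sigma) - \tr(\sigma) = 0$. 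Combined with the base case, this yields $T_n = \tr(\rho)$ for every $n$.

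There is no genuine obstacle here; the only care needed is bookkeeping. One must ensure that the termination hypothesis is always applied to a legitimate partial density operator, which holds because $\EE_0$, $\EE_1$, and $\sem{S}$ each map $\DD(\HH)$ into itself, and one must identify the telescoping structure correctly so that every interior term cancels. The argument can equally be phrased as a single telescoping sum rather than an induction; the inductive step above already exhibits the entire cancellation mechanism.
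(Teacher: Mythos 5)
Your proof is correct and is essentially the same argument as the paper's: the paper proves the identity by repeatedly replacing $\tr((\sem{S}\circ\EE_1)^{k+1}(\rho))$ with $\tr(\EE_1\circ(\sem{S}\circ\EE_1)^{k}(\rho))$ (almost-sure termination) and then merging it with the $\EE_0$ term (measurement completeness), peeling the sum down to $\tr\rho$. Your formal induction showing $T_{n+1}=T_n$ is exactly this telescoping step written once and iterated, so the two proofs differ only in presentation.
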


\begin{proof}
  We have:
  \begin{align*}
  & \sum_{k = 0}^n \tr (\EE_0 \circ (\sem{S} \circ \EE_1)^k(\rho)) + \tr((\sem{S}\circ \EE_1)^{n+1}(\rho)) \\
    = & \sum_{k = 0}^n \tr (\EE_0 \circ (\sem{S} \circ \EE_1)^k(\rho)) + \tr(\EE_1 \circ (\sem{S}\circ \EE_1)^{n}(\rho)) \\
    = & \sum_{k = 0}^{n-1} \tr (\EE_0 \circ (\sem{S} \circ \EE_1)^k(\rho)) + \tr((\sem{S}\circ \EE_1)^{n}(\rho)) \\
    = & \dots \\
    = & \tr \rho.
  \end{align*}
\end{proof}

Let $\rho = \ket{L, 1}\bra{L, 1}$. According to Theorem 8 in \cite{Ambainis2001}, we have: 
\begin{equation*}
  \sum_{k = 0}^{\infty} \tr(\EE_0 \circ (\sem{S}\circ \EE_1)^k (\rho)) = \frac{2}{\pi}.
\end{equation*}
Then with Lemma \ref{lemm_ERT_while} and Lemma \ref{lemm_split_prob}, we obtain:
\begin{align*}
  & \phantom{{ }={ }}\ERT[Q_\mathit{qw}](\rho) \\
  & \geq \sum_{k = 0}^{\infty} \tr \left( (\sem{S}\circ \EE_1)^{k}(\rho) \right) \\
  & = \tr(\rho) + \sum_{k = 1}^{\infty} (\tr \rho - \sum_{j = 0}^{k-1}\tr(\EE_0 \circ (\sem{S}\circ \EE_1)^j (\rho))) \\
  & \geq 1 + \sum_{k = 1}^{\infty}(1 - \frac{\pi}{2}) \\
  & = \infty.
\end{align*}

\subsection{Example \ref{ex_2_3_EXT}}
\label{sub:example_ref}
First, we calculate $\sem{Q_1}(\rho_0)$. Let $$S_1 \equiv q := H[q]; p := D[p].$$ Then we have:
\begin{equation*}
  \sem{Q_1}(\rho_0) = \sum_{k = 0}^\infty \EE_{M_0} \circ (\sem{S_1} \circ \EE_{M_1})^k (\rho_0)
\end{equation*}
where $$\EE_{M_0}(\rho) = M_0 \rho M_0^\dagger,\quad \EE_{M_1}(\rho) = M_1 \rho M_1^\dagger.$$ Furthermore, we have: 
\begin{equation*}
  \sem{S_1} \circ \EE_{M_1}(\rho_0) = \ket{-}_q \bra{-} \otimes \ket{2}_p \bra{2}.
\end{equation*}
By induction, we obtain:
\begin{equation*}
  (\sem{S_1} \circ \EE_{M_1})^k(\ket{-}_q \bra{-} \otimes \ket{t}_p \bra{t}) = \frac{1}{2^k} \ket{-}_q \bra{-} \otimes \ket{2^k t}_p \bra{2^k t}
\end{equation*}
for $t > 0$.
Then it follws that
\begin{equation*}
  \sem{Q_1}(\rho_0) = \sum_{k=1}^{\infty} \frac{1}{2^k} \ket{L}_q \bra{L} \otimes \ket{2^k}_p \bra{2^k}.
\end{equation*}

Let $S_2 \equiv p := T_L[p]$. For $Q_2$, we have:
\begin{equation*}
  \ERT[\while^{[k+1]}[N, S_2]](\ket{L}_q \bra{L} \otimes \ket{0}_p \bra{0}) = 1
\end{equation*}
and
\begin{align*}
  &\ERT[\while^{[k+1]}[N, S_2]](\ket{L}_q \bra{L} \otimes \ket{t+1}_p \bra{t+1}) \\
  = &2 + \ERT[\while^{[k]}[N, S_2]](\ket{L}_q \bra{L} \otimes \ket{t}_p \bra{t})
\end{align*}
for $t \geq -1$ and $k \geq 0$. Consequently, we obtain:
\begin{equation*}
  \ERT[\while^{[k]}[N, S_2]](\ket{L}_q \bra{L} \otimes \ket{t}_p \bra{t}) = 2t + 1 
\end{equation*}
for $-1 \leq t < k - 1$.
Furthermore, we have:
\begin{equation*}
  \ERT[Q_2](\ket{L}_q \bra{L} \otimes \ket{t}_p \bra{t}) = 2t + 1
\end{equation*}
for $t \geq -1$ by definition. Finally, we can obtain: 
\begin{equation*}
  \ERT[Q_1; Q_2](\rho_0) = 7 + \sum_{k = 1}^{\infty} \frac{1}{2^k} (2^{k+1} + 1) = \infty
\end{equation*}
by Lemma \ref{linearity}.

\end{appendix}

\end{document}